\newcommand{\comment}[1]{}
\lstdefinelanguage{while}{morekeywords={while,if,do,to,bail,then,else,put,
  not, guard}}
\lstdefinelanguage{javascript}{morekeywords={for,guard,if,else,bail,function, var,while, goto}}
\lstdefinelanguage{JavaScript}{
  keywords={typeof, new, true, false, catch, function, return, null,
    catch, switch, var, if, in, while, do, for, goto, else, case, break, guard,
    bail, to},
  identifierstyle=\color{black},
  sensitive=false,
  comment=[l]{//},
  morecomment=[s]{/*}{*/},
  morestring=[b]',
  morestring=[b]"
}
\DeclareMathAlphabet{\mathpzc}{OT1}{pzc}{m}{it}
\newcommand{\bott}{{\bot_{\!\mathrm{T}}}}
\newcommand{\topt}{{\top_{\!\mathrm{T}}}}
\newcommand{\guard}{{\cdo{guard}}}
\newcommand{\undeff}{{\textit{undef}}}
\newcommand{\trh}{{\textrm{\rm tr}_{hp}}}
\newcommand{\trt}{{\textrm{\rm tr}}_{hp}^{\mathit{in}}}
\newcommand{\trf}{{\textrm{\rm tr}}_{hp}^{\mathit{out}}}
\newcommand{\rtr}{{\textrm{\rm rtr}}_{hp}}
\newcommand{\bbl}{\mathbbm{l}}
\newcommand{\bbh}{\mathbbm{h}}
\newcommand{\bbj}{\mathbbm{j}}
\newcommand{\elb}{{\textbf{\em l}}}
\newcommand{\cd}[1]{\text{\lstinline$#1$}}
\newcommand{\cdo}[1]{\mathit #1}
\newcommand{\cC}{\mathcal{C}}
\newcommand{\cX}{\mathcal{X}}
\newcommand{\ab}[2]{\langle #1, #2 \rangle}
\newcommand{\set}[2]{\lbrace #1 \mid #2 \rbrace}
\newcommand{\bsem}{\mathbf{B}}
\newcommand{\asem}{\mathbf{A}}
\newcommand{\ssem}{\mathbf{S}}
\newcommand{\esem}{\mathbf{E}}
\newcommand{\esemt}{\mathbf{E}^{\mathit{t}}}
\newcommand{\tsem}{\mathbf{T}}
\newcounter{commandposition}
\newcommand{\GP}{{\mathit{GP}}}
\newcommand{\vars}{{\mathit{vars}}}
\newcommand{\stateC}{{\mathcal{C}^s}}
\newcommand{\traceC}{{\mathcal{C}^t}}
\newcommand{\traceD}{{\mathcal{D}^t}}
\newcommand{\comC}{{\mathsf{C}}}
\def\ok#1{\mbox{\raisebox{0ex}[1ex][1ex]{$#1$}}}
\def \tuple#1{\langle #1 \rangle}
\def \suff#1{#1^{^{\!\shortrightarrow}}}
\newcommand{\Lra}{\Leftrightarrow}
\newcommand{\Ra}{\Rightarrow}
\newcommand{\La}{\Leftarrow}
\newcommand{\ra}{\rightarrow}
\newcommand{\kto}{k^{^{\!\shortrightarrow}}}
\newcommand{\rat}{\stackrel{\tau}{\rightarrow}}
\newcommand{\rad}{\stackrel{\delta}{\rightarrow}}
\newcommand{\raa}{\stackrel{a}{\rightarrow}}
\newcommand{\Raac}{\stackrel{\hat{a}}{\Rightarrow}}
\newcommand{\Php}{P_{hp}}
\DeclareMathOperator{\Type}{\mathrm{Type}}
\DeclareMathOperator{\Exp}{Exp}
\DeclareMathOperator{\FV}{FV}
\DeclareMathOperator{\Act}{\mathit{Act}}
\DeclareMathOperator{\Stm}{Stm}
\DeclareMathOperator{\Cmd}{Cmd}
\DeclareMathOperator{\BExp}{BExp}
\DeclareMathOperator{\State}{State}
\DeclareMathOperator{\TState}{tState}
\DeclareMathOperator{\Store}{Store}
\DeclareMathOperator{\Storeav}{{Store^\sharp_{\mathit{value}}}}
\DeclareMathOperator{\Storet}{Store^{\mathrm{t}}}
\DeclareMathOperator{\Trace}{Trace}
\DeclareMathOperator{\Valueu}{Value_{\!\,\mathrm{u}}}
\DeclareMathOperator{\Valueush}{Value^\sharp}
\DeclareMathOperator{\Value}{Value}
\DeclareMathOperator{\Var}{Var}
\DeclareMathOperator{\CP}{CP}
\DeclareMathOperator{\CPst}{CP_{\!\mathit{st}}}
\DeclareMathOperator{\Program}{Program}
\newcommand{\ud}{\triangleq}
\DeclareMathOperator{\full}{{\mathit{full}}}
\DeclareMathOperator{\tdo}{\mathit{tdo}}
\DeclareMathOperator{\tod}{\mathit{to}}
\DeclareMathOperator{\out}{\mathit{out}}
\DeclareMathOperator{\sch}{\mathit{sc}}
\DeclareMathOperator{\osch}{\mathit{osc}}
\DeclareMathOperator{\sloop}{\mathit{loop}}
\DeclareMathOperator{\hot}{\mathit{hot}}
\DeclareMathOperator{\outerhot}{\mathit{outerhot}}
\DeclareMathOperator{\scount}{\mathit{count}}
\DeclareMathOperator{\extr}{\mathit{extr}}
\DeclareMathOperator{\stitch}{\mathit{stitch}}
\DeclareMathOperator{\hotcut}{\mathit{hotcut}}
\def\grasse#1{{\llbracket #1 \rrbracket}}
\newcommand{\avalues}{{\alpha_{\mathit{value}}^{\sqsubseteq}}}
\newcommand{\gvalues}{{\gamma_{\mathit{value}}^{\sqsubseteq}}}
\newcommand{\avalue}{\alpha_{\mathit{value}}}
\newcommand{\gvalue}{\gamma_{\mathit{value}}}
\newcommand{\astore}{\alpha_{\mathit{store}}}
\newcommand{\gstore}{\gamma_{\mathit{store}}}
\newcommand{\ahotn}{\alpha_{\mathit{hot}}^N}
\newcommand{\atype}{\alpha_{\mathit{type}}}
\newcommand{\gtype}{\gamma_{\mathit{type}}}
\newcommand{\type}{\mathit{type}}
\newcommand{\leqt}{\leq_{\mathrm{t}}}
\newcommand{\st}{\mathit{st}}
\DeclareMathOperator{\Int}{Int}
\DeclareMathOperator{\Bool}{Bool}
\DeclareMathOperator{\Boolu}{Bool_{\mathrm{u}}}
\DeclareMathOperator{\String}{String}
\DeclareMathOperator{\Char}{Char}
\DeclareMathOperator{\Undef}{Undef}
\DeclareMathOperator{\Types}{\mathrm{Types}}
\DeclareMathOperator{\ts}{{\text{\rm ts}}}
\DeclareMathOperator{\cf}{{\text{\rm cf}}}
\begin{document}

\markboth{S.~Dissegna et al.}{An Abstract Interpretation-based Model of Tracing Just-In-Time Compilation}

\title{An Abstract Interpretation-based Model of Tracing Just-In-Time Compilation}
\author{STEFANO DISSEGNA
           \affil{University of Padova}
FRANCESCO LOGOZZO
           \affil{Facebook Inc.}
FRANCESCO RANZATO
           \affil{University of Padova}
}           

\begin{abstract}
Tracing just-in-time compilation is a  popular compilation technique 
for the efficient implementation of dynamic languages, 
which is commonly used for JavaScript, Python and PHP.
It relies on two key ideas.
First, it monitors program execution in order to detect so-called hot paths, i.e., the most frequently executed program paths.
Then, hot paths are optimized by exploiting some information on program stores which is available and therefore gathered at runtime. 
The result is a residual program where the optimized hot paths are guarded by  sufficient conditions ensuring 
some form of equivalence with the original program.
The  residual program is persistently mutated during its execution, e.g., to  add new optimized hot paths or to merge existing paths. 
Tracing compilation is thus fundamentally different from traditional static compilation.
Nevertheless, despite the practical 
success of tracing compilation, very little is known about its theoretical foundations.
We provide a formal model of tracing compilation of programs using abstract interpretation.
The monitoring  phase (viz., hot path detection) corresponds to an abstraction of the trace semantics 
of the program that captures the most frequent occurrences of sequences 
of program points together with 
an abstraction of their corresponding stores, e.g., a type environment.
The optimization  phase (viz., residual program generation) 
corresponds to a transform of the original program that preserves its trace 
semantics up to a given observation as modeled by some abstraction.
We provide a generic framework to express dynamic optimizations along hot paths and to prove them correct. 
We instantiate it to prove the correctness of dynamic type specialization and constant variable folding.
We show that our framework is more general  than the 
model of tracing compilation introduced 
by Guo and Palsberg [2011] which is based on operational 
bisimulations.
In our model we can naturally express hot path reentrance and 
common optimizations like dead-store elimination, which are either excluded or 
unsound in Guo and Palsberg's framework.
\end{abstract}

\category{D.2.4}{Software Engineering}{Software/Program Verification -- correctness proofs, formal methods}
\category{D.3.4}{Programming Languages}{Processors -- compilers, optimization}
\category{F.3.2}{Logics and Meanings of Programs}{Semantics of Programming Languages -- program analysis}

\keywords{Tracing JIT compilation, abstract interpretation, trace semantics}

\begin{bottomstuff}
The work of Francesco Logozzo was carried out while being affiliated with Microsoft Research, Redmond, WA, USA.
The work of Francesco Ranzato was partially supported 
by Microsoft Research Software Engineering Innovation Foundation~2013 Award (SEIF~2013) and 
by the University of
Padova under the PRAT projects 
BECOM and ANCORE. 

Author's addresses: S.~Dissegna {and} F.~Ranzato, Dipartimento di Matematica, University
of Padova, Padova, Italy; 
F.~Logozzo, Facebook Inc., Seattle, WA, USA.
\end{bottomstuff}

\maketitle

\section{Introduction}
Efficient traditional static compilation of popular dynamic languages like JavaScript, Python and 
PHP is very hard if not impossible.
In particular, these languages present so many dynamic features which make all traditional static analyses used for program optimization very imprecise.
Therefore, practical implementations of dynamic languages should rely on dynamic information in order to produce an optimized version of the program.
Tracing just-in-time (JIT) compilation (TJITC) \cite{bala2000,bauman2015,bebenita2010spur,Bohm2011,bolz2009,bolz2011,gal-vee2006,gal2009,pall,schilling2013} has emerged as a valuable implementation and optimization technique for dynamic languages (and not only, e.g.\ Java \cite{Haubl2011,Haubl2014,Inoue2011}).
For instance, the Facebook HipHop virtual machine for PHP and the V8 JavaScript engine of Google Chrome use some form of tracing compilation~\cite{Adams2014,hiphop,chrome}.
The Mozilla Firefox JavaScript engine used to have a tracing engine, called TraceMonkey, which has been 
later substituted by whole-method just-in-time compilation engines 
(initially J\"{a}gerMonkey and then IonMonkey) \cite{tracemonkey,ionmonkey}.  

\subsubsection*{\textbf{The Problem}}
Tracing JIT compilers leverage runtime profiling of programs to detect and record often executed paths, called hot paths, and then they optimize and compile only these paths at runtime. 
A path is a linear sequence (i.e., no loops or join points are allowed) of instructions through the program. 
Profiling may also collect information about the values that the program 
variables may assume during the execution of that path, which is then used to specialize/optimize the code of the hot path. 
Of course,  this information is not guaranteed to hold for all the subsequent executions of the hot path. 
Since optimizations rely on that information, the hot path is augmented with guards that check the profiled conditions, such as, for example, 
variable types and constant variables. 
When a guard fails, execution jumps back to the old,  non-optimized code.
The main hypotheses of tracing compilers, confirmed by the practice, are: (i)~loop bodies are the 
most interesting code to optimize, so they only consider paths inside program 
loops; and (ii)~optimizing straight-line code is easier than a whole-method 
analysis (involving loops, goto, \emph{etc.}).

Hence, tracing JIT compilers look quite different than traditional compilers. 
These differences raise some natural questions  on trace compilation: (i)~what 
is a viable formal model, which is generic yet realistic enough to capture the behavior of real optimizers? (ii)~which optimizations are sound? (iii)~how can one  prove their soundness?
In this paper we answer these questions.

Our formal model is based on program trace semantics~\cite{cousot2002} and abstract interpretation \cite{CC77,cousot2002systematic}.
Hot path detection is modeled just as an abstraction of the trace semantics of the program, which only retains: 
(i)~the sequences of program points which are repeated more than some 
threshold; (ii)~an abstraction of the possible program stores, 
e.g., the type of the variables instead of their concrete values.
As a consequence, a hot path does not contain loops nor join points.
Furthermore, in the hot path, all the correctness conditions (i.e., guards) are explicit, for instance before performing  integer addition, we should check that the operands are  integers. 
If the guard condition is not satisfied then the execution leaves the hot path, reverting to the non-optimized code.
Guards are essentially elements of some abstract domain, which is then 
left as a parameter in our model. 
The hot path is then optimized using standard compilation techniques---we only require the optimization to be sound.

We define the correctness of the residual (or extracted) program in terms of an abstraction of its trace semantics: the residual program is correct  if it is indistinguishable, up to some abstraction
of its trace semantics, from the original program.
Examples of abstractions are the program store at the exit of a method, or the stores at loop entry and loop exit points.

\subsubsection*{\textbf{Main Contributions}}
This paper puts forward a formal model of TJITC whose key features are as follows: 

\begin{enumerate}
\item[--] We provide the first model of tracing compilation based on
  abstract interpretation of trace semantics of programs.
\item[--] We provide a more general and realistic framework than the model of TJITC by
\citeN{palsberg}  based on program bisimulations: we employ a less restrictive correctness criterion that enables  the correctness proof of
  practically implemented optimizations; hot paths can be annotated with
  runtime information on the stores, notably type information; 
  optimized hot loops can be re-entered.
  \item[--] We formalize and prove the correctness of type specialization of hot paths.
\end{enumerate}

Our model focusses on source-to-source
program transformations and optimizations of a low level imperative language with untyped 
global
variables, which may play the role of intermediate language of some
virtual machine. Our starting point is that 
program optimizations can be seen as transformations that lose some information
on the original program,  
so that optimizations can be viewed as
approximations and in turn can be formalized as abstract interpretations. 
More precisely, we rely on the insight by \citeN{cousot2002systematic} that 
a program source can be seen as an abstraction of its
trace semantics, i.e.\ the set of all possible execution sequences,  
so that a source-to-source optimization can be viewed as 
an abstraction of a transform of the program trace semantics.
In our model, soundness of program optimizations is defined as
program equivalence w.r.t.\ an observational abstract interpretation of the program trace
semantics. Here, 
an observational abstraction induces a correctness criterion by
describing what is observable about
program executions, so that program equivalence means that two programs are
indistinguishable by looking only at their observable behaviors. 

A crucial part of tracing compilation is the selection of the hot path(s) to
optimize. This choice is made at runtime based on  program executions, so it
can be seen once again as an abstraction of  trace semantics. 
Here, a simple trace abstraction selects cyclic
instruction sequences, i.e.
loop paths, that appear at least $N$
times within a single execution trace. These instruction sequences are
recorded together with some property of the values assumed by program variables at that point,  
which is represented as an abstract store belonging to a suitable 
store abstraction, which in general depends on the successive
optimizations to perform. 

A program optimization can be seen as an abstraction of a semantic
transformation of program execution traces, as described by
\citeN{cousot2002systematic}. The advantage of this approach is that
optimization properties, such as their correctness, are easier to prove
at a semantic level. The optimization itself can be defined on the whole
program or, as in the case of real tracing JIT compilers, can be
restricted to the hot path. This latter restriction is achieved by transforming the
original program so that the hot path is extracted, i.e.\ made explicit: 
the hot path is added to the program as a
path with no join points that jumps back to the original code when
execution leaves it. A guard is placed before each command in this hot path that
checks if the necessary conditions, as selected by the store
abstraction, are satisfied.
A program optimization can then be confined to the hot path only, making it
linear, by ignoring the parts of the program outside it. The guards added to the hot path
allows us to retain precision.

We apply our TJITC model to type specialization.  
Type specialization is definitely the key optimization for
dynamic languages such as Javascript \cite{gal2009}, as they make available
generic operations whose execution depends on the type of runtime 
values of their operands. Moreover, as a further application of our model, we consider 
the constant variable folding optimization along hot paths, which relies on the standard
constant propagation abstract domain \cite{WZ91}.

\subsubsection*{\textbf{Related Work}}
A formal model for tracing JIT compilation has been put forward  by 
\citeN{palsberg} at POPL symposium. It is based on operational bisimulation \cite{milner95} 
to describe the
equivalence between source and optimized programs. We show 
how this model can be
expressed within our framework through the following steps: Guo and
Palsberg's language is compiled into ours; we then exhibit an observational
abstraction which is equivalent to Guo and Palsberg's correctness criterion;
finally,  after some minor changes that address a few differences in path
selection, the transformations performed on the source program turn out to be the
same.  
Our framework overcomes some significant limitations in Guo and Palsberg's model.
The bisimulation equivalence model used in~\cite{palsberg}  implies that the optimized program has to match every change to the store made by the original program, whereas in practice we only need this   match to hold in certain program points and for some  variables, such as in output instructions. 
This limits the number of real optimizations that can be modeled in this framework. 
For instance,  dead store elimination is proven unsound in \cite{palsberg}, while it is implemented in actual tracing compilers \cite[Section~5.1]{gal2009}.
Furthermore,  their formalization fails to model some important features of actual TJITC implementation: (i)~traces are mere linear paths of instructions, i.e., they cannot be annotated with store properties;
(ii)~hot path selection is completely non-deterministic, since they  do not model a selection criterion; and, (iii)~once execution leaves an
optimized hot path the program will not be able to re-enter it. 

It is also worth citing that  abstract interpretation of program trace semantics roots at the foundational work by Cousot \citeyear{Cousot1997,cousot2002} and has been widely used as a technique for defining a range of static program analyses \cite{Barbuti1999,Colby1996,handjieva1998,Logozzo2009,Rival2007,Schmidt1998,Spoto2003}. 
Also, \citeN{Rival04} describes various program optimizations as the trace abstractions they preserve.
In the Cousot and Cousot terminology~\cite{cousot2002systematic}, Rival's approach corresponds to offline transformations whereas tracing compilation is an online transformation.

\subsubsection*{\textbf{Structure}}
The rest of the paper is organized as follows. Sections~\ref{language}
and~\ref{abs-sec} contain some necessary background: the language considered in the paper and 
its operational trace semantics are defined in Section~\ref{language}, while Section~\ref{abs-sec} 
recalls some basic notions of abstract interpretation, in particular for defining abstract domains of program stores. 
Hot paths are formally defined in Section~\ref{hp-sect} as a suitable abstract interpretation of program traces, while
Section~\ref{trace-ext-sec} defines the program transform for extracting a given hot path. The correctness of
the hot path extraction transform is defined and proved correct in Section~\ref{obs}, which also introduces in Subsection~\ref{chpo-sec}
program optimizations along hot paths together with a methodology for proving their correctness. Section~\ref{type_specialization} 
applies our model of hot path optimization to type specialization of untyped program commands, while Section~\ref{cf-sec} 
describes an application to constant variable folding along hot paths. Nested hot paths and the corresponding program
transform for their extraction are the subject of Section~\ref{mte-sec}. Section~\ref{GP-sec} provides a thorough formal
comparison of our model with \citeN{palsberg}'s framework for tracing compilation. Finally, Section~\ref{concl-sec} concludes,
also discussing some directions for future work.

This is an expanded and revised version of the POPL symposium article \cite{DLR14} including all the proofs.

\section{Language and Concrete Semantics}
\label{language}
\subsection{Notation}
Given a finite set $X$ of objects, we will use the following notation concerning sequences: 
$\epsilon$ is the empty sequence;
$X^+$ is the set of nonempty finite sequences of 
objects of $X$; $X^* \ud X^+ \cup \{\epsilon\}$; if $\sigma \in X^*$ then $|\sigma|$ denotes
the length of $\sigma$; indices of objects  in a nonempty sequence $\sigma\in X^+$ start from
$0$ and thus range in the interval $[0,|\sigma|)\ud [0,|\sigma|-1]$; if $\sigma\in X^+$
and $i\in [0,|\sigma|)$ then $\sigma_i\in X$ (or $\sigma(i)$) denotes the $i$-th object in $\sigma$;
if $\sigma\in X^*$ and $i,j\in [0,|\sigma|)$ then $\sigma_{[i,j]} \in X^*$ denotes the subsequence 
$\sigma_i \sigma_{i+1} \ldots \sigma_{j}$, which is therefore the empty sequence if $j< i$, while if
$k\in \mathbb{N}$ then $\sigma_{\kto}\in X^*$ denotes the
suffix $\sigma_k \sigma_{k+1}\ldots \sigma_{|\sigma|-1}$, which is the empty sequence when $k\geq |\sigma|$.  

If $f:X\ra Y$ is any function then its collecting version $f^c:\wp(X)\ra \wp(Y)$ is defined pointwise 
by $f^c(S)\ud \{f(x)\in Y~|~ x\in S\}$, and when clear from the context, 
by a slight abuse of notation, it is sometimes denoted by $f$ itself.  

\subsection{Syntax}
We 
consider a basic low level language with untyped global
variables, a kind of elementary dynamic language, which is defined through the notation used
in \cite{cousot2002systematic}. 
Program commands range in $\mathbb{C}$ and consist of 
a labeled action which specifies 
a next label ($\L$ is the undefined label, where the execution becomes stuck: it is used
for defining final commands). 
\begin{equation*}
  \begin{array}{rl}
  \text{Labels:~~} & L \in \mathbb{L} \quad\quad \L\not \in \mathbb{L} \\
  \text{Values:~~} & v\in \Value\\
\text{Variables:~~} & x \in \Var  \quad\quad  \\
 \text{Expressions:~~} & \Exp\ni E ::= v \mid x \mid E_{1} + E_{2}\\
 \text{Boolean Expressions:~~}   & \BExp\ni B ::= \cd{tt} \mid \cd{ff} \mid
    E_{1} \leq E_{2} \mid  \neg B \mid B_1 \wedge B_2 \\
\text{Actions:~~}    &  \mathbb{A}\ni A ::= x := E \mid B \mid \cd{skip}\\
\text{Commands:~~}    &  \mathbb{C}\ni C ::= L : A \rightarrow L' \quad(\text{with~}L'\in \mathbb{L}\cup\!\{\L\})
  \end{array}
\end{equation*}

\noindent
For any command $L : A \rightarrow L'$, we use the following notation: 
$$lbl(L : A \rightarrow L') \ud L,\quad act(L : A \rightarrow L') \ud A,\quad suc(L : A \rightarrow L')\ud L'.$$
Commands $L : B \rightarrow L'$ whose action is a Boolean expression are called
conditionals. 
A program $P\in  \wp(\mathbb{C})$ is a (possibly infinite, at least in theory) 
set of commands.  In order to be well-formed, if a program $P$ includes a 
conditional $C\equiv L : B \rightarrow L'$ then $P$ must also include 
a unique complement conditional  $L : \neg B  \rightarrow L''$,
which is denoted by $cmpl(C)$ or $C^c$, where $\neg \neg B$ is taken to be equal to 
$B$, so that $cmpl(cmpl(C))=C$. The set of well-formed programs is denoted by $\Program$.
In our examples, programs $P$ will be deterministic,  
i.e., for any $C_1,C_2\in P$ such that $lbl(C_1)=lbl(C_2)$:
(1)~if $act(C_1)\neq act(C_2)$ then $C_1 = cmpl(C_2)$; (2) if $act(C_1)= act(C_2)$ then $C_1=C_2$.
We say that two programs $P_1$ and $P_2$ are equal up to label renaming, denoted by $P_1\cong P_2$, 
when there exists a suitable
renaming for the labels of $P_1$ that makes $P_1$ equal to $P_2$.

\subsection{Transition Semantics}\label{sem-sec}
The language semantics relies 
on values ranging in $\Value$, possibly undefined values ranging in $\Valueu$, truth values in  $\Bool$, 
possibly undefined truth values ranging in $\Boolu$ and
type names ranging in $\Types$, which are defined as follows:
\[
\begin{array}{c}
\Value \ud \mathbb{Z}  \cup \Char^* \qquad \Valueu \ud \mathbb{Z}  \cup \Char^* \cup \{\undeff\}\\[5pt]
  \Bool \ud \{ \textit{true},\, \textit{false} \}\qquad \Boolu \ud \{ \textit{true},\, \textit{false},\undeff \}\\[5pt]
  \Types \ud \{\Int,  \String, \Undef, \topt, \bott\} 
\end{array}
\]
where $\Char$ is a nonempty finite set of characters and $\undeff$ is a distinct symbol. 
The mapping $\type:\Valueu \ra \Types$ 
provides the type of any possibly undefined value:
\[
\type(v) \ud 
\begin{cases}
\Int & \text{if } v\in \mathbb{Z}\\
\String & \text{if } v\in \Char^*\\
\Undef & \text{if } v=\undeff
\end{cases}
\] 
The type names $\bott$ and $\topt$ will be used in Section~\ref{type_specialization} as, respectively,  
top and bottom type, 
that is,  subtype and supertype of all types. 

\begin{figure*}
\begin{mdframed} 
\begin{equation*}
 \begin{aligned}
  &\esem: \Exp \ra \Store \ra \Valueu\\
  &\esem\grasse{v}\rho \ud v\quad \esem\grasse{x}\rho \ud \rho(x)\\
    &\esem\grasse{E_{1} + E_{2}}\rho \ud
     \begin{cases}
       \esem\grasse{E_1}\rho +_{\mathbb{Z}} \esem\grasse{E_2}\rho &
      \text{if } \type(\esem\grasse{E_i}\rho)=\Int\\
       \esem\grasse{E_1}\rho \cdot \esem\grasse{E_2}\rho   &
      \text{if }  \type(\esem\grasse{E_i}\rho)=\String \\
       \undeff & \text{otherwise} 
         \end{cases} \\[10pt]
    &\bsem: \BExp \ra \Store \ra \Boolu\\
    &\bsem\grasse{\cd{tt}}\rho \ud \textit{true}\quad \bsem\grasse{\cd{ff}}\rho \ud \textit{false}\\
    & \bsem\grasse{E_1 \leq E_2}\rho \ud
         \begin{cases}
       \esem\grasse{E_1}\rho \leq_{\mathbb{Z}} \esem\grasse{E_2}\rho &
      \text{if }  \type(\esem\grasse{E_i}\rho)=\Int\\
       \exists \sigma\in \String.\, \esem\grasse{E_2}\rho = (\esem\grasse{E_1}\rho)\!\cdot\! \sigma&
      \text{if }  \type(\esem\grasse{E_i}\rho)=\String\\
       \undeff &
      \text{otherwise }   
         \end{cases} \\
    &\bsem\grasse{\neg B}\rho \ud  \neg \bsem\grasse{B}\rho \quad
    \bsem\grasse{B_1 \wedge B_2}\rho \ud  \bsem\grasse{B_1}\rho \wedge \bsem\grasse{B_2}\rho\\[10pt]
    &\asem: \mathbb{A} \ra \Store \ra \Store\cup \,\{\bot\}\\
    & \asem\grasse{\cd{skip}}\rho \ud  \rho\\
     &\asem\grasse{x := E}\rho \ud  
     \begin{cases} \rho[x/\esem\grasse{E}\rho] &\text{if } \esem\grasse{E}\rho \neq  \undeff\\
                                            \bot & \text{if } \esem\grasse{E}\rho =  \undeff
                                            \end{cases}\\
      &\asem\grasse{B}\rho \ud \begin{cases} \rho &\text{if } \bsem\grasse{B}\rho = \textit{true}\\
                                            \bot & \text{if } \bsem\grasse{B}\rho \in \{\textit{false},\undeff\}
                                            \end{cases}
     \end{aligned}
 \end{equation*}
\end{mdframed}
\caption{Semantics of program expressions and actions.}
\label{fig:semantics}
\end{figure*}

Let   $\Store \ud \Var  \rightarrow \Valueu$ denote the set of possible stores on variables in $\Var$, where
$\rho(x)=\undeff$ means that the store $\rho$ is not defined on a program variable $x\in \Var$.  
Hence, let us point out that the symbol
$\undeff$ will be used to represent both store undefinedness and 
a generic error when evaluating an expression (e.g., additions and comparisons between integers and strings), two situations 
which are not distinguished in our semantics.  
A store $\rho\in \Store$ will be denoted by $[x/\rho(x)]_{\rho(x)\neq \undeff}$, thus
omitting undefined variables, while $[\,]$ will denote the totally undefined store. 
If $P\in \Program$ then $\vars(P)$ denotes
the set of variables in $\Var$ that occur in $P$, so that
$\Store_P \ud \vars(P)  \rightarrow \Valueu$ is the set of possible stores for $P$.

The semantics of expressions $\esem$, Boolean expressions $\bsem$ and  program actions $\asem$ 
is standard and goes as defined 
in Fig.~\ref{fig:semantics}. 
Let us remark that: 
\begin{itemize}
\item[{\rm (i)}]
the binary function $+_{\mathbb{Z}}$ denotes integer addition, $\leq_{\mathbb{Z}}$ denotes integer comparison, 
while $\cdot$ is string concatenation;  
\item[{\rm (ii)}]
logical negation and conjunction in $\Boolu$
are extended in order to handle 
$\undeff$ as follows: $\neg \undeff = \undeff$ and $\undeff \wedge b = \undeff = b \wedge \undeff$; 
\item[{\rm (iii)}]
$\rho[x/v]$ denotes a store update for the variable $x$ with $v\in \Value$; 
\item[{\rm (iv)}]
the distinct symbol $\bot\not\in \Valueu$ is used to denote the result of: 
$\asem\grasse{x := E}\rho$ when the evaluation of the expression $E$ for $\rho$ generates an error; 
$\asem\grasse{B}\rho$ when  the evaluation of the Boolean expression $B$ for $\rho$ is either $\textit{false}$ or generates an error. 
\end{itemize}
With  a slight abuse of  notation  we also consider the
collecting versions of the semantic functions in Fig.~\ref{fig:semantics}, which are defined as follows:
\[
  \begin{array}{lll}
   &\esem: \Exp \ra \wp(\Store) \ra \wp(\Valueu) \\
   & \esem\grasse{E}S \ud \{\esem\grasse{E}\rho \in \Valueu~|~ \rho \in S\}\\[7.5pt]
  &\bsem: \BExp \ra \wp(\Store) \ra \wp(\Store)  \\
  & \bsem\grasse{B}S \ud \{\rho \in S ~|~ \bsem\grasse{B}\rho =\textit{true}\}\\[7.5pt]
   &\asem: \mathbb{A} \ra \wp(\Store) \ra \wp(\Store)  \\
 & \asem\grasse{A}S\ud \{\asem\grasse{A}\rho ~|~ \rho \in S,\, \asem\grasse{A}\rho \in \Store\}
     \end{array}
\]
Let us point out that, in the above collecting versions, if $\esem\grasse{E}\rho =\undeff$ then $\esem\grasse{E}\{\rho\}=\{\undeff\}$ and
$\asem\grasse{x:=E}\{\rho\}=\varnothing$, while
if $\bsem\grasse{B}\rho \in \{\textit{false},\undeff\}$
 then $\bsem\grasse{B}\{\rho\}=\varnothing$ and 
$\asem\grasse{B}\{\rho\} = \varnothing$.

Generic program states are pairs  of stores and commands: $\State\ud \Store\times\, \mathbb{C}$. We extend  the 
previous functions $lbl$, $act$ and $suc$ to be defined on states, meaning that they are defined on the command
component of a state. Also, $\mathit{store}(s)$ and $\mathit{cmd}(s)$ 
return, respectively, the store and the command of a state $s$. 
The transition semantics 
${\ssem: \State \rightarrow \wp(\State)}$  is a relation between generic states defined as follows:
$$
  \ssem\tuple{\rho,C}\ud \{\tuple{\rho',C'}\in  \State~|~ \rho' \in \asem\grasse{act(C)}\{\rho\},\: 
suc(C)=lbl(C') \}.
$$

If $P$ is a program then  $\State_P\ud \Store_P\times P$ is the set of possible states of $P$.  
Given $P\in \Program$,  the program transition relation  
$\ssem\grasse{P}: \State_P \rightarrow \wp(\State_P)$  between states of $P$ is defined as:
$$
  \ssem\grasse{P}\tuple{\rho,C}\ud \{\tuple{\rho',C'}\in  \State_P~|~ \rho' \in \asem\grasse{act(C)}\{\rho\},\: 
C' \in P,\: suc(C)=lbl(C') \}.
$$

\noindent
Let us remark that, according to the above definition, 
if $C\equiv L:A\ra L'$, $C_1 \equiv L' : B\ra L''$ and $C_1^c\equiv L': \neg B
\ra L'''$ are all commands in $P$ and $\rho'\in \asem\grasse{A}\rho$ then 
we have that $\ssem\grasse{P}\tuple{\rho,C} = \{\tuple{\rho',C_1}, \tuple{\rho',C_1^c}\}$.

A state $s\in \State_P$ is stuck for $P$ when $\ssem\grasse{P} s = \varnothing$. 
Let us point that:
\begin{itemize}
\item[{\rm (i)}]
If the  conditional command of a
state  $s=\tuple{\rho, L: B \ra L'}\in \State_P$ is
such that $\bsem\grasse{B}\rho = \textit{false}$ then $s$ is stuck for $P$  
because there exists no store $\rho' \in \asem\grasse{B}\{\rho\}=\varnothing$. 
\item[{\rm (ii)}]
If the command of a state $s=\tuple{\rho, L:A\ra \L}\in \State_P$ has  
the undefined label $\L$ as next label then $s$ is stuck for $P$.  
\item[{\rm (iii)}] We have a stuck state $s$ when 
an error happens. E.g., this is the case for an undefined evaluation of an addition as in $s=\tuple{[y/3,z/\texttt{foo}], L:x:=y+z\ra L'}$
and for an undefined evaluation of a Boolean expression as in $s=\tuple{[y/3,z/\texttt{foo}], L:y\leq x \ra L'}$.
\end{itemize}
  
Programs typically have an entry point, and this is modeled through
a distinct initial label $L_{\iota}\in \mathbb{L}$ from which execution starts.
$\State_P^\iota \ud \{\tuple{\rho,C}~|~lbl(C) = L_\iota\}$ 
denotes the set of possible initial states for $P$.

\subsubsection{Trace Semantics}
A partial trace is any nonempty finite sequence of generic program 
states which are related by the transition relation
$\ssem$. Hence, the set $\Trace$ of partial traces is defined as follows:
$$
  \Trace \ud \{\sigma \in \State^+ ~|~ \forall i\in [1,|\sigma| ).\: \sigma_i \in \ssem \sigma_{i-1}\}. 
$$
The partial trace semantics of $P\in \Program$ is in turn defined 
as follows:
$$
  \tsem\grasse{P} = \Trace_P \ud \{\sigma \in (\State_P)^+ ~|~ 
  \forall i\in [1,|\sigma|).\: \sigma_{i} \in \ssem\grasse{P} \sigma_{i-1}\}.
$$

A trace $\sigma\in \Trace_P$ is complete if for any state $s\in \State_P$, $\sigma s\not \in \Trace_P$ and
$s\sigma \not\in \Trace_P$. Observe that $\Trace_P$ contains all the possible partial traces
of $P$, complete traces included. 
Let us remark that a trace $\sigma\in \Trace_P$ does not necessarily 
begin with an initial state, namely it may happen that $\sigma_0\not\in \State_P^\iota$. 
Traces of $P$ starting from initial states are denoted by
$$\tsem^\iota\grasse{P} = \Trace_P^\iota \ud \{\sigma \in \Trace_P ~|~ \sigma_0 \in \State_P^\iota\}.$$ 
Also, a complete trace
$\sigma\in\Trace^\iota_P$ such that $suc (\sigma_{|\sigma|-1}) = \L$ 
corresponds to a terminating run of the program $P$.

\begin{example}\label{example-init}\rm
Let us consider the program $Q$ below written in some while-language:

\medskip
{ 
$x:=0$;

\While{~$(x\leq 20)$}{
   \Indp
   \noindent
   $x:=x+1$;\\
   \lIf{$(x\%3=0)$}{$x:=x+3$;}
}
}

\medskip
\noindent
Its translation as a program $P$ in our language is given below, with $L_\iota = L_0$, where,
with a little abuse, we assume an extended syntax that  
allows expressions like ${x \% 3 =0}$.
\begin{equation*}
\begin{aligned}
    P= \big\{ &C_0 \equiv L_0: x:=0 \ra L_1, \\
    & C_1 \equiv L_1: x\leq 20 \ra L_2,\, C_1^c \equiv L_1: \neg(x\leq 20) \ra L_5,\\
    & C_2 \equiv L_2: x:=x+1 \ra L_3,\\
    & C_3 \equiv L_3: (x \% 3 =0) \ra L_4,\, C_3^c \equiv L_3: \neg (x\% 3 =0) \ra L_1\\
    & C_4 \equiv L_4: x:=x+3 \ra L_1,\, C_5 \equiv L_5: \cd{skip} \ra \L\big\}
\end{aligned}
\end{equation*}

\noindent
Its trace semantics  from initial states $\Trace^\iota_P$ includes the following complete traces, 
where $[\,]$ is the initial totally undefined store. 
\begin{equation*}
{%
\begin{array}{l}
\tuple{[\,],C_0}\tuple{[x/0],C_1^c} \\
\tuple{[\,],C_0}\tuple{[x/0],C_1}\tuple{[x/0],C_2}\tuple{[x/1],C_3} \\
\tuple{[\,],C_0}\tuple{[x/0],C_1}\tuple{[x/0],C_2}\tuple{[x/1],C_3^c}\tuple{[x/1],C_1^c}\\
 \;\cdots\\
  \;\cdots\\
\tuple{[\,],C_0}\tuple{[x/0],C_1}\tuple{[x/0],C_2}\tuple{[x/1],C_3^c}\tuple{[x/1],C_1}\cdots \tuple{[x/21],C_4}\tuple{[x/24],C_1}\\
\tuple{[\,],C_0}\tuple{[x/0],C_1}\tuple{[x/0],C_2}\tuple{[x/1],C_3^c}\tuple{[x/1],C_1}\cdots \tuple{[x/21],C_4}\tuple{[x/24],C_1^c}\tuple{[x/24],C_5}
\end{array}
}
\end{equation*}
Observe that the last trace corresponds to a terminating run of $P$.
\qed
\end{example}

\section{Abstractions}\label{abs-sec}
\subsection{Abstract Interpretation Background}
\label{intro_abs_int}

In standard abstract interpretation~\cite{CC77,CC79}, 
abstract domains, also called abstractions, 
are specified by Galois connections/insertions
(GCs/GIs for short) or, equivalently, adjunctions. 
Concrete and abstract domains, $\tuple{C,\leq_C}$ and
$\tuple{A,\leq_A}$, are assumed to be complete lattices 
which are related by abstraction and concretization maps
$\alpha:C\ra A$ and $\gamma:A \ra C$ such that,
for all $a$ and $c$,
$\alpha(c) \leq_A a \Lra c \leq_C \gamma(a)$. 
A GC is a GI when $\alpha\comp\gamma=\lambda x.x$.
It is well known that a join-preserving $\alpha$ uniquely 
determines, by adjunction,  $\gamma$
as follows: $\gamma(a) = \vee\{c\in C~|~\alpha(c) \leq_A a\}$; 
conversely, a meet-preserving $\gamma$ uniquely determines, by adjunction, $\alpha$ as follows:
$\alpha(c) = \wedge\{a\in A~|~ c\leq_C \gamma(a)\}$.

Let $f:C\ra C$ be some 
concrete monotone function---for simplicity, 
we consider 1-ary functions---and let
 $\ok{f^\sharp:A \ra A}$ be a corresponding monotone abstract function
 defined on some abstraction $A$ related to $C$ by a GC. Then,
$\ok{f^\sharp}$ is a correct abstract interpretation of $f$ on $A$
when $\ok{\alpha \circ f \sqsubseteq f^\sharp\circ \alpha}$ holds, where $\sqsubseteq$
denotes the pointwise ordering between functions.
Moreover, the abstract function
$\ok{f^A \ud \alpha \circ f \circ \gamma: A\rightarrow A}$ is called the best
correct approximation of $f$ on $A$ because any abstract
function
$\ok{f^\sharp}$ is correct iff $\ok{f^A \sqsubseteq f^\sharp}$.
Hence, for any $A$, 
$\ok{f^A}$ plays the role of the 
best possible approximation of $f$ on the abstraction $A$.

\subsection{Store Abstractions}\label{store-abstractions-sec}

As usual in abstract interpretation \cite{CC77}, a 
store property is modeled by some  abstraction $\Store^\sharp$ of $\wp(\Store)$ which is 
formalized through a Galois connection:
$$(\astore,{\tuple{\wp(\Store),\subseteq}},\tuple{\Store^\sharp,\leq},\gstore).$$

Given a program $P$, 
when $\Store^\sharp$ is viewed as an abstraction of $\tuple{\wp(\Store_P),\subseteq}$ 
we emphasize it by adopting the notation $\Store_P^\sharp$.   
A store abstraction $\Store_P^\sharp$ also induces a
state
abstraction $\State_P^\sharp \ud \Store_P^\sharp\times P$ and, in turn, a trace abstraction defined by
$\Trace_P^\sharp \ud (\State_P^\sharp)^*$.

\subsubsection{Nonrelational Abstractions}\label{abstract-lifting}
Nonrelational store abstractions (i.e., relationships between program variables are not 
taken into account) 
can be easily designed by a standard pointwise lifting of some value abstraction.  
Let $\Valueush$ be an abstraction of sets of possibly undefined values in $\wp(\Valueu)$ as formalized by a Galois connection
$$(\avalue,{\tuple{\wp(\Valueu),\subseteq}},\tuple{\Valueush,\leq_{\Valueush}},\gvalue).$$
The abstract domain $\Valueush$ induces a nonrelational 
store abstraction 
$$\rho^\sharp\in \Storeav \ud {\tuple{\Var \ra \Valueush,\sqsubseteq}}$$
where $\sqsubseteq$ is the pointwise ordering induced by $\leq_{\Valueush}$:
$\rho_1^\sharp \sqsubseteq \rho_2^\sharp$ iff for all $x\in \Var$, $\rho_1^\sharp(x)
\leq_{\Valueush} \rho_2^\sharp(x)$.
Hence, the bottom and top abstract stores
are, respectively, $\lambda x.\bot_{\Valueush}$ and $\lambda x.\top_{\Valueush}$. 
The abstraction map $\avalues:\wp(\Store) \ra \Storeav$ is defined as follows:
$$\avalues(S) \ud \lambda x.\avalue(\{\rho(x)\in \Valueu~|~\rho \in S\})$$
The corresponding concretization map $\gvalues: \Storeav \ra \wp(\Store) $ is defined,
as recalled in Section~\ref{intro_abs_int}, 
by adjunction  from the abstraction map $\ok{\avalues}$ 
and it is easy to check that it can be given as follows:
$$
\gvalues(\rho^\sharp) = \{\rho\in \Store~|~\forall x\in \Var.\: \rho(x) \in \gvalue(\rho^\sharp(x))\}.
$$
Let us observe that:
\begin{itemize}
\item[{\rm (i)}] $\avalues(\varnothing) = \lambda x.\avalue(\varnothing)  =\lambda x.\bot_{\Valueush}$ because
$\avalue(\varnothing) = \bot_{\Valueush}$ always holds in a GC;
\item[{\rm (ii)}] $\avalues(\{[\,]\}) = \lambda x.\avalue(\{\undeff\})$; 
\item[{\rm (iii)}] if $\gvalue(\bot_{\Valueush})=\varnothing$, $\rho^\sharp \in \Storeav$ and
$\rho^\sharp(x) = \bot_{\Valueush}$ then 
$\gvalues(\rho^\sharp) = \varnothing$;
\item[{\rm (iv)}] if $\gvalue(\bot_{\Valueush})=\{\undeff\}$ then
$\gvalues(\lambda x.\bot_{\Valueush}) = \{[\,]\}$.

\end{itemize}

\begin{example}[\textbf{The constant propagation abstraction}]\label{cp-ex}\rm
The constant propagation (see \cite{WZ91}) lattice $\tuple{\CP,\preceq}$ is depicted below. 
\begin{center}
    \begin{tikzpicture}[scale=1]
\small
     \tikzstyle{arrow}=[shorten >=-2pt, shorten <=-2pt]
      \draw (0,0) node[name=1] {{$\bot$}};
      \draw (-2.5,1) node[name=2] {{$\cdots$}};
      \draw (-2,1) node[name=3] {{$v_{-2}$}};
      \draw (-1,1) node[name=4] {{$v_{-1}$}};
      \draw (0,1) node[name=5] {{{$v_{0}$}}};
      \draw (1,1) node[name=6] {{$v_{1}$}};
      \draw (2,1) node[name=7] {{$v_2$}};
      \draw (2.5,1) node[name=8] {{$\cdots$}};
      \draw (0,2) node[name=9] {{$\top$}};

      \draw[semithick] (1) -- (3);
      \draw[semithick] (1) -- (4);
      \draw[semithick] (1) -- (5);
      \draw[semithick] (1) -- (6);
      \draw[semithick] (1) -- (7);
      \draw[semithick] (9) -- (3);
      \draw[semithick] (9) -- (4);
      \draw[semithick] (9) -- (5);
      \draw[semithick] (9) -- (6);
      \draw[semithick] (9) -- (7);

\end{tikzpicture}
\end{center}
where $\{v_i\}_{i\in \mathbb{Z}}$ is any enumeration of $\Valueu$, thus $\undeff$ is included.  
Abstraction ${\alpha_{cp}: \wp(\Valueu) \ra \CP}$ and concretization $\gamma_{cp}:\CP \ra \wp(\Valueu)$ functions are defined as follows:
\begin{align*}
 \alpha_{cp}(S) \ud 
 \begin{cases}
 	\bot & \text{if } S=\varnothing\\
     v_i & \text{if } S=\{v_i\}\\
     \top & \text{otherwise}
 \end{cases}
 &\qquad
\gamma_{cp}(a)\ud 
	\begin{cases}    
      \varnothing & \text{if } a=\bot\\
       \{v_i\} & \text{if } a=v_i \\
   \Valueu      & \text{if } a=\top
     \end{cases}
\end{align*}
and give rise to a GI $(\alpha_{cp}, \tuple{\wp(\Valueu),\subseteq},\tuple{\CP,\preceq},\gamma_{cp})$. 
The corresponding nonrelational 
store abstraction is denoted by $\CPst \ud \tuple{\Var \ra \CP,\dot{\preceq}}$, where
$\alpha_{\CP}: \wp(\Store) \ra \CPst$ and $\gamma_{\CP}:\CPst \ra \wp(\Store)$ denote the 
abstraction and concretization maps. For example, for $\Var=\{x,y,z,w\}$ and omitting the bindings $v/\undeff$ also in 
abstract stores,
we have that: 
\[
\begin{array}{l}
\alpha_{\CP} (\{ [x/2,y/\mathtt{foo},z/1], [x/2, y/\mathtt{bar}]\}) =
[x/2,y/\top, z/\top],
\\[5pt]
\gamma_{\CP} ([x/2,y/\top,w/\mathtt{foo}]) = 
\{\rho\in \Store ~|~ \rho(x) = 2, \rho(y)\in \Valueu, \rho(z)=\undeff, \rho(w)=\mathtt{foo}\}, \\[5pt]
\gamma_{\CP} ([x/2,y/\top,w/\bot]) = \varnothing. \qquad \qed
\end{array}
\]
\end{example}

\section{Hot Path Selection}\label{hp-sect}
A loop path is a sequence of program commands 
which is repeated in some execution of a program loop, together
with a store property which is valid at the entry of each command in the path. 
A loop path becomes \emph{hot} when, during the execution, 
it is repeated at least a fixed number $N$ of times. In a TJITC, hot path selection 
is performed by a loop path monitor that also records store properties (see, e.g., 
\cite{gal2009}).
Here, hot path selection is not operationally defined, it is instead 
semantically modeled as an abstraction map over program traces, i.e., 
program executions.

Given a program $P$ and therefore its trace semantics $\Trace_P$, 
 we first define a mapping $\sloop:\Trace_P 
\ra \wp(\Trace_P)$ that
returns all the loop paths in some execution trace of  $P$. 
More precisely, a loop path is 
a proper substring (i.e., a segment) $\tau$ of a program trace $\sigma$ such that: 
\begin{enumerate}
\item[(1)] the successor 
command in $\sigma$ of the last state in $\tau$ exists and coincides with 
the command~--~or its complement, when this is the last loop iteration~--~of the first
state in $\tau$; 
\item[(2)] there is no other such command within $\tau$ (otherwise
the sequence $\tau$ would contain multiple iterations);
\item[(3)] the last state of $\tau$ performs a backward jump in the program $P$. 
\end{enumerate}
To
recognize backward jumps, we consider a topological order on the control flow graph
of commands in $P$, denoted by $\lessdot$. This leads to the following formal
definition:
\begin{multline*}
    \sloop(\ab{\rho_{0}}{C_{0}} \cdots
    \ab{\rho_{n}}{C_{n}}) \ud
    \big\{\tuple{\rho_i,C_{i}}\tuple{\rho_{i+1},C_{i+1}} \cdots \tuple{\rho_j,C_j}~|~
    0\leq i \leq j < n,\, C_{i} \lessdot C_{j},\\
     suc(C_{j})=lbl(C_i),
       \forall k\in (i,j].\,
       C_{k} \not\in\{ C_{i},cmpl(C_{i})\}
       \big\}.
 \end{multline*}

\noindent
Let us remark that a loop path 
$$\tuple{\rho_i,C_{i}} \cdots \tuple{\rho_j,C_j}\in \sloop(\ab{\rho_{0}}{C_{0}} \cdots
    \ab{\rho_{n}}{C_{n}})$$ may contain some sub-loop path, namely
it may happen that $\sloop(\tuple{\rho_i,C_{i}} \cdots \tuple{\rho_j,C_j})\neq \varnothing$ so that
some commands $C_k$, with $k\in [i,j]$, may occur more than once in $\tuple{\rho_i,C_{i}} \cdots \tuple{\rho_j,C_j}$; for example,
this could be the case of a while loop whose body includes a nested while loop. 

We abuse notation by using $\astore$ to denote a
map $\astore: \Trace_P \ra \ok{\Trace_P^\sharp}$ which 
``abstracts'' a program trace $\tau$ into $\ok{\Trace^\sharp_P}$ by abstracting the sequence of stores occurring in $\tau$: 
\[
  \astore ( \tuple{\rho_0,C_0}\cdots \tuple{\rho_n,C_n}) \ud
  \tuple{\astore(\{\rho_0\}),C_0} \cdots \tuple{\astore(\{\rho_n\}),C_n}.
\]
Given a static integer parameter $N>0$, we define a
function $$\hot^N:\Trace_P \ra  \wp(\Trace_P^\sharp)$$
 which returns the set of $\Store^\sharp$-abstracted loop paths appearing at least
$N$ times in some program trace.  
In order to
count the number of times a loop path appears within a trace we need an
auxiliary function $\scount: \Trace_P^\sharp \times \Trace_P^\sharp \rightarrow \mathbb{N}$
such that $\scount(\sigma,\tau)$ yields the number of times an abstract path $\tau$ occurs in an abstract 
trace $\sigma$:
\begin{multline*}    
    \scount(\tuple{a_0,C_0} \cdots \tuple{a_n,C_n}, \tuple{b_0,C'_0} \cdots \tuple{b_m,C'_m})
    \ud \\
    \sum_{i=0}^{n-m} \begin{cases}
      1 & \text{if } \tuple{a_i,C_i} \cdots \tuple{a_{i+m},C_{i+m}} = \tuple{b_0,C'_0} \cdots \tuple{b_m,C'_m}\\
      0 & \text{otherwise}
    \end{cases}
    \end{multline*}     
Hence, $\hot^N$ can be defined as follows: 
\begin{align*}
    \hot^N(\sigma\equiv\ab{\rho_{0}}{C_{0}} \cdots
    \ab{\rho_{n}}{C_{n}}) \ud
    \big\{& \tuple{a_i,C_i} \cdots \tuple{a_j,C_j} ~|~
    \exists \tuple{\rho_i,C_i}\cdots\tuple{\rho_j,C_j}\in \sloop(\sigma) \text{ s.t. }\\
    & %
    \, \astore(\tuple{\rho_i,C_i}\cdots\tuple{\rho_j,C_j}) = \tuple{a_i,C_i} \cdots \tuple{a_j,C_j},\\
    & \!\, \scount(\astore(\sigma),\tuple{a_i,C_i} \cdots \tuple{a_j,C_j} )\geq N\big\}.
\end{align*}

\noindent
Finally, an abstraction map $\ahotn: \wp(\Trace_P) \ra \wp(\Trace_P^\sharp)$ 
collects the results of applying $\hot^N$ to a set of traces:
$$
\alpha_{hot}^N(T) \ud \bigcup_{\sigma \in T} \hot^N(\sigma).
$$

A $N$-hot path $hp$ in a program $P$ is therefore any
$hp \in \alpha_{hot}^N(\Trace_P)$ and
is  compactly denoted as $hp = \tuple{a_0,C_0,...,a_n,C_n}$. Let us observe that 
if the hot path corresponds to the body of some while loop then its first command $C_0$ is a conditional, namely 
$C_0$ is the Boolean guard of the while loop. 
We define  the successor function $\textit{next}$ for indices in a hot path $\tuple{a_0,C_0,...,a_n,C_n}$ as follows:
$\textit{next} \ud \lambda i.\ i = n\,?\, 0 : i+1$. 
For a $N$-hot path $\tuple{a_0,C_0,...,a_n,C_n}\in \ahotn(\Trace_P)$,
for any $i\in [0,n]$, if $C_i$ is a conditional
command $L_i: B_i \ra L_{\textit{next}(i)}$ then throughout the paper its
complement $C_i^c=cmpl(C_i)$ will be also denoted by $L_i : \neg B_i \ra L^{c}_{\textit{next}(i)}$.

\begin{example}\label{ex-zero}
\rm 
Let us consider the program $P$ in Example~\ref{example-init} and
a trivial 
one-point store abstraction $\Store^\sharp = \{\top\}$,
where all the stores are abstracted 
to the same abstract store $\top$, i.e., $\astore = \lambda S.\top$.
Here, we have two $2$-hot paths in $P$, that is, it turns out that 
$\alpha_{\mathit{hot}}^2 (\Trace_P)=\{hp_1,hp_2\}$ where: 
\begin{align*}
hp_1=\tuple{ & \top, C_1\equiv L_1:x\leq 20 \ra L_2,
\top, C_2\equiv L_2: x:=x+1 \ra L_3 ,\\
&\top,C_3^c\equiv L_3: \neg(x\%3 =0) \ra L_1};\\[7.5pt]
hp_2=\tuple{ & \top, C_1\equiv L_1:x\leq 20 \ra L_2, \top, C_2\equiv L_2: x:=x+1 \ra L_3 ,\\
&\top,C_3\equiv L_3: (x\%3 =0) \ra L_4, \top, C_4\equiv x:=x+3 \ra L_1}.
\end{align*}
Therefore, the hot paths $hp_1$ and $hp_2$ correspond, respectively, 
to the cases where the Boolean test $(x\%3 =0)$ fails and succeeds.  Observe that the
maximal sequence of different values assumed by the program variable $x$ is as follows: 
$$?\mapsto 0 \mapsto 1 \mapsto 2 \mapsto 3 \mapsto 6 \mapsto 7 \mapsto 8 \mapsto 9 \mapsto 12 \mapsto 13 \mapsto 14 \mapsto 15 \mapsto 18 \mapsto 19 \mapsto 20 \mapsto 21 \mapsto 24$$  
Hence, if $\sigma$ is the complete terminating trace of $P$ in Example~\ref{example-init} 
then it turns out that $\scount(\astore(\sigma),hp_1)=8$ 
and $\scount(\astore(\sigma),hp_2)=4$. 
\qed
\end{example}   
   
\section{Trace Extraction}\label{trace-ext-sec}

For any abstract store $a\in \Store^\sharp$, a corresponding 
Boolean expression denoted by $\cdo{guard}\: E_a$ $\in \BExp$ is defined (where the notation 
$E_a$ should hint at an expression which is induced by the abstract store $a$), whose semantics
is as follows: for any $\rho \in \Store$,
\begin{equation*}
\bsem\grasse{\cdo{guard}\:E_a}\rho \ud 
\begin{cases}
\textit{true} &\text{if } \rho \in \gstore(a)\\
\textit{false} &\text{if } \rho \not\in \gstore(a)\\
\end{cases}
\end{equation*}
In turn, we also have program actions $\cdo{guard}\:E_a$ $\in \mathbb{A}$ such that: 
\[
\asem\grasse{\cdo{guard}\:E_a}\rho \ud 
\begin{cases}
\rho & \text{if }\rho \in \gstore(a)\\
\bot & \text{if }\rho \not\in \gstore(a)
\end{cases}
\]

\begin{figure*}
\newdimen\largeur\settowidth{\largeur}{\quad}\advance\largeur by -\columnwidth \largeur-\largeur
\begin{center}
\begin{tikzpicture}[scale=0.41] 

\scriptsize
\draw (4,22.5) node{$\vdots$};
\draw[thick,->] (4,21.5) -- (4,20);
\draw[color=black!100] (4.75,20.5) node {{${L_0}$}};
\filldraw[thick,draw=black!100,fill=red!25] (4,18) -- (2,19) -- (4,20) -- (6,19) -- (4,18);
\draw[color=black!100] (4,19) node{{${B_0}$}};
\draw[thick,->] (6,19) -- (7.5,19);
\draw[color=black!100] (7,19.5) node {{${L_1^c}$}};
\draw[thick,->] (4,18) -- (4,16);
\draw[color=black!100] (4.75,16.5) node {{${L_1}$}};
\filldraw[thick,draw=black!100,fill=red!25] (2,14) rectangle +(4,2);
\draw[color=black!100] (4,15) node{{${A_1}$}};
\draw[thick,->] (4,14) -- (4,12.5);
\draw[color=black!100] (4.75,13) node {{${L_2}$}};
\draw (4,12) node{$\pmb{\vdots}$};
\draw[thick,->] (4,10.5) -- (4,9);
\draw[color=black!100] (4.75,9.5) node {{${L_n}$}};
\filldraw[thick,draw=black!100,fill=red!25] (2,7) rectangle +(4,2);
\draw[color=black!100] (4,8) node{{${A_n}$}};
\draw[thick,-] (4,7) -- (4,6) -- (0,6) -- (0,21);
\draw[thick,->] (0,21) -- (4,21);


\draw (18,28.5) node{$\pmb{\vdots}$};
\draw[thick,->] (18,27.5) -- (18,26);
\draw[color=black!100] (18.75,26.5) node {{${L_0}$}};
\filldraw[thick,draw=black!100,fill=blue!50] (18,24) -- (16,25) -- (18,26) -- (20,25) -- (18,24);
\draw[color=black!100] (18,25) node{{${\guard ~E_{a_0}}$}};
\draw[thick,-] (20,25) -- (26,25);
\draw[thick,->] (26,25) -- (26,24); \draw[color=black!100] (26.75,24.5) node {{${\overline{L_0}}$}};
\filldraw[thick,draw=black!100,fill=red!25] (26,22) -- (24,23) -- (26,24) -- (28,23) -- (26,22);
\draw[color=black!100] (26,23) node{{${B_0}$}};
\draw[thick,-] (26,22) -- (26,17); 
\draw[thick,->] (28,23) -- (29.5,23); 
\draw[color=black!100] (29,23.5) node {{${L_1^c}$}};
\draw[thick,->] (18,24) -- (18,22);
\draw[color=black!100] (18.75,22.5) node {{${\ell_0}$}};
\filldraw[thick,draw=black!100,fill=blue!50] (18,20) -- (16,21) -- (18,22) -- (20,21) -- (18,20);
\draw[color=black!100] (18,21) node{{${B_0}$}};
\draw[thick,->] (20,21) -- (21.5,21); \draw[color=black!100] (21,21.5) node {{${L_1^c}$}};
\draw[thick,->] (18,20) -- (18,18); \draw[color=black!100] (18.75,18.5) node {{${\bbl_1}$}};

\filldraw[thick,draw=black!100,fill=blue!50] (18,16) -- (16,17) -- (18,18) -- (20,17) -- (18,16);
\draw[color=black!100] (18,17) node{{${\guard ~E_{a_1}}$}};
\draw[thick,-] (20,17) -- (26,17); 
\draw[thick,->] (26,17) -- (26,16); \draw[color=black!100] (26.75,16.5) node {{${L_1}$}};
\filldraw[thick,draw=black!100,fill=red!25] (24,14) rectangle +(4,2);
\draw[color=black!100] (26,15) node{{${A_1}$}};
\draw[thick,->] (26,14) -- (26,12.5); \draw[color=black!100] (26.75,13) node {{${L_2}$}};
\draw (26,12) node {{$\pmb{\vdots}$}};

\draw[thick,->] (18,16) -- (18,14); \draw[color=black!100] (18.75,14.5) node {{${\ell_1}$}};
\filldraw[thick,draw=black!100,fill=blue!50] (16,12) rectangle +(4,2);
\draw[color=black!100] (18,13) node{{${A_1}$}};
\draw[thick,->] (18,12) -- (18,10.5); \draw[color=black!100] (18.75,11) node {{${\bbl_2}$}};
\draw (18,10) node{$\pmb{\vdots}$};
\draw[thick,->] (18,8.5) -- (18,7); \draw[color=black!100] (18.75,7.5) node {{${\bbl_n}$}};
\filldraw[thick,draw=black!100,fill=blue!50] (18,5) -- (16,6) -- (18,7) -- (20,6) -- (18,5);
\draw[color=black!100] (18,6) node{{${\guard ~E_{a_n}}$}};
\draw[thick,-] (20,6) -- (26,6); 
\draw[thick,->] (26,7) -- (26,5); \draw[color=black!100] (26.75,5.5) node {{${L_n}$}};
\filldraw[thick,draw=black!100,fill=red!25] (24,3) rectangle +(4,2);
\draw[color=black!100] (26,4) node{{${A_n}$}};
\draw[thick,-] (26,3) -- (26,0) -- (18,0); 

\draw[thick,->] (18,5) -- (18,3); \draw[color=black!100] (18.75,3.5) node {{${\ell_n}$}};
\filldraw[thick,draw=black!100,fill=blue!50] (16,1) rectangle +(4,2);
\draw[color=black!100] (18,2) node{{${A_n}$}};
\draw[thick,-] (18,1) -- (18,0) -- (14,0) -- (14,27);
\draw[thick,->] (14,27) -- (18,27);

\end{tikzpicture}
\end{center}
\caption{An example of trace extraction transform: on the left, a hot path $hp$ with
commands in pink (in black/white: loosely dotted) 
shapes; on the right, the corresponding trace transform $\extr_{hp}(P)$ 
with new commands in blue (in black/white: densely dotted) shapes.}\label{tet-fig}
\end{figure*}

\noindent
Let $P$ be a program and $hp= \tuple{a_0,C_0,...,a_n,C_n}\in \ahotn(\Trace_P)$ be a
hot path on some store abstraction $\Store^\sharp$. 
We define a syntactic transform of $P$ 
where the hot path $hp$ is explicitly extracted from $P$. This is achieved by a
suitable relabeling of each command $C_i$ in $hp$ which is in turn preceded by
the conditional $\cdo{guard}\:E_{a_i}$ induced by the corresponding store property $a_i$. 
To this aim, we consider three \emph{injective} relabeling functions 
$$
\ell:[0,n] \ra \mathbb{L}_{1}\qquad\quad \bbl:[1,n] \ra \mathbb{L}_{2}\qquad\quad \overline{(\cdot)}: \mathbb{L} \ra  \overline{\mathbb{L}}
\eqno(*)
$$
where $\mathbb{L}_{1}$, $\mathbb{L}_{2}$ and $\overline{\mathbb{L}}$ are pairwise disjoint sets of 
fresh labels, so that 
$\mathit{labels}(P)\cap (\mathbb{L}_{1}\cup \mathbb{L}_{2}\cup \overline{\mathbb{L}})=\varnothing$. 
The transformed program $\extr_{hp}(P)$ for the hot path $hp$
is defined as follows and a graphical example of this  transform is 
depicted in Fig.~\ref{tet-fig}.

\begin{definition}[\textbf{Trace extraction transform}]
\label{tet-def}
\rm
The \emph{trace extraction transform} of $P$ for the hot path $hp=\tuple{a_0,C_0,...,a_n,C_n}$ is:
\begin{align*}
    &\extr_{hp}(P) \ud  P \smallsetminus \big(\{ C_{0}\} \cup \{ cmpl(C_{0})~|~cmpl(C_0)\in P \}\big)\\ 
     &\;\;\quad\quad\qquad \qquad\cup\{ \overline{L_0} : act(C_0) \ra L_1\} \cup \{ \overline{L_0} : \neg act(C_0) \ra L_1^c ~|~ cmpl(C_0)\in P\}
    \cup \stitch_{P}(hp)
\end{align*}
where the stitch of $hp$ into $P$ is defined as follows:  
\belowdisplayskip=-10pt
\begin{equation*}
  \begin{aligned}
    \stitch_{P}(hp) \ud  & \;\{ L_{0} : \cdo{guard}\: E_{a_0} \rightarrow
    \ell_0,\, L_{0} : \neg \cdo{guard}\:E_{a_0} \rightarrow
    \overline{L_0}\}\\   
    &\cup \{\ell_i : act(C_{i}) \rightarrow \bbl_{i+1}~|~
    i\in [0,n-1]\} \!\cup\! \{\ell_n : act(C_n) \rightarrow L_0\}\\
    &\cup \{\ell_i : \neg act(C_i) \rightarrow L_{\textit{next}(i)}^c ~|~ i\in [0,n],\, cmpl(C_{i})\in P \} \\
     &\cup \{\bbl_i : \cdo{guard}\:E_{a_i} \rightarrow \ell_{i},\, 
     \bbl_i : \neg \cdo{guard}\:E_{a_i} \rightarrow L_{i} ~|~
    i\in [1,n]\}. \qed
   \end{aligned}
\end{equation*}
\end{definition}

The new command $L_{0} : \cdo{guard}\: E_{a_0} \rightarrow \ell_0$ is therefore the entry conditional
of the stitched hot path $\stitch_{P}(hp)$, while any
command $C\in \stitch_{P}(hp)$ such that 
$suc(C) \in \mathit{labels}(P) \cup \ok{\overline{\mathbb{L}}}$ is a potential exit (or bail out) command 
of $\stitch_{P}(hp)$.

\begin{lemma}
If $P$ is well-formed then, for any hot path $hp$, $\extr_{hp}(P)$ is well-formed. 
\end{lemma}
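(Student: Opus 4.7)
The plan is to verify directly that $\extr_{hp}(P)$ satisfies the well-formedness condition: every conditional in it has a unique complement conditional at the same label. I would proceed by a case analysis on where each conditional in $\extr_{hp}(P)$ originates, distinguishing between those inherited from $P$ and those introduced by the transform of Definition~\ref{tet-def}.

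First I would handle the \emph{inherited} conditionals. Let $C$ be a conditional in $P$ that survives in $\extr_{hp}(P)$, so $C \notin \{C_0, cmpl(C_0)\}$. By well-formedness of $P$, $cmpl(C)$ is the unique complement of $C$ in $P$. Since $cmpl$ is an involution, $cmpl(C) \in \{C_0, cmpl(C_0)\}$ would force $C \in \{cmpl(C_0), C_0\}$, a contradiction. Hence $cmpl(C)$ also survives, so it is still a complement of $C$ in $\extr_{hp}(P)$. For uniqueness I would then argue that no new command produced by the transform can be a second complement of $C$: all new commands sit at the labels $\overline{L_0}, L_0, \ell_i$, or $\bbl_i$, and the freshness/injectivity assumptions $(*)$ on $\overline{(\cdot)}, \ell, \bbl$ ensure that $\overline{L_0}, \ell_i, \bbl_i \notin \mathit{labels}(P)$, while the only new commands at $L_0$ use the fresh action $\cdo{guard}\,E_{a_0}$ (and its negation), which cannot coincide with $act(C)$ or its negation.

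Next I would handle the \emph{introduced} conditionals, observing that Definition~\ref{tet-def} adds them in explicit complement pairs: the entry-guard pair at $L_0$ for $\cdo{guard}\,E_{a_0}$; the relabeled root pair at $\overline{L_0}$ for $act(C_0)$, which is added precisely when $cmpl(C_0)\in P$, i.e., exactly when $C_0$ is a conditional; the guard pair at each $\bbl_i$; and the pair at each $\ell_i$ for $act(C_i)$, which is added precisely when $cmpl(C_i)\in P$, i.e., by well-formedness of $P$, exactly when $C_i$ is a conditional. Thus each such new conditional has its complement. Uniqueness then follows from the freshness of the labels $\overline{L_0}, \ell_i, \bbl_i$ (no other command of $\extr_{hp}(P)$ sits there) and, for the entry-guard pair at $L_0$, from the fact that $C_0$ and $cmpl(C_0)$ have been removed and that $\cdo{guard}\,E_{a_0}$ is a fresh Boolean expression not occurring in $P$.

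The main — and only mild — obstacle is the bookkeeping at the label $L_0$, which is simultaneously the label of the removed command $C_0$, the label of the freshly introduced entry-guard pair, and potentially the label of unrelated commands already present in $P$. I would address this by checking explicitly that the set of removals $\{C_0, cmpl(C_0)\}$ and the set of additions at $L_0$ are each internally complement-closed, so that the complement structure of any other command of $P$ sharing the label $L_0$ is left untouched; combining this with the two cases above completes the argument.
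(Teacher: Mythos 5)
Your proof is correct and takes essentially the same route as the paper's own (much terser) argument: split the conditionals of $\extr_{hp}(P)$ into those inherited from $P$, whose unique complements survive by well-formedness of $P$, and those introduced by the transform, which Definition~\ref{tet-def} adds in explicit complement pairs. Your additional bookkeeping on the freshness of $\overline{L_0},\ell_i,\bbl_i$ and on the guard pair at $L_0$ merely makes explicit the uniqueness claims the paper leaves implicit.
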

\begin{proof} Recall that a program is well-formed when for any its conditional command it also
includes a unique complement conditional. It turns out that 
$\extr_{hp}(P)$ is  well-formed because: (1)~$P$ is well-formed; (2)~for each conditional in $P_{\mathit{new}}
= \extr_{hp}(P)\smallsetminus P =
\stitch_{P}(hp)$ $\cup$
$\{ \overline{L_0} : act(C_0) \ra L_1\} \cup \{\overline{L_0} : \neg act(C_0) \ra L_1^c~|~cmpl(C_0)\in P\}$ 
we also have
a unique complement conditional in $P_{\mathit{new}}$. 
Moreover, observe that if $P$ is deterministic then $\extr_{hp}(P)$ still is deterministic.
\end{proof}

Let us remark that the stitch of the hot path $hp$ into $P$ is always 
a linear sequence of different commands, namely, $\stitch_{P}(hp)$ 
does not contain loops nor join points. Furthermore, this happens 
even if the hot path $hp$ does contain some inner sub-loop. Technically, 
this is achieved as a consequence of the fact that the above relabeling functions $\ell$ and 
$\bbl$ are required to be 
injective.
Hence, even if some command $C$ occurs more than once inside $hp$, e.g., $C_i = C = C_j$ for some $i,j\in [0,n-1]$ with $i\neq j$, then 
these multiple occurrences of $C$ in $hp$ are transformed  into
differently labeled commands in $\stitch_{P}(hp)$, e.g., because $\ell_i \neq \ell_j$ and $\bbl_{i+1} \neq \bbl_{j+1}$. 

Let us now illustrate the trace extraction transform on a first simple example.  

\begin{example}\label{ex-one}
\rm 
Let us consider the program $P$ in Example~\ref{example-init} 
and the hot path $hp = \tuple{\top,C_1,\top,C_2,\top, C_3^c}$  in Example~\ref{ex-zero} (denoted there by $hp_1$), where
stores are abstracted to the trivial one-point abstraction $\Store^\sharp = \{\top\}$.
Here,  for any $\rho \in \Store$, we have that $\bsem\grasse{\cdo{guard}\:E_\top}\rho = 
\textit{true}$.
The trace extraction transform of $P$ w.r.t.\ $hp$ is therefore as follows:
$$
\extr_{hp}(P) =
     P \smallsetminus \{ C_{1}, C_{1}^c \} 
    \cup \{\overline{L_1} : x\leq 20 \ra L_2,\, \overline{L_1} : \neg (x \leq 20) \ra L_5\}
    \cup \stitch_{P}(hp)
$$
where    
    \begin{equation*}
    \begin{aligned}
    \stitch_{P}(hp)= &\;\{ H_0\equiv L_{1} : \cdo{guard}\: E_\top \rightarrow
    \ell_0,\, H_0^c\equiv L_{1} : \neg \cdo{guard}\:E_\top \rightarrow
    \overline{L_1}\} \\
    &\cup \{ H_1\equiv\ell_0 : x\leq 20 \ra \bbl_1,\, H_1^c \equiv\ell_0 : \neg(x\leq 20) \ra L_5\}\\
    &\cup \{ H_2 \equiv\bbl_1: \cdo{guard}\: E_\top \ra \ell_1,\, 
    H_2^c \equiv\bbl_1: \neg \cdo{guard}\: E_\top \ra L_2\}\\
    &\cup \{ H_3 \equiv \ell_1: x:= x+1 \ra \bbl_2\}\\
    &\cup \{ H_4\equiv\bbl_2: \cdo{guard}\: E_\top \ra \ell_2,\, H_4^c\equiv\bbl_2: \neg \cdo{guard}\: E_\top \ra L_3\}\\
    &\cup \{H_5\equiv \ell_2: \neg(x\% 3 =0) \ra\! L_1, H_5^c \equiv\ell_2: (x\%3 =0) \ra\! L_4\}.
   \end{aligned}
\end{equation*}

\smallskip
\noindent
The flow graph of $\extr_{hp}(P)$  is depicted in Figure~\ref{fig-ex}, while 
a higher level representation 
using while-loops and gotos is as follows:

\medskip
\vbox{
{
$x:=0$;

$L_1\!:$ \While{$\guard~ E_\top$}{
\Indp\Indp \lIf{$\neg(x\leq 20)$}{\KwSty{goto} $L_5$}

\lIf{$\neg \guard~ E_\top$}{\KwSty{goto} $L_2$}

  $x:=x+1$;

 \lIf{$\neg \guard~ E_\top$}{\KwSty{goto} $L_3$}

 \lIf{$(x\% 3 = 0)$}{\KwSty{goto} $L_4$}
 }
 \lIf{$\neg(x\leq 20)$}{\KwSty{goto} $L_5$}
 
	$L_2\!: x:=x+1$;
 
 	$L_3\!:$ \lIf{$\neg (x\%3 =0)$}{\KwSty{goto} $L_1$}

	$L_4\!: x:=x+3$;

	\KwSty{goto} $L_1$;

    $L_5\!: \KwSty{skip}$; \qquad\qed
}
}

\end{example}

\begin{figure*}
\newdimen\largeur\settowidth{\largeur}{\quad}\advance\largeur by -\columnwidth \largeur-\largeur
\begin{center}
\begin{tikzpicture}[scale=0.41] 

\scriptsize

\draw[thick,|->] (18,30) -- (18,29);
\draw[color=black!100] (18.75,29.5) node {{${L_0}$}};
\filldraw[thick,draw=black!100,fill=red!25] (16,28) rectangle +(4,1);
\draw[color=black!100] (18,28.5) node{{$x:=0$}}; \draw[color=black!100] (15.25,28.5) node {{\color{red}\boldmath{$C_0$}}};
\draw[thick,->] (18,28) -- (18,26);
\draw[color=black!100] (18.75,26.5) node {{${L_1}$}};
\filldraw[thick,draw=black!100,fill=blue!50] (18,24) -- (16,25) -- (18,26) -- (20,25) -- (18,24);
\draw[color=black!100] (18,25) node{{${\guard ~E_{\top}}$}}; 
\draw[color=black!100] (15.5,25.5) node {{\color{blue}\boldmath{$H_0$}}}; 
\draw[color=black!100] (15.5,24.5) node {{\color{blue}\boldmath{$H_0^c$}}};

\draw[thick,-] (20,25) -- (26,25);
\draw[thick,->] (26,25) -- (26,24); \draw[color=black!100] (26.75,24.5) node {{${\overline{L_1}}$}};
\filldraw[thick,draw=black!100,fill=red!25] (26,22) -- (24,23) -- (26,24) -- (28,23) -- (26,22);
\draw[color=black!100] (26,23) node{{$x\leq 20$}}; \draw[color=black!100] (23.5,23.5) node {{\color{red}\boldmath{$C_1$}}}; 
\draw[color=black!100] (23.5,22.5) node {{\color{red}\boldmath{$C_1^c$}}};

\draw[thick,->] (26,22) -- (26,19); 
\draw[thick,-] (28,23) -- (34,23); \draw[thick,->] (34,23) -- (34,20); 
\draw[color=black!100] (34.75,20.5) node {{${L_5}$}};
\filldraw[thick,draw=black!100,fill=red!25] (32,19) rectangle +(4,1); \draw[color=black!100] (34,19.5) node{{$\cd{skip}$}}; 
\draw[color=black!100] (31.25,19.5) node {{\color{red}\boldmath{$C_5$}}};
\draw[thick,-|] (34,19) -- (34,18); \draw[color=black!100] (34.75,18.5) node {{$\L$}};

\draw[thick,->] (18,24) -- (18,22);
\draw[color=black!100] (18.75,22.5) node {{${\ell_0}$}};
\filldraw[thick,draw=black!100,fill=blue!50] (18,20) -- (16,21) -- (18,22) -- (20,21) -- (18,20);
\draw[color=black!100] (18,21) node{{$x\leq 20$}}; 
\draw[color=black!100] (15.5,21.5) node {{\color{blue}\boldmath{$H_1$}}}; 
\draw[color=black!100] (15.5,20.5) node {{\color{blue}\boldmath{$H_1^c$}}};
\draw[thick,->] (20,21) -- (34,21);
\draw[thick,->] (18,20) -- (18,18); \draw[color=black!100] (18.75,18.5) node {{${\bbl_1}$}};

\filldraw[thick,draw=black!100,fill=blue!50] (18,16) -- (16,17) -- (18,18) -- (20,17) -- (18,16);
\draw[color=black!100] (18,17) node{{${\guard ~E_{\top}}$}};
\draw[color=black!100] (15.5,17.5) node {{\color{blue}\boldmath{$H_2$}}}; 
\draw[color=black!100] (15.5,16.5) node {{\color{blue}\boldmath{$H_2^c$}}};
\draw[thick,-] (20,17) -- (22,17) -- (22,20) -- (26,20); 
\draw[thick,->] (22,20) -- (26,20);
\filldraw[thick,draw=black!100,fill=red!25] (24,18) rectangle +(4,1); \draw[color=black!100] (26.75,19.5) node {{${L_2}$}};
\draw[color=black!100] (26,18.5) node{{$x:=x+1$}}; \draw[color=black!100] (23.25,18.5) node {{\color{red}\boldmath{$C_2$}}};
\draw[thick,->] (26,18) -- (26,15);
\filldraw[thick,draw=black!100,fill=red!25] (26,13) -- (24,14) -- (26,15) -- (28,14) -- (26,13);
\draw[color=black!100] (26,14) node{{$x\%3 =0$}}; \draw[color=black!100] (26.75,15.5) node {{${L_3}$}};
\draw[color=black!100] (23.5,14.5) node {{\color{red}\boldmath{$C_3$}}}; 
\draw[color=black!100] (23.5,13.5) node {{\color{red}\boldmath{$C_3^c$}}};

\draw[thick,->] (26,13) -- (26,10);
\filldraw[thick,draw=black!100,fill=red!25] (24,10) rectangle +(4,1);
\draw[color=black!100] (26,10.5) node{{$x:=x+3$}}; \draw[color=black!100] (26.75,11.5) node {{${L_4}$}}; \draw[color=black!100] (23.5,10.5) node {{\color{red}\boldmath{$C_4$}}};
\draw[thick,-] (26,10) -- (26,9) -- (38,9) -- (38,27);
\draw[thick,->] (38,27) -- (18,27);

\draw[thick,->] (18,16) -- (18,14); \draw[color=black!100] (18.75,14.5) node {{${\ell_1}$}};
\filldraw[thick,draw=black!100,fill=blue!50] (16,13) rectangle +(4,1);
\draw[color=black!100] (18,13.5) node{{$x:=x+1$}}; \draw[color=black!100] (15.25,13.5) node {{\color{blue}\boldmath{$H_3$}}};
\draw[thick,->] (18,13) -- (18,11); \draw[color=black!100] (18.75,11.5) node {{${\bbl_2}$}};
\draw[thick,-] (20,10) -- (22,10) -- (22,16);  
\draw[thick,->] (22,16) -- (26,16);

\filldraw[thick,draw=black!100,fill=blue!50] (18,9) -- (16,10) -- (18,11) -- (20,10) -- (18,9);
\draw[color=black!100] (18,10) node{{${\guard ~E_{\top}}$}};
\draw[color=black!100] (15.5,10.5) node {{\color{blue}\boldmath{$H_4$}}}; 
\draw[color=black!100] (15.5,9.5) node {{\color{blue}\boldmath{$H_4^c$}}};

\draw[thick,-] (20,6) -- (23,6) -- (23,12); \draw[thick,->] (23,12) -- (26,12);

\draw[thick,->] (18,9) -- (18,7); \draw[color=black!100] (18.75,7.5) node {{${\ell_2}$}};
\filldraw[thick,draw=black!100,fill=blue!50] (18,5) -- (16,6) -- (18,7) -- (20,6) -- (18,5);
\draw[color=black!100] (18,6) node{{$\!\neg(x\%3 \!=\! 0)$}};
\draw[color=black!100] (15.5,6.5) node {{\color{blue}\boldmath{$H_5$}}}; 
\draw[color=black!100] (15.5,5.5) node {{\color{blue}\boldmath{$H_5^c$}}};

\draw[thick,-] (18,5) -- (18,4) -- (14,4) -- (14,27);
\draw[thick,->] (14,27) -- (18,27);

\end{tikzpicture}
\end{center}
\caption{The flow graph of the trace extraction transform $\extr_{hp}(P)$ in Example~\ref{ex-one}, 
where commands of $\stitch_P(hp)$
are in blue (in black/white: densely dotted) shapes, while 
commands of the source program $P$ are in pink (in black/white: loosely dotted) 
shapes.}\label{fig-ex}
\end{figure*}

\section{Correctness}
\label{obs}

As advocated by
\citeN[par.~3.8]{cousot2002systematic},
correctness of dynamic program transformations and optimizations 
should be defined with respect to some observational
abstraction of program trace semantics: a dynamic program transform is correct when, at some
level of abstraction, the observation of the execution of the subject program
is equivalent to the observation of the execution of the transformed/optimized program. 

\subsubsection*{\textbf{Store Changes Abstraction}}
The approach by \citeN{palsberg} to tracing compilation basically relies on a
notion of correctness that requires the same \emph{store changes} to happen in both the
transformed/optimized program and the original program. This can be easily encoded 
by an observational
abstraction $\alpha_{\sch} : \wp(\Trace) \rightarrow \wp(\Store^{*})$ 
of partial traces that observes store changes in execution traces:  
\begin{equation*} \label{obs_abstraction}
  \begin{aligned}
    &\sch : \Trace \rightarrow \Store^{*} \\
    &\sch(\sigma) \ud\!
    \begin{cases}
      \varepsilon & \!\text{if } \sigma = \varepsilon\\
      \rho & \!\text{if } \sigma = \tuple{\rho,C}\\
      \sch (\tuple{\rho, C_1} \sigma') & 
      \!\text{if } \sigma = \tuple{\rho,C_0}\tuple{\rho, C_1} \sigma' \\
      \rho_0 \sch(\tuple{\rho_1, C_1} \sigma') & 
      \!\text{if }  \sigma =\! \tuple{\rho_0,C_0}\tuple{\rho_1, C_1} \sigma' ,
      \rho_0 \neq \rho_1
    \end{cases}    \\
    &\alpha_{\sch}(T) \ud \set{\sch(\sigma)}{\sigma \in T} 
  \end{aligned}
\end{equation*}

\noindent
Since the function $\alpha_{\sch}$ obviously preserves arbitrary set unions, as recalled in
Section~\ref{intro_abs_int}, it admits a 
right adjoint $\gamma_{\sch} :\wp(\Store^*) \ra \wp(\Trace)$ defined as  
$\gamma_{\sch}(S) \ud \cup \set{T \in
\wp(\Trace)}{\alpha_{\sch}(T) \subseteq S}$, that gives rise
to a GC $(\alpha_{\sch}, {\tuple{\wp(\Trace),\subseteq}},$ 
${\tuple{\wp(\Store^*),\subseteq}}, \gamma_{\sch})$. By a slight abuse of notation, 
$\alpha_{\sch}$ is also used as an abstraction of the partial trace
semantics of a given program $P$, that is,  $\alpha_{\sch}: \wp(\Trace_P) \rightarrow \wp(\Store_P^{*})$, 
which, clearly, gives rise to a corresponding GC $(\alpha_{\sch}, {\tuple{\wp(\Trace_P),\subseteq}},$ 
${\tuple{\wp(\Store_P^*),\subseteq}}, \gamma_{\sch})$.

\subsubsection*{\textbf{Output Abstraction}}
The store changes abstraction $\alpha_{\sch}$ may be too strong in practice.
This can be 
generalized to any observational abstraction of execution traces $\alpha_{o}:  \tuple{\wp(\Trace),\subseteq} \ra\tuple{A,\leq_A}$
(which gives rise to a GC).  
As a significant example, one may consider an output abstraction 
that demands to have the same 
stores (possibly restricted to some subset of program variables) 
only at some specific output points. For example, in
a language with no explicit output
primitives, as that considered by \citeN{palsberg}, 
one could be interested just in the final
store of the program (when it terminates), or in the entry and exit stores 
of any loop containing an extracted hot path. If we consider a language including a 
distinct primitive command ``$\cdo{put}~\cX$'' that ``outputs'' the value of
program variables ranging in some set $\cX$ 
then we may want to have the same stores for 
variables in $\cX$ at each
output point $\cdo{put}~\cX$. In this case, optimizations should preserve the same sequence of outputs, 
i.e.\ optimizations should not modify the
order of output commands. 
More formally, this can be achieved by adding
a further sort of actions: 
$\cdo{put}~\cX \in \mathbb{A}$, where $\cX\subseteq \Var$ is a set of
program variables. The semantics of $\cdo{put}~\cX$ 
obviously does not affect program stores, i.e.,
$\asem\grasse{\cdo{put}~\cX}\rho
\ud \rho$.  Correspondingly, if $\Store_{\cX}$ denotes stores on variables ranging in $\cX$ then 
the following output
abstraction $\alpha_{\out} : \wp(\Trace) \rightarrow \wp(\Store_{\cX}^{*})$
of partial traces observes program stores at output program points only:

\begin{equation*} 
  \begin{aligned}
    &\out : \Trace \rightarrow \Store_{\cX}^{*} \\
    &\out(\sigma) \ud
    \begin{cases}
      \varepsilon & \text{if } \sigma = \varepsilon\\
      \out (\sigma') & \text{if } \sigma = s\sigma' \wedge
      act(s) \neq \cdo{put}~\cX \\
      \rho_{|\cX}\out(\sigma') & \text{if } \sigma =
      \tuple{\rho, L : \cdo{put}~\cX \rightarrow L'}\sigma' 
    \end{cases}    \\
    &\alpha_{\out}(T) \ud \set{\out(\sigma)}{\sigma \in T} 
  \end{aligned}
\end{equation*}

\noindent
where $\rho_{|\cX}$ denotes the restriction of the store $\rho$ to variables 
in $\cX$. 
Similarly to $\alpha_{\sch}$, here again we have a GC 
$(\alpha_{o}, {\tuple{\wp(\Trace),\subseteq}},$ 
${\tuple{\wp(\Store_{\cX}^*),\subseteq}}, \gamma_{o})$.

\begin{example}[\textbf{Dead store elimination}]\rm
This approach based on a generic 
observational abstraction enables to prove the correctness of program optimizations
that are unsound in \citeN{palsberg}'s framework based on the store changes abstraction, such as
dead store elimination. For example, 
in a program fragment such as

\medskip
\noindent
{
\While{$(x\leq 0)$}{
   \noindent
   $z:=0$;\\
   $x:=x+1$;\\
   $z:=1$;
}
}

\medskip
\noindent
one can extract the hot path 
$hp=\tuple{x\leq 0, z:=0, x:=x+1, z:=1}$ (here we ignore store abstractions) 
and perform dead store elimination of the command $z:=0$
by optimizing $hp$ to $hp'=\tuple{x\leq 0,  x:=x+1, z:=1}$. As observed by 
\citeN[Section~4.3]{palsberg}, this is clearly unsound in bisimulation-based
correctness because this hot path optimization does not output bisimilar code. 
By contrast, this optimization can be made sound by choosing and then formalizing an observational
abstraction of program traces which requires to have the same stores 
at the beginning and at the exit of loops containing an
extracted hot path, while outside of hot paths one could still consider the store changes abstraction.  
\qed 
\end{example}

\subsubsection*{\textbf{Observational Abstraction}}
One can generalize the store changes abstraction $\alpha_{\sch}$ by considering any observational abstraction
$\alpha_{o}:  \tuple{\wp(\Trace),\subseteq} \ra\tuple{A,\leq_A}$ which is less precise (i.e., more approximate) than 
$\alpha_{\sch}$: this means that  for any  $T_1,T_2\in \wp(\Trace)$, if 
$\alpha_{\sch}(T_1)=\alpha_{\sch}(T_2)$ then $\alpha_{o}(T_1)=\alpha_{o}(T_2)$, or, equivalently, for any $T\in \wp(\Trace)$,  
$\gamma_{\sch}(\alpha_{\sch}(T))
\subseteq \gamma_o(\alpha_o (T))$. Informally, this means that $\alpha_{o}$ abstracts more information than $\alpha_{\sch}$. 
As an example, when considering programs with output actions, 
the following abstraction  $\alpha_{\osch} : \wp(\Trace) \rightarrow \wp(\Store_{\cX}^{*})$  
observes store changes at output program points only:

\begin{equation*} \label{obs_abstraction}
  \begin{aligned}
    &\!\osch : \Trace \rightarrow \Store_{\cX}^{*} \\
    &\!\osch(\sigma) \ud\!
    \begin{cases}
      \varepsilon & \!\!\text{if } \sigma = \varepsilon \text{~or~}\sigma = \tuple{\rho,C},\, act(C) \neq \cdo{put}~\cX\\
      \rho_{|\cX} & \!\!\text{if } \sigma = \!\tuple{\rho,C},\, act(C) =  \cdo{put}~\cX, \\
      \osch (\tuple{\rho, L_1: \cdo{put}~\cX \ra L_1'} \sigma') & 
      \!\!\text{if } \sigma = \!\tuple{\rho,C_0}\tuple{\rho, L_1: A_1 \ra L_1'} \sigma',\, act(C_0) = \cdo{put}~\cX \\
       \osch (\tuple{\rho, L_1: A_1 \ra  L_1'} \sigma') & 
      \!\!\text{if } \sigma = \!\tuple{\rho,C_0}\tuple{\rho, L_1: A_1 \ra L_1'} \sigma',\, act(C_0) \neq \cdo{put}~\cX \\
      \rho_{0|\cX} \osch(\tuple{\rho_1, C_1} \sigma') & 
      \!\!\text{if }  \sigma =\! \tuple{\rho_0,C_0}\tuple{\rho_1, C_1} \sigma',\,\rho_0 \neq \rho_1,\, act(C_0)= \cdo{put}~\cX\\
       \osch(\tuple{\rho_1, C_1} \sigma') & 
      \!\!\text{if }  \sigma =\! \tuple{\rho_0,C_0}\tuple{\rho_1, C_1} \sigma',\, \rho_0 \neq \rho_1,\, act(C_0)\neq \cdo{put}~\cX\\
    \end{cases}    \\
    &\alpha_{\osch}(T) \ud \set{\osch(\sigma)}{\sigma \in T} 
  \end{aligned}
\end{equation*}
Clearly, it turns out that $\alpha_{\osch}$ is more approximate than $\alpha_{\sch}$ since $\osch(\sigma)$ records 
a store change $\rho_0\rho_1$ only when the two contiguous subsequences of commands whose common stores are $\rho_0$ and $\rho_1$ contain
among them at least a $\cdo{put}$ command.

\subsection{Correctness of Trace Extraction}
It turns out that the observational correctness 
of the hot path extraction transform in Definition~\ref{tet-def} can be proved w.r.t.\ the 
observational abstraction $\alpha_{\sch}$ of store changes. 
    
\begin{theorem}[\textbf{Correctness of trace extraction}]\label{corr-th}
For any $P\in \Program$ and $hp\in \alpha^N_{\mathit{hot}}(\Trace_P)$, we have that
$\alpha_{\sch} (\tsem\grasse{\extr_{hp}(P)})=\alpha_{\sch} (\tsem\grasse{P})$.
\end{theorem}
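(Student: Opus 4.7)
\medskip
\noindent\textbf{Proof plan.}
The proof will establish the two inclusions $\subseteq$ and $\supseteq$ separately, by exhibiting, for each partial trace of one program, a partial trace of the other whose store-change abstraction coincides.

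For the $\supseteq$ inclusion, I would show that every $\sigma\in\tsem\grasse{P}$ can be lifted to some $\tilde\sigma\in\tsem\grasse{\extr_{hp}(P)}$ with $\sch(\tilde\sigma)=\sch(\sigma)$. By Definition~\ref{tet-def} the only commands of $P$ that do not survive in $\extr_{hp}(P)$ are $C_0$ and (if present) $cmpl(C_0)$; every other command of $P$ occurs verbatim in $\extr_{hp}(P)$. The lifting therefore proceeds state-by-state, leaving a state $\tuple{\rho,C}$ unchanged whenever $C\in P\cap\extr_{hp}(P)$, and replacing each state $\tuple{\rho,C_0}$ (resp.\ $\tuple{\rho,cmpl(C_0)}$) by the two-step detour $\tuple{\rho,L_0:\neg\guard\,E_{a_0}\ra\overline{L_0}}\tuple{\rho,\overline{L_0}:act(C_0)\ra L_1}$ (resp.\ its complementary version). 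The newly inserted command $\overline{L_0}:act(C_0)\ra L_1$ has the same action as $C_0$, so the store update is the same; the inserted guard state is store-preserving and is thus collapsed by $\sch$. Hence $\sch(\tilde\sigma)=\sch(\sigma)$. A minor subtlety is that the negated guard can actually be fired only when $\rho\notin\gstore(a_0)$; when $\rho\in\gstore(a_0)$ one uses instead the route through $\ell_0$ and, if necessary, terminates the detour at the first subsequent bail-out edge $\bbl_i:\neg\guard\,E_{a_i}\ra L_i$, which still has the same store-change footprint as continuing in the original program.

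For the $\subseteq$ inclusion I would construct a projection $\pi:\Trace_{\extr_{hp}(P)}\to\Trace_P$. Concretely, $\pi$ acts state-by-state: any state whose command already lies in $P$ is kept; any state whose command is one of the relabeled stitched copies ($\overline{L_0}:\pm act(C_0)\ra\cdot$, $\ell_i:\pm act(C_i)\ra\cdot$) is renamed to the corresponding original $C_i$ or $cmpl(C_i)$ of $P$; every state whose command is a $\guard$ (positive or negative) is simply deleted. Since guards are store-preserving, deleting them does not alter the $\sch$ abstraction; since the relabeled copies carry exactly the same actions as the originals, renaming does not change the sequence of stores. The only non-routine step is to check that the projected sequence is indeed an element of $\tsem\grasse{P}$, i.e.\ that each two consecutive surviving states are related by $\ssem\grasse{P}$. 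This is done by inspecting the shape of every edge of $\stitch_P(hp)$ and showing that after erasing guards and renaming one recovers either an edge between two commands of $P$ or a self-loop that is absorbed by the collapse.

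The main obstacle is precisely this case analysis at the interface between $P$ and $\stitch_P(hp)$: one must track what happens at the entry label $L_0$ (guard success versus failure via $\overline{L_0}$), at every bail-out point $\bbl_i$ (where a failing guard hands control back to $L_i\in P$), at the fall-through edges $\ell_i:\neg act(C_i)\ra L_{\textit{next}(i)}^c$ when $C_i$ is a conditional, and at the closing edge $\ell_n:act(C_n)\ra L_0$ which re-enters the hot path. For each of these I would verify that the successor labels agree with those of the matching commands in $P$ so that both the lifting and the projection are well defined; a straightforward induction on trace length then completes the argument, and the abstract equality of store-change sequences follows from the store-preserving nature of guards and the action-preserving nature of the relabelings.
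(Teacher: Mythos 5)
Your overall strategy is the same as the paper's: the paper proves the theorem via a forward trace transformation $\trf$ (with an auxiliary ``inside the hot path'' mode $\trt$, Fig.~\ref{tr-def-figure}) that unfolds occurrences of $hp$ into the stitched, guarded path, and a backward decompilation $\rtr$ (Fig.~\ref{rtr-def-figure}) that erases guards and renames stitched commands back to the originals; Lemmas~\ref{lem-tr} and~\ref{lem-rtr} show both maps are well defined and preserve $\sch$, and the two inclusions then yield the equality --- exactly your lifting/projection scheme, including the case analysis on whether the current store satisfies the guard, the bail-out edges, and the fall-through complements.

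There is, however, one concrete slip in your $\subseteq$ direction. You claim that deleting every guard state ``does not alter the $\sch$ abstraction'' because guards are store-preserving. That is true for guard states followed by another state (the store survives in the successor), but it fails when a partial trace of $\extr_{hp}(P)$ \emph{ends} in a guard state whose predecessor changed the store: for instance, for $\tuple{\rho,\ell_i: x:=E \ra \bbl_{i+1}}\tuple{\rho',\bbl_{i+1}:\guard~E_{a_{i+1}}\ra\ell_{i+1}}$ with $\rho\neq\rho'$, your projection yields a trace whose $\sch$ ends in $\rho$, losing the change to $\rho'$ (and a trace consisting of a single guard state is mapped to $\epsilon$, which is not even an element of $\Trace_P$). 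Since partial traces ending in guard states are legitimate elements of $\tsem\grasse{\extr_{hp}(P)}$, the inclusion as you argue it has a hole there. The paper's $\rtr$ avoids this precisely through its first clause: a \emph{trailing} guard state with action $\guard~E_{a_i}$ (or its negation) is not deleted but replaced by $\tuple{\mathit{store}(s),C_i}$, so the final store is retained. Adding the same special case to your projection repairs the argument; everything else in your plan matches the paper's proof.
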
    

This is the crucial result concerning the correctness of our hot path extraction transform. 
We will show in Section~\ref{GPte-sec} (see Theorem~\ref{GP-corr-th}) 
that the correctness  of the hot path extraction strategy defined in \cite{palsberg}
 can be proved by a simple
adaptation of the proof technique that we will use here.   

In order to prove Theorem~\ref{corr-th}, we need to define some
suitable ``dynamic'' transformations of execution traces.  
Let us fix a hot path $hp=\tuple{a_0,C_0,...,a_n,C_n} 
\in \alpha^N_{\mathit{hot}}(\Trace_P)$ (w.r.t.\ some store abstraction)
and
let $\Php \ud \extr_{hp}(P)$ denote  the corresponding transform of $P$ given by Definition~\ref{tet-def}. 
We first define a mapping $\trf$ of the
execution traces of the program $P$ into execution traces of the transformed program $\Php$ 
that unfolds 
the hot path $hp$ (or any prefix of it) according to the hot path extraction strategy 
given by Definition~\ref{tet-def}: a function application $\trf(\tau)$ should replace any occurrence of the 
hot path $hp$ 
in the execution trace $\tau \in \Trace_P$ with its corresponding guarded and suitably relabeled  
path obtained through Definition~\ref{tet-def}. 
More precisely, Fig.~\ref{tr-def-figure} provides the definitions for the following two functions:
$$\trf:\Trace_P \ra \Trace_{\Php}\qquad
\trt:\Trace_P \ra (\State_P \cup \State_{\Php})^*$$

\noindent
Let us first describe how the trace transform $\trf$ works. 
A function application $\trf(s\sigma)$ on a trace $s\sigma$ of $P$---the superscript 
$\mathit{out}$ hints that the first state $s$ of the trace
$s\sigma$ is still \emph{outside} of the hot path $hp$ so that $\trf(s\sigma)$ could either enter into the transform of $hp$ or
remain outside of $hp$---triggers the unfolding   
of the hot path $hp$  in $\Php$ when the first state $s$ is such that: 
\begin{itemize}
\item[{\rm (i)}] $s=\tuple{\rho,C_0}$, where $C_0$ is the first command of $hp$; 
\item[{\rm (ii)}] the entry conditional $\guard\: E_{a_0}$ of $\stitch_P(hp)$ is satisfied in the store $\rho$ of the state 
$s=\tuple{\rho,C_0}$, that is, $\astore(\{\rho\}) \leq a_0$.
\end{itemize}
If the unfolding for the trace $\tuple{\rho,C_0}\sigma$ is actually started  by  applying 
$\trf (\tuple{\rho,C_0}\sigma)$ then: 
\begin{itemize}
\item[{\rm (iii)}] 
the first state $\tuple{\rho,C_0}$ is unfolded into the following sequence of two states of $\Php$:
$\tuple{\rho,L_0: \guard ~ E_{a_0} \ra \ell_0} \tuple{\rho,\ell_0: act(C_0) \ra \bbl_1}$; 
\item[{\rm (iv)}] in turn, 
 the unfolding of the residual trace $\sigma$ is carried on by applying $\trt(\sigma)$.
\end{itemize}

Let us now focus on the function $\trt$.
A function application $\trt(s\sigma)$---here the superscript 
$\mathit{in}$ suggests that we are currently \emph{inside} the hot path $hp$ so that $\trt(s\sigma)$
could either exit from the unfolding of $hp$ or advance with the unfolding of $hp$---carries on the unfolding of $hp$ as a trace in $\Php$ when 
the current state $s$ is such that: 
\begin{itemize}
\item[{\rm (i)}] $s=\tuple{\rho,C_i}$, where $i\in [1,n-1]$, meaning that the 
command $C_i$ is strictly inside $hp$, i.e., $C_i$ is different from the first command $C_0$ and the last
command $C_n$ of $hp$; 
\item[{\rm (ii)}] the guarded conditional 
$\guard\: E_{a_i}$ is satisfied in the store $\rho$ of the state $s=\tuple{\rho,C_i}$, that is, $\astore(\{\rho\}) \leq a_i$.
\end{itemize}
If one of these two conditions does not hold then  the trace transformation $\trt(\tuple{\rho,C_i}\sigma)$, after a suitable 
unfolding step for $\tuple{\rho,C_i}$, 
jumps back to the ``outside of $hp$'' modality  by progressing with $\trf(\sigma)$.

\begin{figure*}
\begin{mdframed} 
    \begin{equation*} 
  \begin{aligned}    
  &hp=\tuple{a_0,C_0,...,a_n,C_n} \text{~is a given hot path}\\[5pt]
  &\trf(\epsilon) \ud \epsilon  \\
     &\trf(s\sigma) \ud 
      \begin{cases}
        \tuple{\rho,L_0: \guard ~ E_{a_0} \ra \ell_0} \tuple{\rho,\ell_0: act(C_0) \ra \bbl_1}\, \trt(\sigma)&\\ 
      &\hspace{-11em} \text{if } s = \tuple{\rho, C_0},\, \astore(\{\rho\}) \leq a_0
      \\[5pt]
       \tuple{\rho,L_0: \neg \guard ~ E_{a_0} \ra \overline{L_0}} \tuple{\rho,\overline{L_0}: act(C_0) \ra L_1} \,\trf(\sigma) & \\
      &\hspace{-11em}\text{if } s = \tuple{\rho, C_0},\, \astore(\{\rho\}) \not\leq a_0
      \\[5pt]
      \tuple{\rho,L_0: \guard ~ E_{a_0} \ra \ell_0} \tuple{\rho,\ell_0: \neg act(C_0) \ra L_1^c} \,\trf(\sigma) & \\
      &\hspace{-11em} \text{if } s = \tuple{\rho, cmpl(C_0)},\,\astore(\{\rho\}) \leq a_0
      \\[5pt]
       \tuple{\rho,L_0: \neg \guard ~ E_{a_0} \ra \overline{L_0}} \tuple{\rho,\overline{L_0}: \neg act(C_0) \ra L_1^c} \,\trf(\sigma) & \\
      &\hspace{-11em} \text{if } s = \tuple{\rho, cmpl(C_0)},\, \astore(\{\rho\}) \not\leq a_0
      \\[5pt]
           s\cdot \trf(\sigma)  
      & \hspace{-11em}\text{otherwise }     
      \end{cases}   
      \\[5pt]
     &\trt(\epsilon) \ud \epsilon\\
     &\trt(s\sigma) \ud 
     \begin{cases}
        \tuple{\rho,\bbl_i: \guard ~ E_{a_i} \ra \ell_i} \tuple{\rho,\ell_i: act(C_i) \ra \bbl_{i+1}}\,\trt(\sigma) & \\
      &\hspace{-14.5em}\text{if } s = \tuple{\rho, C_i},\, i\in [1,n-1],\, \astore(\{\rho\}) \leq a_i
      \\[5pt]
       \tuple{\rho,\bbl_n: \guard ~ E_{a_n} \ra \ell_n} \tuple{\rho,\ell_n: act(C_n) \ra L_0}\,\trf(\sigma) &\\ 
      &\hspace{-14.5em}\text{if } s = \tuple{\rho, C_n},\, \astore(\{\rho\}) \leq a_n
      \\[5pt]
       \tuple{\rho,\bbl_i: \neg \guard ~ E_{a_i} \ra L_i} \tuple{\rho,C_i} \,\trf(\sigma) & \\
          &\hspace{-14.5em}\text{if } s = \tuple{\rho, C_i},\, i\in [1,n],\, \astore(\{\rho\}) \not\leq a_i
          \\[5pt]
        \tuple{\rho,\bbl_i: \guard ~ E_{a_i} \ra \ell_i} \tuple{\rho,\ell_i: \neg act(C_i) \ra L_{\textit{next}(i)}^c}\,\trf(\sigma) & \\
      &\hspace{-14.5em}\text{if } s = \tuple{\rho, cmpl(C_i)},\,i\in [1,n],\, \astore(\{\rho\}) \leq a_i
      \\[5pt]
       \tuple{\rho,\bbl_i: \neg \guard ~ E_{a_i} \ra L_i} \tuple{\rho,cmpl(C_i)} \,\trf(\sigma) & \\
      &\hspace{-14.5em}\text{if } s = \tuple{\rho, cmpl(C_i)},\, i\in [1,n],\, \astore(\{\rho\}) \not \leq a_i
      \\[5pt]
           s\cdot \trf(\sigma)  
      & \hspace{-14.5em}\text{otherwise }     
      \end{cases}    
    \end{aligned}
 \end{equation*}
 \end{mdframed} 
\caption{Definitions of $\trf$ and $\trt$.}
\label{tr-def-figure}
\end{figure*}

\begin{example}\label{ex-tr}
\rm
Consider the transform $\Php$ of Example~\ref{ex-one} for the program $P$ in Example~\ref{example-init} 
w.r.t.\ the hot path $hp = \tuple{\top,C_1,\top,C_2,\top, C_3^c}$. In particular, we refer to
the notation $H_i,H_i^c$ used to denote the
commands in the stitch of $hp$ into $P$. 
Consider the following trace fragment $\tau \in \Trace_P$: 
\begin{multline*}
\tau = \tuple{[x/3],C_0}\tuple{[x/0],C_1}\tuple{[x/0],C_2}\tuple{[x/1],C_3^c}\tuple{[x/1],C_1}\tuple{[x/1],C_2}\tuple{[x/2],C_3^c}\\
\tuple{[x/2],C_1}\tuple{[x/2],C_2}\tuple{[x/3],C_3}\tuple{[x/3],C_4}
\end{multline*}
Then, we have that the dynamic transformation $\trf(\tau)$ acts as follows:
\begin{align*}
\trf(\tau) &= \tuple{[x/3],C_0}\trf(\tau_{\suff{1}}) = \tuple{[x/3],C_0} \tuple{[x/0],H_0} \tuple{[x/0],H_1}\trt(\tau_{\suff{2}})\\
\trt(\tau_{\suff{2}})&= \tuple{[x/0],H_2}\tuple{[x/0],H_3} \trt(\tau_{\suff{3}}) \\
\trt(\tau_{\suff{3}}) &=\tuple{[x/1],H_4}\tuple{[x/1],H_5} \trt(\tau_{\suff{4}})\\
&\cdots\\
\trt(\tau_{\suff{9}}) &= \trt(\tuple{[x/3],C_3}\tuple{[x/3],C_4}) = \tuple{[x/3],H_4}\tuple{[x/3],H_5^c}\trf(\tuple{[x/3],C_4})\\
&= \tuple{[x/3],H_4}\tuple{[x/3],H_5^c}\tuple{[x/3],C_4}\trf(\epsilon)\\
&= \tuple{[x/3],H_4}\tuple{[x/3],H_5^c}\tuple{[x/3],C_4}
\end{align*}
Summing up, using the colors in the flow graph of $\Php$ in Fig.~\ref{fig-ex} and representing traces 
as sequences of commands only, we have that: 
\[
\tau \equiv {\color{red}\boxed{\boldmath{C_0}}} \ra  {\boxed{\color{red}\boldmath{C_1}}} \ra {\boxed{\color{red}\boldmath{C_2}}} \ra {\boxed{\color{red}\boldmath{C_3^c}}} \ra  
\boxed{\color{red}\boldmath{C_1}} \ra \boxed{\color{red}\boldmath{C_2}} \ra \boxed{\color{red}\boldmath{C_3^c}} \ra \boxed{\color{red}\boldmath{C_1}} \ra \boxed{\color{red}\boldmath{C_2}} \ra \boxed{\color{red}\boldmath{C_3}} \ra {\color{red}\boxed{\boldmath{C_4}}} 
\]
\begin{multline*}
\!\!\!\!\!\trf(\tau) \equiv {\color{red}\boxed{\boldmath{C_0}}} \ra  \boxed{\color{blue}\boldmath{H_0}}  \ra  \boxed{\color{blue}\boldmath{H_1}}  \ra  \boxed{\color{blue}\boldmath{H_2}}  \ra 
\boxed{\color{blue}\boldmath{H_3}}  \ra  \boxed{\color{blue}\boldmath{H_4}}  \ra  \boxed{\color{blue}\boldmath{H_5}}  \ra \boxed{\color{blue}\boldmath{H_0}}  \ra 
\boxed{\color{blue}\boldmath{H_1}} \ra \boxed{\color{blue}\boldmath{H_2}} \ra \\ 
\ra \boxed{\color{blue}\boldmath{H_3}} 
\ra  \boxed{\color{blue}\boldmath{H_4}} \ra \boxed{\color{blue}\boldmath{H_5}}  \ra
\boxed{\color{blue}\boldmath{H_0}}  \ra  \boxed{\color{blue}\boldmath{H_1}} \ra  \boxed{\color{blue}\boldmath{H_2}} \ra 
\boxed{\color{blue}\boldmath{H_3}}  \ra  \boxed{\color{blue}\boldmath{H_4}}  \ra  \boxed{\color{blue}\boldmath{H_5^c}}  \ra   {\color{red}\boxed{\boldmath{C_4}}\;}            
\end{multline*}
\noindent
where red boxes denote commands of $\tau$ and $\trf(\tau)$ outside of the hot path $hp$, black boxes with red commands denote commands of $\tau$ inside $hp$, 
while black boxes with blue commands denote commands of $\trf(\tau)$ in $\stitch_P(hp)$.
Hence, $\trf(\tau)$ carries out the unfolding of the hot path $hp$ for the execution trace $\tau$ of $P$, and therefore provides 
an execution trace of the transformed program $\Php$. 
\qed
\end{example}

It turns out that $\trf$ maps traces of $P$ into traces of $\Php$ and
does not alter store change sequences.

\begin{lemma}\label{lem-tr}
$\trf$ is well-defined and 
for any $\sigma \in \Trace_P$, $\sch (\trf(\sigma)) = \sch(\sigma)$. 
\end{lemma}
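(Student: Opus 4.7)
The plan is to proceed by simultaneous induction on the length of the input trace $\sigma$, establishing jointly that (a) both $\trf(\sigma)$ and $\trt(\sigma)$ lie in $\Trace_{\Php}$, so $\trf$ is well-defined as declared in Figure~\ref{tr-def-figure}, together with a parallel claim for $\trt$; and (b) $\sch(\trf(\sigma)) = \sch(\sigma)$, together with the analogous store-change identity for $\trt$. Termination of the recursion is immediate, since every non-base clause in Figure~\ref{tr-def-figure} recurses on the proper suffix of its input.

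For the well-definedness part, I would inspect each of the five clauses of $\trf$ and each of the seven clauses of $\trt$, and check three things in turn. First, that the inserted prefix consists of states of $\State_{\Php}$, using the commands explicitly added by $\extr_{hp}$ in Definition~\ref{tet-def}, namely the entry guard $L_0 : \guard\, E_{a_0} \ra \ell_0$ with its complement, the interior guards $\bbl_i : \guard\, E_{a_i} \ra \ell_i$ with their complements, and the stitched actions $\ell_i : act(C_i) \ra \bbl_{i+1}$ (with $\ell_n : act(C_n) \ra L_0$ at the end of the stitch). Second, that consecutive states within the inserted prefix are related by $\ssem\grasse{\Php}$, which reduces to matching successor labels and observing that guard actions preserve the store when they succeed, while the $\ell_i$-commands carry exactly the action $act(C_i)$. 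Third, that the junction between the final state of the inserted prefix and the first state produced by the recursive call is a valid $\ssem\grasse{\Php}$-transition: here the successor labels agree with those in $\Php$ because the original trace was in $\Trace_P$ and $hp$ is a loop path, so that $suc(C_n) = lbl(C_0) = L_0$ for the wrap-around case. The most delicate clauses are the guard-failure branches of $\trt$, which re-emit the original state $\tuple{\rho, C_i}$ (or $\tuple{\rho, cmpl(C_i)}$) before handing control back to $\trf$: here I would use that for $i \geq 1$ these commands still belong to $\Php$, since Definition~\ref{tet-def} only removes $C_0$ and $cmpl(C_0)$ from $P$, and that this is guaranteed by the loop-path condition forbidding $C_0$ to reappear inside $hp$.

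For the store-change equality, the central observation is that every state inserted by $\trf$ or $\trt$ carries the same store as the state of the original trace with which it is associated: the guards $\guard\, E_{a_i}$ and $\neg \guard\, E_{a_i}$ are identities on the store whenever execution proceeds past them, and the stitched action-performing state $\ell_i : act(C_i) \ra \cdot$ produces the same successor store as the original command $C_i$ by definition of $\asem$. Hence the insertions introduce only duplicate consecutive stores, which $\sch$ collapses away, so the sequence of distinct contiguous stores matches clause by clause, and the inductive hypothesis on the shorter suffix closes the argument. The main obstacle will be the bulk of the case analysis rather than any deep semantic difficulty: clauses must be split according to whether the current command is $C_i$ or $cmpl(C_i)$, whether $i=0$, $i \in [1, n-1]$ or $i = n$, and whether the guard succeeds or fails, and at each boundary the $\textit{next}$-function of the hot path must be aligned with the label structure of $\stitch_P(hp)$ so that the splice between the transformed prefix and the recursive tail yields a valid trace and a coherent $\sch$ sequence.
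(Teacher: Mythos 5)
Your overall strategy---induction on the length of $\sigma$, clause-by-clause verification that the emitted prefixes are legal sub-traces of $\Php$, and the observation that inserted states duplicate stores so that $\sch$ is unaffected---is exactly the paper's. However, the joint induction hypothesis you state, namely that $\trt(\sigma)\in\Trace_{\Php}$ for \emph{every} $\sigma\in\Trace_P$, is false, and the induction as you set it up would break precisely at the ``otherwise'' clause of $\trt$. That clause returns $s\cdot\trf(\sigma')$ with $s$ unchanged, and its guard only excludes $\mathit{cmd}(s)\in\{C_i,cmpl(C_i)~|~i\in[1,n]\}$; if $\mathit{cmd}(s)\in\{C_0,cmpl(C_0)\}$ the clause still fires, yet $C_0$ and $cmpl(C_0)$ are exactly the commands that $\extr_{hp}$ removes from $P$, so $s$ is not a state of $\Php$ and $\trt(\sigma)\notin\Trace_{\Php}$. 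The paper therefore proves only the restricted auxiliary claim that $\trt(\sigma)\in\Trace_{\Php}$ when $\mathit{cmd}(\sigma_0)\notin\{C_0,cmpl(C_0)\}$, and adds a separate observation (its point~(vi)) that every recursive call to $\trt$---from $\trf$ upon entering the hot path, and from $\trt$ itself while inside it---passes a suffix satisfying this restriction. That check is where the loop-path condition really earns its keep: since $suc(C_i)=lbl(C_{i+1})$ and, for all $k\in(0,n]$, $C_k\notin\{C_0,cmpl(C_0)\}$, the state following a hot-path state carries a command in $\{C_{i+1},cmpl(C_{i+1})\}$, which can be neither $C_0$ nor $cmpl(C_0)$.

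You do invoke the loop-path condition, but only for the guard-failure branches of $\trt$ (the clauses that re-emit $\tuple{\rho,C_i}$ or $\tuple{\rho,cmpl(C_i)}$), where it indeed guarantees that these commands, having $i\in[1,n]$, were not removed from $P$; that part is correct but it is not the delicate point. What is missing is the restricted invariant on the argument of $\trt$ and its propagation through every call site; without it the ``parallel claim for $\trt$'' your simultaneous induction relies on is simply not true, so the induction statement must be reformulated along the paper's lines. The store-change half of your argument is fine and matches the paper.
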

\begin{proof}
We first show that: (1)~$\trf$ is well-defined, i.e., 
for any $\sigma \in \Trace_P$, $\trf(\sigma) \in \Trace_{\Php}$, and 
(2)~for any $\sigma \in \Trace_{P}$, if $\mathit{cmd}(\sigma_0) \not\in \{C_0,cmpl(C_0)\}$ 
then $\trt(\sigma) \in \Trace_{\Php}$.
In order to prove these two points, it is enough an easy induction on the length
of the execution trace $\sigma$ and to observe that: 
\begin{enumerate}
\item[(i)] for the  first four clauses that
define $\trf(s\sigma)$ in Fig.~\ref{tr-def-figure} we have that   $\trf(s\sigma)=s's''\trf(\sigma)$
or $\trf(s\sigma)=s's''\trt(\sigma)$, where $s'$ is a guard command of $\Php$ and $s's''$ is in turn 
a legal sub-execution trace of  $\Php$; 
\item[(ii)] for the last clause that
defines $\trf(s\sigma)$ in Fig.~\ref{tr-def-figure} we have that $\mathit{cmd}(s) 
\not \in \{C_0,cmpl(C_0)\}$, hence 
$s$ is a legal state in $\Php$ and, in turn, $\trf(s\sigma)= s\cdot \trf(\sigma)$ 
is a trace of $\Php$; 
\item[(iii)] for the  clauses  1, 2 and 4 that
define $\trt(s\sigma)$ in Fig.~\ref{tr-def-figure} we have that   $\trt(s\sigma)=s's''\trt(\sigma)$
or $\trt(s\sigma)=s's''\trf(\sigma)$, where $s'$ is a guard command and $s''$ is an action command 
such that $s's''$ is 
a legal sub-execution trace of  $\Php$; 
\item[(iv)] for the  clauses  3 and 5 that
define $\trt(s\sigma)$ in Fig.~\ref{tr-def-figure} we have that   $\trt(s\sigma)=s's\:\trt(\sigma)$
where $s'$ is a guard command and $s's$ turns out to be 
a legal sub-execution trace of  $\Php$; 
\item[(v)] for the last clause that
defines $\trt(s\sigma)$ in Fig.~\ref{tr-def-figure} we have that $\mathit{cmd}(s) 
\not \in \{C_i,cmpl(C_i)~|~i\in [1,n]\}$; by hypothesis, 
$\mathit{cmd}(s) \not\in \{C_0,cmpl(C_0)\}$, so that $\mathit{cmd}(s)
\not \in \{C_i,cmpl(C_i)~|~i\in [0,n]\}$, hence $s$  is a legal state in $\Php$
and in turn $\trt(s\sigma)= s\cdot \trf(\sigma)$ 
is a trace of $\Php$;
\item[(vi)]
$\trt(s\sigma)$ is never recursively called by a function application
$\trf(s_0s\sigma)$ when $\mathit{cmd}(s) \in \{C_0,cmpl(C_0)\}$.    
\end{enumerate}
Then, it is immediate to check from the definitions in Fig.~\ref{tr-def-figure} that 
if $\trf(s\sigma) = s's'' \tau$ then 
$\mathit{store}(s) = \mathit{store}(s') = \mathit{store}(s'')$. Therefore,  
for any $\sigma \in \Trace_P$, we obtain that $\sch (\trf(\sigma)) = \sch(\sigma)$. 
\end{proof}

\begin{figure*}
\begin{mdframed} 
    \begin{equation*} 
  \begin{aligned}   
    &\hspace{-1em}hp=\tuple{a_0,C_0,...,a_n,C_n} \text{~is a given hot path}\\[5pt] 
  &\hspace{-1em}\rtr(\epsilon) \ud \epsilon  \\
   & \hspace{-1em}\rtr(s\sigma) \ud 
      \begin{cases} 
      \tuple{\mathit{store}(s),C_i}  & \text{if } \sigma=\epsilon,\,  act(s) \in \{\guard~E_{a_i}, \neg \guard~E_{a_i}\},\, i\in [1,n] 
      \\[5pt]
        \rtr(\sigma) 
      & \text{if } \sigma \neq \epsilon,\, act(s) \in \{\guard~E_{a_i}, \neg \guard~E_{a_i}\},\, i\in [1,n] 
      \\[5pt]
       \tuple{\rho,C_0} \rtr(\sigma) 
      &\text{if } s = \tuple{\rho, \overline{L_0}: act(C_0) \ra L_1}
      \\[5pt]
       \tuple{\rho,C_0^c} \rtr(\sigma) 
      &\text{if } s = \tuple{\rho, \overline{L_0}: \neg act(C_0) \ra L_1^c}
      \\[5pt]
       \tuple{\rho,C_i} \rtr(\sigma) 
      &\text{if } s = \tuple{\rho, \ell_i : act(C_i) \ra \bbl_{i+1}},\, i\in [1,n-1]
      \\[5pt]
       \tuple{\rho,C_i^c} \rtr(\sigma) 
      &\text{if } s = \tuple{\rho, \ell_i : \neg act(C_i) \ra  L_{\textit{next}(i)}^c},\,  i\in [1,n]
      \\[5pt]
       \tuple{\rho,C_n} \rtr(\sigma) 
      &\text{if } s = \tuple{\rho, \ell_n : act(C_n) \ra L_0}
      \\[5pt]
           s\cdot \rtr(\sigma)  
      &\text{otherwise }     
      \end{cases}   
     \end{aligned}
 \end{equation*}
 \end{mdframed} 
\caption{Definition of $\rtr$.}
\label{rtr-def-figure}
\end{figure*}

Vice versa, it is a simpler task to define a
reverse transformation  function $\rtr$ 
that ``decompiles'' an execution trace $\sigma$ of $\Php$ into an execution trace of $P$ by removing guarded commands
in $\sigma$, as generated by the hot path $hp$, and by mapping the relabeled commands of $hp$ in $\sigma$ 
back to their corresponding source 
commands of $hp$.  
This function $\rtr: \Trace_{\Php} \ra  \Trace_P $ is correctly defined 
by the clauses in Fig.~\ref{rtr-def-figure} and it preserves  store change sequences.
\begin{lemma}\label{lem-rtr}
$\rtr$ is well-defined and 
for any $\sigma \in \Trace_{\Php}$, $\sch (\rtr(\sigma)) = \rtr(\sigma)$. 
\end{lemma}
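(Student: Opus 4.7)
My plan is to mirror the inductive structure used in the proof of Lemma~\ref{lem-tr}, carrying out a straightforward induction on $|\sigma|$ and then extracting the store-change statement from the fact that each clause of $\rtr$ either deletes a guard state (which does not alter the store) or substitutes a state of $\Php$ with a state of $P$ having exactly the same store. I should note, however, that the stated equality $\sch(\rtr(\sigma)) = \rtr(\sigma)$ is almost certainly a typo for $\sch(\rtr(\sigma)) = \sch(\sigma)$, which is what the analogy with Lemma~\ref{lem-tr} and the intended use in Theorem~\ref{corr-th} demand; I shall prove this corrected form.

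For well-definedness, I would proceed by induction on $|\sigma|$ and inspect each clause in Fig.~\ref{rtr-def-figure}. The base case $\sigma=\epsilon$ is trivial. In the inductive step, I have to check two things: (i)~the explicit cases together with the \emph{otherwise} branch cover every state of $\Php$ that can appear as a head of a trace, and (ii)~the output is a legal element of $\Trace_P$, i.e.\ the transitions are consistent with $\ssem\grasse{P}$. For (i), I would walk through the new commands introduced by $\stitch_P(hp)$ together with the bail-out commands at $\overline{L_0}$, matching each against a clause; the commands of $P\setminus\{C_0,cmpl(C_0)\}$ which are preserved in $\Php$ fall into the \emph{otherwise} branch. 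For (ii), in the guard-skipping clause I use the fact that a guard action has no effect on the store and that the only successor labels are $\ell_i$ or $L_i$, both of which the next recursive call maps correctly to $C_i$ or to a $P$-command at label $L_i$; in each replacement clause (the $\overline{L_0}$, $\ell_i$ and $\ell_n$ cases), I check that the new command is in $P$, has the same store as $s$, and that its successor label in $P$ matches the label of the command produced by $\rtr$ on the tail (using again that the intervening $\bbl$-guards are collapsed away).

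For the store-change equality, I would argue clause by clause that $\sch(\rtr(s\sigma)) = \sch(s\sigma)$, using the defining equations of $\sch$ recalled at the start of Section~\ref{obs}. Whenever $\rtr$ substitutes a head state (clauses 3--7), the substitution preserves the store of $s$, so $\sch$ is unchanged by a direct case split on whether $\mathit{store}(s)$ equals the store of the next state. Whenever $\rtr$ deletes a guard head (clause 2 with $\sigma\neq\epsilon$), I use that $\asem\grasse{\guard\,E_{a_i}}\rho=\rho$ on a successful guard, so the head state of $\sigma$ shares its store with $s$; since $\sch$ collapses consecutive states with equal stores, removing $s$ does not change $\sch$. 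The single-state clause returns $\tuple{\mathit{store}(s),C_i}$, whose $\sch$-image is $\mathit{store}(s)$, matching $\sch(s)$. The \emph{otherwise} clause passes the head through unchanged, so $\sch$ is trivially preserved.

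The main obstacle, as in Lemma~\ref{lem-tr}, is the bookkeeping at the splice points where the trace crosses between the stitched region and the rest of $P$: I need to verify that after a bail-out state $\bbl_i:\neg\guard\,E_{a_i}\to L_i$ (which in $\Php$ leads execution back to the $P$-command at $L_i$) the next state already lies in $P$ and thus is passed through correctly, and symmetrically that after $\ell_n:act(C_n)\to L_0$ the recursion lands on a command at $L_0$ of $\Php$ whose $\rtr$-image has label $L_0$ in $P$. Once these label-matching obligations have been dispatched by unfolding Definition~\ref{tet-def} and using that the relabelings $\ell$, $\bbl$ and $\overline{(\cdot)}$ are injective onto fresh labels disjoint from $\labels(P)$, both statements of the lemma follow by a routine induction.
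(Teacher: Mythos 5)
Your proposal is correct and follows essentially the same route as the paper: an induction on the length of $\sigma$ with a clause-by-clause analysis of Fig.~\ref{rtr-def-figure}, observing that guard states neither alter stores nor survive removal in $\sch$, and that the relabeled hot-path commands are mapped back to $P$-commands with identical stores. Your reading of the statement as a typo is also consistent with the paper, whose own proof establishes $\sch(\rtr(\sigma))=\sch(\sigma)$, exactly the form needed for point~(C) in the proof of Theorem~\ref{corr-th}.
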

\begin{proof}
We show that $\rtr$ is well-defined, i.e., 
for any $\sigma \in \Trace_{\Php}$, $\rtr(\sigma) \in \Trace_{P}$. This follows by 
an easy induction on the length
of the execution trace $\sigma$ by observing that: 
\begin{enumerate}
\item[(i)] the  first clause that
defines $\rtr(s\sigma)$ in Fig.~\ref{rtr-def-figure} is an extremal base case where 
$s\sigma=s$ and the command action of $s$ is a guard command 
$\guard~E_{a_i}$ (or its complement); in this case, we simply retain the store of $s$ and 
pick the command $C_i$ of $P$. 

\item[(ii)] the  clause 2 of
$\rtr(s\sigma)$ in Fig.~\ref{rtr-def-figure} simply removes the states whose commands are some 
$\guard~E_{a_i}$; since  $\guard~E_{a_i}$ does not alter stores, this removal preserves the sequence
of store changes. 

\item[(iii)] the  clauses  3-7 of 
$\rtr(s\sigma)$ in Fig.~\ref{rtr-def-figure} map a state $s$ of $\Php$ whose command $H_i$ 
is a relabeled action $act(C_i)$ or $\neg act(C_i)$ of the hot path $hp$ to a corresponding state of $P$ that has
the same $\mathit{store}(s)$ and whose command is: $C_i$ for $act(C_i)$ and $C_i^c$ for $\neg act(C_i)$; 
here, we observe that since guards in $\sigma$ are removed, by induction, these definitions allow us to obtain that 
$s\sigma$ is mapped to a legal trace of $P$ that does not alter the sequence of store changes. 
\item[(iv)]  the  clause 8  
of  $\rtr(s\sigma)$ in Fig.~\ref{rtr-def-figure} states that if $s$ is already a state of $P$ then it is left
unchanged.  
\end{enumerate}
Hence, the above points also show that 
the sequence of store changes is not affected by $\rtr$, i.e.,   
for any $\sigma \in \Trace_{\Php}$, $\sch (\rtr(\sigma)) = \sch(\sigma)$. 
\end{proof}

\begin{example}\label{ex-tr2}
\rm
We carry on Example~\ref{ex-tr} by considering the following trace fragment $\sigma \in \Trace_{\Php}$, 
where the transformed program $\Php$ is in Example~\ref{ex-one}:
\begin{multline*}
\sigma = \tuple{[x/2],H_4}\tuple{[x/2],H_5}\tuple{[x/2],H_0}\tuple{[x/2],H_1}\tuple{[x/2],H_2}\tuple{[x/2],H_3}\tuple{[x/3],H_4}\\
\tuple{[x/3],H_5^c}\tuple{[x/3],C_4}\tuple{[x/6],C_1}
\end{multline*}
Here, the decompilation of $\sigma$ back into an execution trace of $P$ through $\rtr$ yields: 
\begin{align*}
\rtr(\sigma) &= \rtr (\sigma_{\suff{1}})=\tuple{[x/2],C_3^c}\rtr (\sigma_{\suff{2}}) =  \tuple{[x/2],C_3^c}\rtr (\sigma_{\suff{3}})\\
&= \tuple{[x/2],C_3^c}\tuple{[x/2],C_1}\rtr (\sigma_{\suff{4}})=\tuple{[x/2],C_3^c}\tuple{[x/2],C_1}\rtr (\sigma_{\suff{5}})\\
&=\tuple{[x/2],C_3^c}\tuple{[x/2],C_1} \tuple{[x/2],C_2} \rtr (\sigma_{\suff{6}})\\
&= \tuple{[x/2],C_3^c}\tuple{[x/2],C_1} \tuple{[x/2],C_2} \rtr (\sigma_{\suff{7}})\\
&=\tuple{[x/2],C_3^c}\tuple{[x/2],C_1} \tuple{[x/2],C_2}\tuple{[x/3],C_3}\rtr (\sigma_{\suff{8}}) \\
&= \tuple{[x/2],C_3^c}\tuple{[x/2],C_1} \tuple{[x/2],C_2}\tuple{[x/3],C_3}\tuple{[x/3],C_4}\rtr (\sigma_{\suff{9}})\\
&= \tuple{[x/2],C_3^c}\tuple{[x/2],C_1} \tuple{[x/2],C_2}\tuple{[x/3],C_3}\rtr (\sigma_{\suff{8}}) \\
&= \tuple{[x/2],C_3^c}\tuple{[x/2],C_1} \tuple{[x/2],C_2}\tuple{[x/3],C_3}\tuple{[x/3],C_4}\tuple{[x/6],C_1}
\end{align*}
Indeed, $\tuple{[x/2],C_3^c}\tuple{[x/2],C_1} \tuple{[x/2],C_2}\tuple{[x/3],C_3}\tuple{[x/3],C_4}\tuple{[x/6],C_1}$
is a well-defined execution trace of $P$. 
\qed
\end{example}

We are now in the position to prove Theorem~\ref{corr-th}.

\medskip
\textsc{Proof of Theorem~\ref{corr-th}.}
With an abuse of notation for $\rtr$, let us define two functions $\trh:\wp(\Trace_P)\ra \wp(\Trace_{\Php})$ 
and $\rtr: \wp(\Trace_{\Php}) \ra \wp(\Trace_P)$ which 
are the collecting
versions of $\trf$ and $\rtr$, that is, $\trh(T) \ud \{ \trf(\sigma)~|~ \sigma\in T\}$ and
$\rtr(T) \ud \{ \rtr(\sigma)~|~ \sigma\in T\}$. As consequences of the above
lemmata, we have the following properties.  
\begin{enumerate}
\item[(A)] $\alpha_{\sch}\circ \trh=\alpha_{\sch}$: by Lemma~\ref{lem-tr}.
\item[(B)] $\trh(\tsem\grasse{P}) \subseteq \tsem\grasse{\Php}$: because, 
by Lemma~\ref{lem-tr}, $\trf$ is well-defined.  
\item[(C)] $\alpha_{\sch}\circ \rtr=\alpha_{\sch}$: by Lemma~\ref{lem-rtr}.
\item[(D)] $\rtr(\tsem\grasse{\Php}) \subseteq \tsem\grasse{P}$: 
because, 
by Lemma~\ref{lem-rtr}, $\rtr$ is well-defined.  
\end{enumerate}
We therefore obtain: 
\begin{align*}
\alpha_{\sch}(\tsem\grasse{P})&=\text{\quad[By point~(A)]}\\
\alpha_{\sch}(\trh(\tsem\grasse{P})) &\subseteq\text{\quad[By point~(B)]}\\
\alpha_{\sch}(\tsem\grasse{\Php}) &= \text{\quad[By point~(C)]}\\
\alpha_{\sch}(\rtr(\tsem\grasse{\Php}))  &\subseteq\text{\quad[By point~(D)]}\\
\alpha_{\sch}(\tsem\grasse{P})  &
\end{align*}
and this closes the proof.
\qed

\subsection{Correctness of Hot Path Optimizations}\label{chpo-sec}

Guarded hot paths are a key feature of our tracing compilation 
model and are meant to be dynamically recorded by a hot path monitor. 
An abstract guard for
a command $C$ of some stitched hot path $\ok{\stitch_P(hp)}$ encodes a
property of program stores 
which is represented as an element of an abstract domain $\ok{\Store^\sharp}$ and is guaranteed to
hold at the entry of $C$. This information on program stores, as encapsulated by the abstract guards in $\ok{\stitch_P(hp)}$, 
can then be used in hot path optimizations, namely, to optimize the commands in $hp$. 

We follow a modular approach for proving the correctness of hot path
optimizations. 
A hot path optimization $O$ should optimize $P$ along some hot path $hp$ of $P$, by relying on 
the abstract store information recorded in $hp$, while leaving unchanged the commands outside of $hp$. 
Hence, in our framework, fixed  $P\in \Program$, an optimization 
$O$ is defined to be a program transform of the commands in $\stitch_{P}(hp)$, that is, 
$$O: \{ \stitch_{P}(hp)~|~ hp \in \ahotn(\Trace_P)\} \ra \Program$$

\noindent
where $\Program$ may allow new optimized expressions and/or actions introduced by $O$, as it will be 
the case
of type-specific additions $+_{\mathrm{Type}}$ in the type specialization optimization described in Section~\ref{type_specialization}. 
Let $P_{\neg hp} \ud \extr_{hp}(P) \smallsetminus \stitch_{P}(hp)$ denote the commands outside of the stitched hot path. 
Then, the corresponding full optimization $O_{\mathit{full}}$ 
of the whole program
$P$ w.r.t.\ the hot path $hp$ should extract and simultaneously optimize $hp$, namely, this
is defined by
$$O_{\mathit{full}}(P,hp) \ud P_{\neg hp} \cup O(\stitch_{P}(hp))$$
where $O_{\mathit{full}}(P,hp)$ is required to be a well-formed program, i.e., 
$O_{\mathit{full}}(P,hp)\in \Program$.
This full optimization $O_{\mathit{full}}(P,hp)$ 
has to be proved correct w.r.t.\ some observational abstraction $\alpha_{o}:\wp(\Trace_P) \ra A$ of program traces, 
which is assumed to 
be more abstract than the store changes abstraction $\alpha_{\sch}$ (cf.\ Section~\ref{obs}). Then, 
this full optimization is correct for $\alpha_{o}$ when:
$$\alpha_{o} (\tsem\grasse{O_{\mathit{full}}(P,hp)}) = \alpha_{o} (\tsem\grasse{P}).$$

\noindent
Since Theorem~\ref{corr-th} ensures that the unoptimized trace extraction transform
is already correct for 
the store changes abstraction $\alpha_{\sch}$, which is more precise than $\alpha_{o}$, the intuition is that
in order to prove the correctness of $O_{\full}$ w.r.t.\  $\alpha_{o}$, 
it is enough  to focus on the 
correctness of the optimization $O$ along the stitched hot path $\stitch_P(hp)$. 
This therefore leads to the following definition of correctness for a hot path optimization. 

\begin{definition}[\textbf{Correctness of hot path optimization}]\label{chpo-def}
\rm 
$O$ is \emph{correct} for the observational abstraction $\alpha_o$ if for any $P\in \Program$ and for 
any $hp \in \ahotn(\Trace_P)$, 
$\alpha_{o} (\tsem\grasse{O (\stitch_P(hp))}) =
\alpha_{o} (\tsem\grasse{\stitch_P(hp)})$.
\qed
\end{definition}

In order to prove that this correctness of a
hot path optimization implies the correctness of the corresponding full optimization, we define two functions
$$\tod: \Trace_{\stitch_P(hp)} \ra \Trace_{O (\stitch_P(hp))} \quad 
\tdo: \Trace_{O (\stitch_P(hp))} \ra \Trace_{\stitch_P(hp)}$$
which must be well-defined, i.e.\ they have to map well-formed traces into well-formed traces, and,
intuitively, encode the effect of optimizing (function $\tod$) and de-optimizing (function $\tdo$) execution
traces along a stitched hot path. 
Since $\Trace_{\stitch_P(hp)}\subseteq \Trace_{\extr_{hp}(P)}$ and $\Trace_{O (\stitch_P(hp))}
\subseteq \Trace_{O_{\full} (P,hp)}$, we then extend $\tod$ and $\tdo$ to 
two functions 
$$\tod_{\full}: \Trace_{\extr_{hp}(P)} \ra \Trace_{O_{\full} (P,hp)} \quad 
\tdo_{\full}: \Trace_{O_{\full} (P,hp)} \ra \Trace_{\extr_{hp}(P)}$$ which simply apply 
$\tod$ and $\tdo$ to maximal subtraces, respectively, in $\Trace_{\stitch_P(hp)}$ and   
$\Trace_{O (\stitch_P(hp))}$, while leaving unchanged the remaining states. Let us formalize this
idea. 
 If $\sigma \in \Trace_{\extr_{hp}(P)}$ is nonempty and, for some $k\in [0,|\sigma|)$, 
$\mathit{cmd}(\sigma_k) \in
\stitch_P(hp)$ then $\sigma_{[k,n_{st}]}$ denotes the maximal subtrace 
of $\sigma$ beginning at $\sigma_k$ which belongs to $\Trace_{\stitch_P(hp)}$, that is, the index $n_{st}\geq k$ is such
that: (1)~$\mathit{cmd}(\sigma_{n_{st}}) \in
\stitch_P(hp)$, (2)~if ${n_{st}} < |\sigma|-1$ then $\mathit{cmd}(\sigma_{{n_{st}}+1}) \not\in
\stitch_P(hp)$, (3)~for any $j\in [k,{n_{st}}]$,  $\mathit{cmd}(\sigma_j) \in
\stitch_P(hp)$. Analogously, if 
$\tau \in \Trace_{O_{\full} (P,hp)}$ is nonempty and $\mathit{cmd}(\tau_k) \in
O (\stitch_P(hp))$ then $\tau_{[k,n_{st}]}$ denotes the maximal subtrace 
of $\tau$ beginning at $\tau_k$ which belongs to $\Trace_{O (\stitch_P(hp))}$.
Then, the formal definition of $\tod_{\full}$ goes as follows:
\begin{equation*} 
  \begin{aligned}
     &\tod_{\full}(\sigma) \ud
    \begin{cases}
    \epsilon &\text{if }\sigma =\epsilon\\
    \sigma_0\tod_{\full}(\sigma_{1^{^{\!\shortrightarrow}}}) & \text{if }\sigma \neq \epsilon,\, \mathit{cmd}(\sigma_0) \not\in \stitch_P(hp)\\
    \tod(\sigma_{[0,n_{st}]}) 
    \tod_{\full}(\sigma_{{(n_{st}+1)}^{^{\!\shortrightarrow}}}) & \text{if }\sigma \neq \epsilon,\, \mathit{cmd}(\sigma_0) \in \stitch_P(hp)    
    \end{cases}  
  \end{aligned}
\end{equation*}
and analogously for $\tdo_{\full}$. Since $\tod$ and $\tdo$ are supposed to be well-defined,  it turns out that 
$\tod_{\full}$ and
$\tdo_{\full}$ are well-defined once we make the weak and reasonable assumption
that $\tod$ and $\tdo$ do not modify the entry (which is always $L_0$) and exit 
labels  of the stitched
hot path. This assumption, e.g., for $\tod$ can be formalized as follows: 
if $\sigma \in \Trace_{\stitch_P(hp)}$ and $\tod(\sigma)=\tau$ then
(i) if $lbl(\sigma_0)=L_0$ then $lbl(\tau_0)=L_0$; (ii) if $suc(\sigma_{|\sigma|-1}) =L' \not \in \mathit{labels}(P)$ 
then $suc(\tau_{|\tau|-1}) =L'$. 
In the following, 
$\tod_{\full}$ and $\tdo_{\full}$ are also used to denote their corresponding collecting functions defined on sets of traces. 

\begin{lemma}\label{lemma-corr}
Assume that 
$\alpha_{o} \circ \tod_{\full} = \alpha_{o} = \alpha_{o} \circ \tdo_{\full}$. If $O$ is correct for $\alpha_o$ then 
$O_{\mathit{full}}$ is correct for $\alpha_o$. 
\end{lemma}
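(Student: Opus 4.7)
The plan is to chain two equalities: first $\alpha_o(\tsem\grasse{P}) = \alpha_o(\tsem\grasse{\extr_{hp}(P)})$, reducing the full optimization to the already-proven correctness of trace extraction, and then $\alpha_o(\tsem\grasse{\extr_{hp}(P)}) = \alpha_o(\tsem\grasse{O_{\mathit{full}}(P,hp)})$, obtained by mirroring the $\trh$/$\rtr$ technique used in the proof of Theorem~\ref{corr-th} but with the functions $\tod_{\full}$ and $\tdo_{\full}$ instead. For the first equality, Theorem~\ref{corr-th} gives $\alpha_{\sch}(\tsem\grasse{\extr_{hp}(P)}) = \alpha_{\sch}(\tsem\grasse{P})$; since $\alpha_o$ is assumed to be an abstraction coarser than $\alpha_{\sch}$, the same equation lifts to $\alpha_o$ (this is exactly the definition of ``more approximate than $\alpha_{\sch}$'' given at the start of Section~\ref{obs}).

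For the second equality, I would argue by two inclusions using the well-definedness of $\tod_{\full}$ and $\tdo_{\full}$. Well-definedness of $\tod_{\full}$ entails the collecting inclusion $\tod_{\full}(\tsem\grasse{\extr_{hp}(P)}) \subseteq \tsem\grasse{O_{\mathit{full}}(P,hp)}$, since outside the stitched hot path $\tod_{\full}$ acts as the identity, while on maximal sub-traces in $\Trace_{\stitch_P(hp)}$ it applies $\tod$, whose image lies in $\Trace_{O(\stitch_P(hp))}$; the label-preservation assumption on $\tod$ is what ensures that concatenating the optimized maximal sub-trace with the adjacent unchanged states still yields a well-formed transition sequence in $O_{\mathit{full}}(P,hp)$. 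Monotonicity of $\alpha_o$ then gives
\[
\alpha_o(\tod_{\full}(\tsem\grasse{\extr_{hp}(P)})) \subseteq \alpha_o(\tsem\grasse{O_{\mathit{full}}(P,hp)}),
\]
and by the hypothesis $\alpha_o \circ \tod_{\full} = \alpha_o$ the left-hand side equals $\alpha_o(\tsem\grasse{\extr_{hp}(P)})$. The opposite inclusion is obtained symmetrically from the well-definedness of $\tdo_{\full}$ and the hypothesis $\alpha_o \circ \tdo_{\full} = \alpha_o$. Combining the two inclusions with the first equality yields $\alpha_o(\tsem\grasse{O_{\mathit{full}}(P,hp)}) = \alpha_o(\tsem\grasse{P})$, which is the desired correctness of $O_{\mathit{full}}$.

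The main subtlety is the role played by the hypothesis that $O$ itself is correct for $\alpha_o$: the argument above uses only the two identities involving $\tod_{\full}$ and $\tdo_{\full}$, so the correctness of $O$ is effectively a coherence condition guaranteeing that such trace-level witnesses $\tod$ and $\tdo$ can exist on the stitched hot path (it is the semantic equality $\alpha_o(\tsem\grasse{O(\stitch_P(hp))}) = \alpha_o(\tsem\grasse{\stitch_P(hp)})$ that justifies restricting attention to a pair of functions satisfying $\alpha_o \circ \tod = \alpha_o = \alpha_o \circ \tdo$ on $\Trace_{\stitch_P(hp)}$, which in turn propagates to $\tod_{\full}$, $\tdo_{\full}$ because the two functions act as the identity outside the stitched path). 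The only mildly delicate technical point is checking that the entry/exit label-preservation assumption suffices to make $\tod_{\full}$ and $\tdo_{\full}$ map traces to traces; once this is observed, the rest is the short symmetric chain of inclusions sketched above.
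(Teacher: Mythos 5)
Your proposal is correct and follows essentially the same route as the paper: the equality $\alpha_o(\tsem\grasse{O_{\mathit{full}}(P,hp)}) = \alpha_o(\tsem\grasse{\extr_{hp}(P)})$ is obtained from the well-definedness of $\tod_{\full}$ and $\tdo_{\full}$ together with the assumed identities $\alpha_o \circ \tod_{\full} = \alpha_o = \alpha_o \circ \tdo_{\full}$ (the paper writes this as one circular chain of inclusions, you as two symmetric ones), and then Theorem~\ref{corr-th} plus the fact that $\alpha_o$ is coarser than $\alpha_{\sch}$ gives $\alpha_o(\tsem\grasse{\extr_{hp}(P)}) = \alpha_o(\tsem\grasse{P})$. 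Your observation that the correctness of $O$ enters only as the condition legitimizing the existence of the trace-level witnesses $\tod$, $\tdo$ matches the paper's own use of that hypothesis.
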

\begin{proof}
We have that:  
\begin{align*}
\alpha_{o}(\tsem\grasse{O_{\full} (P,hp)})&=\text{\quad[By $\alpha_{o} \circ \tdo_{\full} = \alpha_{o}$]}\\
\alpha_{o}(\tdo_{\full}(\tsem\grasse{O_{\full} (P,hp)})) &\subseteq\text{\quad[Since $\tdo_{\full}$ is well-defined]}\\
\alpha_{o}(\tsem\grasse{\extr_{hp}(P)}) &= \text{\quad[By $\alpha_{o} \circ \tod_{\full}=\alpha_o$]}\\
 \alpha_{o}(\tod_{\full}(\tsem\grasse{\extr_{hp}(P)})) &\subseteq\text{\quad[Since $\tod_{\full}$ is well-defined]}\\
\alpha_{o}(\tsem\grasse{O_{\full} (P,hp)})  &
\end{align*}
Thus, $\alpha_{o}(\tsem\grasse{O_{\full} (P,hp)}) = \alpha_{o}(\tsem\grasse{\extr_{hp}(P)})$.
By Theorem~\ref{corr-th}, $\alpha_{\sch}(\tsem\grasse{\extr_{hp}(P)}) = \alpha_{\sch}(\tsem\grasse{P})$, 
so that, since  $\alpha_{\sch}$ is more precise than $\alpha_{o}$, $\alpha_{o}(\tsem\grasse{\extr_{hp}(P)}) = \alpha_{o}(\tsem\grasse{P})$, 
and, in turn, $\alpha_{o}(\tsem\grasse{O_{\full} (P,hp)})= \alpha_{o}(\tsem\grasse{P})$.
\end{proof}

We will see in Sections~\ref{type_specialization} and~\ref{cf-sec} two significant examples 
of hot path optimizations, namely, type specialization and constant folding.

\section{Type Specialization}
\label{type_specialization}

One key optimization for dynamic
languages like JavaScript and PHP is type specialization, that is, the use of type-specific
primitives in place of generic untyped operations whose runtime execution
can be costly. As a paradigmatic 
example, a generic
addition operation could be defined on more than one type, so that 
the runtime environment must
check the type of its operands and execute a different operation
depending on these types: this is the case of the addition operation
in JavaScript (see its runtime semantics in the ECMA-262 standard 
\cite[Section~12.7.3.1]{JS-ecma}) and of
the semantics of $+$ in our language as given  in Section~\ref{sem-sec}. 
Of course, type specialization avoids the overhead of dynamic type checking and dispatch of generic untyped operations. 
When a type is associated to each variable before the execution of a command in some hot path,  this type environment can be used to replace generic operations with type-specific primitives. 
In this section, we show that type specialization can be viewed
as a particular hot path optimization which can be proved correct according to our definition in Section~\ref{chpo-sec}.

\subsection{Type Abstraction}
Let us recall that the set of type names is 
$\Types = \{\topt, \Int,  \String, \Undef, \bott\}$, which
can  be viewed as the following finite lattice ${\tuple{\Types,\leqt}}$:

\begin{center}
    \begin{tikzpicture}[scale=0.9]
\small
     \tikzstyle{arrow}=[shorten >=-5pt, shorten <=-5pt]
      \draw (0,0) node[name=1] {{$\bott$}};
      \draw (-1.5,1) node[name=3] {{$\Int$}};
      \draw (0,1) node[name=5] {{{$\String$}}};
      \draw (1.5,1) node[name=6] {{$\Undef$}};
      \draw (0,2) node[name=7] {{$\topt$}};

      \draw[semithick,shorten >=-2.5pt, shorten <=-2.5pt] (1) -- (3);
      \draw[semithick,shorten >=-2.5pt, shorten <=-2.5pt] (1) -- (5);
      \draw[semithick,shorten >=-2.5pt, shorten <=-2.5pt] (1) -- (6);
      \draw[semithick,shorten >=-2.5pt, shorten <=-2.5pt] (3) -- (7);
      \draw[semithick,shorten >=-2.5pt, shorten <=-2.5pt] (5) -- (7);
      \draw[semithick,shorten >=-2.5pt, shorten <=-2.5pt] (6) -- (7);
         
\end{tikzpicture}
\end{center}

The
abstraction  $\atype: \wp(\Valueu) \ra \Types$ and concretization $\gtype: \Types \ra \wp(\Valueu)$ functions
are defined as follows: 
\[ 
\begin{array}{cc}
\atype(S) \ud 
\begin{cases}
\bott & \text{if } S=\varnothing\\
\Int & \text{if } \varnothing \neq S\subseteq \mathbb{Z}\\
\String & \text{if } \varnothing \neq S\subseteq \Char^*\\
\Undef & \text{if } \varnothing \neq S=\{\undeff\}\\
\topt & \text{otherwise}
\end{cases}
&\qquad
\gtype(T) \ud 
\begin{cases}
\varnothing & \text{if } T=\bott\\
\mathbb{Z} & \text{if } T=\Int\\
\Char^* & \text{if } T=\String\\
\{\undeff\} & \text{if } T=\Undef\\
\Valueu & \text{if } T=\topt
\end{cases}
\end{array}
\]
\noindent
Thus, $\atype(S)$ provides the smallest type in ${\tuple{\Types,\leqt}}$ for a set $S$ of values. In particular, 
given $v\in \Valueu$, $\atype(\{v\})$ coincides with $\type(v)$.
Following the
approach described in Section~\ref{abstract-lifting}, we then consider a simple nonrelational 
store abstraction for types
$$\Storet \ud \tuple{\Var \ra \Types,\dot{\leqt}}$$ 
where $\dot{\leqt}$ is the standard pointwise lifting of $\leqt$, 
so that $\lambda x.\bott$ and $\lambda x.\topt$ are, respectively, the bottom
and top abstract stores in $\Storet$. 
The abstraction and concretization maps $\astore: \wp(\Store) \ra \Storet$ and 
${\gstore: \Storet \ra \wp(\Store)}$ are defined as a straight instantiation of
the definitions in
Section~\ref{abstract-lifting}.

The abstract type semantics $\esemt: \Exp\ra \Storet \ra \Types$ of expressions  is defined as the best correct approximation 
of the  concrete collecting semantics $\esem:\Exp\ra \wp(\Store) \ra \wp(\Value)$  on the type abstractions $\Storet$ and $\Types$, i.e.,  
$$\esemt\grasse{E}\rho^t \ud \atype(\esem\grasse{E}\gstore(\rho^t)).$$
Hence, this definition leads to the following equalities:
\begin{align*}
&\esemt\grasse{v}\rho^t = \type(v) \\
&\esemt\grasse{x}\rho^t =\rho^t(x)\\
&\esemt\grasse{E_1 + E_2}\rho^t =
\begin{cases}    
      \bott & \text{if } \exists i.\, \esemt\grasse{E_i}\rho^t = \bott\\
      \esemt\grasse{E_1}\rho^t 
      & \text{else if } \esemt\grasse{E_1}\rho^t = \esemt\grasse{E_2}\rho^t \in \{\Int,\String\}\\
     \Undef & \text{else if } \forall i.\, \esemt\grasse{E_i}\rho^t < \topt\\
     \topt & \text{otherwise }
     \end{cases}
\end{align*}

\noindent
For instance, we have that:
\begin{multline*}
\esemt\grasse{x+y}[x/\String, y/\bott]=\atype(\esem\grasse{x+y}\varnothing)=\atype(\varnothing)=\bott\\
\end{multline*}
\vspace*{-30pt}
\begin{multline*}
\esemt\grasse{x + y}[x/\String, y/\String] = \atype(\esem\grasse{x+y}\{\rho~|~\rho(x),\rho(y)\in \Char^*\})=\\
=\atype(\Char^*)=\String,
\end{multline*}
\vspace*{-20pt}
\begin{multline*}
\esemt\grasse{x + y}[x/\Int, y/\String] = \atype(\esem\grasse{x+y}\{\rho~|~\rho(x)\in \mathbb{Z}, \rho(y)\in \Char^*\})=\\
\atype(\{\undeff\})= \Undef,
\end{multline*}
\vspace*{-20pt}
\begin{multline*}
\esemt\grasse{x+y}[x/\Int, y/\topt] = \atype(\esem\grasse{x+y}\{\rho~|~\rho(x)\in \mathbb{Z}, \rho(y)\in \Valueu\}) =\\ 
\atype(\mathbb{Z}\cup \{\undeff\})= \topt
\end{multline*} 
Being defined as best correct approximation, it turns out that 
the abstract type semantics $\esemt$ of expressions is correct by definition. 
\begin{corollary}\label{coro-correct}
If $\rho \in \gstore(\rho^t)$ then $\esem\grasse{E}\rho \in \esemt\grasse{E}\rho^t$. 
\end{corollary}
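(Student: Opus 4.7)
The plan is to observe that this corollary is essentially the standard soundness property of a best correct approximation, so the proof reduces to chasing the Galois connection. The statement $\esem\grasse{E}\rho \in \esemt\grasse{E}\rho^t$ must be read as $\esem\grasse{E}\rho \in \gtype(\esemt\grasse{E}\rho^t)$, since $\esemt\grasse{E}\rho^t$ is a type name in $\Types$ rather than a set of values.

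First I would recall the two ingredients introduced just above the corollary: (i)~the store concretization $\gstore$ lifts $\gtype$ pointwise, so that $\rho\in \gstore(\rho^t)$ is equivalent to $\rho(x)\in \gtype(\rho^t(x))$ for every variable $x$; and (ii)~by definition, $\esemt\grasse{E}\rho^t = \atype(\esem\grasse{E}\gstore(\rho^t))$, where the right-hand side uses the collecting version of $\esem$ and hence denotes the set $\{\esem\grasse{E}\rho' \mid \rho' \in \gstore(\rho^t)\}$.

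Then the proof is a three-line chain. From $\rho\in \gstore(\rho^t)$, the singleton inclusion $\{\rho\} \subseteq \gstore(\rho^t)$ and the monotonicity of the collecting $\esem$ give $\esem\grasse{E}\rho \in \esem\grasse{E}\gstore(\rho^t)$. The Galois connection $(\atype,\gtype)$ between $\wp(\Valueu)$ and $\Types$ yields the soundness inequality $S \subseteq \gtype(\atype(S))$ for every $S\subseteq \Valueu$; applied to $S = \esem\grasse{E}\gstore(\rho^t)$, this gives
\[
\esem\grasse{E}\gstore(\rho^t) \,\subseteq\, \gtype\bigl(\atype(\esem\grasse{E}\gstore(\rho^t))\bigr) \,=\, \gtype(\esemt\grasse{E}\rho^t).
\]
Chaining the two inclusions yields $\esem\grasse{E}\rho \in \gtype(\esemt\grasse{E}\rho^t)$, which is the claim.

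There is no real obstacle here: the only subtlety is keeping track of notation, namely that $\esem\grasse{E}\{\rho\} = \{\esem\grasse{E}\rho\}$ (with the standard convention used in Section~\ref{language} that $\undeff$ is retained in the singleton case), so that the collecting and non-collecting semantics agree on singletons. Everything else is an instance of the generic abstract interpretation fact recalled in Section~\ref{intro_abs_int} that $f^A = \alpha \circ f \circ \gamma$ is correct by construction.
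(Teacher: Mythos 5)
Your proof is correct and is essentially the argument the paper itself relies on: the paper offers no separate proof, remarking only that $\esemt$ is ``correct by definition'' since it is the best correct approximation $\atype \circ \esem \circ \gstore$, and your three-line chain (reading $\in$ via $\gtype$, then using extensivity $S \subseteq \gtype(\atype(S))$ of the Galois connection applied to the collecting semantics on $\gstore(\rho^t)$) is precisely that remark spelled out. No gaps; the notational care about singletons and $\undeff$ matches the paper's conventions.
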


According to Section~\ref{trace-ext-sec}, 
for any abstract type store (that we also call type environment) $[x_i/ T_i~|~ x_i\in \Var]\in \Storet$ we consider a
corresponding Boolean action guard denoted by
$$\text{guard}\:x_0:T_0,\ldots, x_n: T_n\in \BExp$$ whose corresponding action
semantics is automatically 
induced, as defined in Section~\ref{trace-ext-sec}, 
by the Galois connection $(\astore, \wp(\Store),\Storet,\gstore)$:
for any $\rho\in \Store$, 
\begin{align*}
\asem\grasse{\text{guard}\:x_0:T_0,..., x_n: T_n}\rho &\ud
\begin{cases}
\rho & \!\!\text{if } 
\rho \in \gstore([x_i/ T_i~|~ x_i\in \Var])\\
\bot & \!\!\text{otherwise}
\end{cases}
\\
&=
\begin{cases}
\rho & \!\!\text{if } 
\forall i.\, \rho(x_i) \in \gtype(T_i)\\
\bot & \!\!\exists i.\, \rho(x_i) \not\in \gtype(T_i)
\end{cases}
\end{align*}

\noindent
For example, we have that:
\begin{align*}
&\asem\grasse{\text{guard}\:x:\String, y:\String} [x/\mathtt{foo}, y/\mathtt{bar}] = [x/\mathtt{foo}, y/\mathtt{bar}],\\
&\asem\grasse{\text{guard}\:x:\String, y:\topt} [x/\mathtt{foo}, y/3] = [x/\mathtt{foo}, y/3],\\
&\asem\grasse{\text{guard}\:x:\String, y:\topt} [x/1, y/3] = \bot,\\
&\asem\grasse{\text{guard}\:x:\String, y:\Undef} [x/\mathtt{foo}] = [x/\mathtt{foo}].
\end{align*}

\subsection{Type Specialization of Hot Paths}
Let us consider some hot path $hp=\tuple{\rho^t_0,C_0,\ldots,\rho^t_n,C_n}
\in$ $\ahotn(\Trace_P)$ on the type abstraction $\tuple{\Store_P^t,\dot{\leqt}}$,
where each $\rho^t_i$ is therefore a type environment for $P$.
Thus, in the transformed program $\extr_{hp}(P)$, 
the stitched hot path $\stitch_P(hp)$ contains $n+1$ typed guards, that, for any $i\in [0,n]$,  
we simply denote as $\guard~\rho^t_i$. Typed guards 
allow us to perform
type specialization of commands in the stitched hot path. In order to keep the notation simple,
we only focus on type specialization of addition operations occurring in assignments, while one could also
consider an analogous type specialization of Boolean comparisons in conditional commands. 
This is defined as a program transform 
that instantiates most type-specific addition operations in place of
generic untyped additions by exploiting the type information dynamically recorded 
by typed guards in $\stitch_P(hp)$. Note that if $C\in \stitch_P(hp)$ and 
$act(C)\equiv x:= E_1 + E_2$ then $C\equiv \ell_i:x:= E_1 + E_2 \ra L'$, for some
$i\in [0,n]$, where $L' \in \{ \bbl_{i+1}, L_0\}$.  
Let $\mathbb{C}^t$ denote the extended set of commands which includes 
type specific additions $+_{\Int}$ and $+_{\String}$ and,
in turn, let $\Program^t$ denote the possibly type-specialized programs with commands ranging in  $\mathbb{C}^t$.
The semantic function $\esem$ for expressions is then updated to 
type specific additions as follows:
\begin{equation*}
 \begin{aligned}
    \esem\grasse{E_{1} +_{\Int} E_{2}}\rho \ud &
     \begin{cases}
       \esem\grasse{E_1}\rho +_{\mathbb{Z}} \esem\grasse{E_2}\rho &
      \text{if } \type(\esem\grasse{E_i}\rho)=\Int\\
       \undeff & \text{otherwise} 
         \end{cases} \\
    \esem\grasse{E_{1} +_{\String} E_{2}}\rho \ud&
     \begin{cases}
       \esem\grasse{E_1}\rho \cdot \esem\grasse{E_2}\rho &
      \text{if } \type(\esem\grasse{E_i}\rho)=\String\\
       \undeff & \text{otherwise} 
         \end{cases}
\end{aligned}
\end{equation*}
Given a hot path $hp=\tuple{\rho^t_0,C_0,\ldots,\rho^t_n,C_n}$,
the type specialization function ${\ts_{hp}: \stitch_{P}(hp) \ra  \mathbb{C}^t}$ 
is defined  as follows:
\begin{equation*}
\begin{aligned}
      &\ts_{hp}(\ell_i :x:=E_1 + E_2 \!\ra\! L') \ud 
      \begin{cases}
      \ell_i:x:= E_1 +_{\Int} E_2 \!\ra\! L' & \!\!\text{if } \esemt\grasse{E_1 + E_2}\rho_i^t = \Int\\
      \ell_i:x:= E_1 +_{\String} E_2 \!\ra\! L' & \!\!\text{if } \esemt\grasse{E_1 + E_2}\rho_i^t = \String\\
      \ell_i:x:= E_1 + E_2 \ra L' & \!\!\text{otherwise } 
      \end{cases}  \\
       &\ts_{hp}(C) \ud  C \hspace*{17.5ex}\text{if }C \not\equiv \ell_i :x:=E_1 + E_2 \!\ra\! L' 
\end{aligned}
\end{equation*}      
Hence, if a typed guard  $\guard~\rho^t_i$ preceding a command $\ell_i :x:=E_1 + E_2 \!\ra\! L'$
allows us to derive abstractly on $\Storet$ that $E_1$ and $E_2$ have the same type ($\Int$ or $\String$) then 
the addition $E_1 + E_2$ is accordingly type specialized. This function allows us to define 
the hot path type specialization optimization
$$O^{\ts}: \{ \stitch_{P}(hp)~|~ hp \in \ahotn(\Trace_P)\} \ra \Program^t$$ simply by
$$O^{\ts}(\stitch_{P}(hp)) \ud \{\ts_{hp}(C)~|~ C\in \stitch_{P}(hp)\}.$$
In turn, as described in Section~\ref{chpo-sec}, this induces the full type specialization optimization 
$$O^{\ts}_{\mathit{full}}(P,hp) \ud \extr_{hp}(P) \smallsetminus \stitch_{P}(hp) \cup O^{\ts}(\stitch_{P}(hp)).$$
$O^{\ts}_{\mathit{full}}(P,hp)$ is also called \emph{typed trace extraction}
since it extracts and simultaneously 
type specializes a typed hot path $hp$ in a program $P$. 
The correctness of this program optimization can be proved for the store changes observational abstraction
by relying on Lemma~\ref{lemma-corr}. 

\begin{theorem}[\textbf{Correctness of typed trace extraction}]\label{corr-tte}
For any typed hot path $hp \in \ahotn(\Trace_P)$, we have that
$\alpha_{\sch} (\tsem \grasse{O^{\ts}_{\mathit{full}}(P,hp)})=\alpha_{\sch} (\tsem \grasse{P})$.
\end{theorem}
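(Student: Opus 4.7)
The strategy is to reduce the statement to Lemma~\ref{lemma-corr} instantiated with $\alpha_o = \alpha_{\sch}$ (which is trivially as coarse as itself), combined with Theorem~\ref{corr-th}. This requires two ingredients: (a)~the correctness of $O^{\ts}$ for $\alpha_{\sch}$ in the sense of Definition~\ref{chpo-def}, namely $\alpha_{\sch}(\tsem\grasse{O^{\ts}(\stitch_P(hp))}) = \alpha_{\sch}(\tsem\grasse{\stitch_P(hp)})$; and (b)~well-defined trace transformations $\tod$ and $\tdo$ between $\stitch_P(hp)$ and $O^{\ts}(\stitch_P(hp))$ whose full-program extensions satisfy $\alpha_{\sch}\circ\tod_{\full}=\alpha_{\sch}=\alpha_{\sch}\circ\tdo_{\full}$. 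Once both ingredients are in place, Lemma~\ref{lemma-corr} yields $\alpha_{\sch}(\tsem\grasse{O^{\ts}_{\mathit{full}}(P,hp)}) = \alpha_{\sch}(\tsem\grasse{\extr_{hp}(P)})$, and Theorem~\ref{corr-th} closes the argument since $\alpha_{\sch}(\tsem\grasse{\extr_{hp}(P)})=\alpha_{\sch}(\tsem\grasse{P})$.

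I would define both maps state-wise: for $\sigma=\tuple{\rho_0,C'_0}\cdots\tuple{\rho_m,C'_m}\in\Trace_{\stitch_P(hp)}$, set $\tod(\sigma) \ud \tuple{\rho_0,\ts_{hp}(C'_0)}\cdots\tuple{\rho_m,\ts_{hp}(C'_m)}$, and let $\tdo$ be its inverse, replacing each $+_T$ back by the generic $+$ while keeping the store at every index unchanged. Both transformations fix $L_0$ and every bail-out label (the required condition for extending them to $\tod_{\full},\tdo_{\full}$), and, crucially, they preserve the store component position by position. Therefore, once well-definedness is established, $\alpha_{\sch}\circ\tod=\alpha_{\sch}$ and $\alpha_{\sch}\circ\tdo=\alpha_{\sch}$ (hence the same equations for the full-program extensions) follow immediately from the definition of $\alpha_{\sch}$.

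The core technical step is well-definedness: whenever a transition $\tuple{\rho,\ell_i : x := E_1 + E_2 \to L'} \to \tuple{\rho[x/v],C''}$ occurs in a trace of $\stitch_P(hp)$, the value $v=\esem\grasse{E_1+E_2}\rho$ must coincide with $\esem\grasse{E_1 +_T E_2}\rho$ whenever $\ts_{hp}$ has produced $+_T$ with $T\in\{\Int,\String\}$. This is where Corollary~\ref{coro-correct} enters: any state at $\ell_i$ admitting a successor along a trace of $\stitch_P(hp)$ must have been reached through the immediately preceding guard $\bbl_i : \guard~E_{a_i} \to \ell_i$, or through $L_0 : \guard~E_{a_0} \to \ell_0$ when $i=0$, whose semantics forces $\rho \in \gstore(\rho_i^t)$. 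The clause of $\esemt$ that produces $T\in\{\Int,\String\}$ for $E_1+E_2$ forces $\esemt\grasse{E_j}\rho_i^t = T$ for $j=1,2$, so Corollary~\ref{coro-correct} gives $\esem\grasse{E_j}\rho \in \gtype(T)$, whence the generic and type-specialized additions evaluate identically on $\rho$, making $\tod$ (and, symmetrically, $\tdo$) send traces to traces.

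The main obstacle will be the treatment of traces of $\stitch_P(hp)$ that \emph{begin} directly at a state $\tuple{\rho,\ell_i : x := E_1 + E_2 \to L'}$ without traversing the preceding guard, since then $\rho\in\gstore(\rho_i^t)$ is no longer a priori guaranteed and the Corollary~\ref{coro-correct} argument does not apply. Here the plan is a careful case split: if $\esem\grasse{E_1+E_2}\rho \in \gtype(T)$, the two addition variants still agree and $\tod$ behaves as above; otherwise the unoptimized trace may step to a successor while the optimized one gets stuck after a single state, but the length-one prefix $\tuple{\rho,\ell_i:\ldots}$ lies in both semantics and its $\alpha_{\sch}$-image is just $\rho$, which must be matched against the divergent two-step behaviour on the unoptimized side via the freedom to take initial states in $\tsem\grasse{\stitch_P(hp)}$. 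Once the stitch-level correctness $\alpha_{\sch}(\tsem\grasse{O^{\ts}(\stitch_P(hp))})=\alpha_{\sch}(\tsem\grasse{\stitch_P(hp)})$ is secured in this manner, Lemma~\ref{lemma-corr} and Theorem~\ref{corr-th} deliver the desired equality.
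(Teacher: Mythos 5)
Your proposal follows essentially the same route as the paper's proof: it reduces the statement to Lemma~\ref{lemma-corr} combined with Theorem~\ref{corr-th}, builds state-wise specialization/de-specialization trace maps, and uses Corollary~\ref{coro-correct} through the immediately preceding typed guards to show that the generic and type-specific additions evaluate identically, which is exactly how the paper's functions $sp$ and $td$ are defined and justified. The ``main obstacle'' you identify---traces entering the stitched path directly at some $\ell_i$ without traversing its guard---is precisely what the paper's extra first clauses of $td$ and $sp$ handle by truncating to the single (stuck) state, so your case split matches, and is if anything slightly sharper than, the published treatment.
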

\begin{proof}
Let $td:\Trace_{O^{\ts}(\stitch_{P}(hp))} \ra \Trace_{\stitch_{P}(hp)}$ be the following type
de-specialization function, 
where $\mathrm{Type}$ is either $\Int$ or $\String$:
\begin{equation*} 
  \begin{aligned}
  td(\epsilon) &\ud \epsilon\\
     td(s\sigma) &\ud
    \begin{cases}
    \tuple{\rho,\ell_i :x:=E_1 + E_2\ra L'} 
     &\hspace*{0ex} \text{if } s=\tuple{\rho, \ell_i :x:=E_1 +_{\mathrm{Type}} E_2\ra L'},\\ 
      & \hspace*{2ex}  \type(\esem\grasse{E_1 + E_2}\rho) \neq  \mathrm{Type}\\
      \tuple{\rho,\ell_i :x:=E_1 + E_2\ra L'}\cdot td(\sigma) 
     &\hspace*{0ex} \text{if } s=\tuple{\rho, \ell_i :x:=E_1 +_{\mathrm{Type}} E_2\ra L'},\\
     &\hspace*{2ex} \type(\esem\grasse{E_1 + E_2}\rho) = \mathrm{Type}
     \\
      s\cdot td(\sigma)  &\hspace*{0ex}\text{otherwise}
    \end{cases}  
  \end{aligned}
\end{equation*}
Let us explain the first defining clause of $td(s\sigma)$, i.e.,
$s=\tuple{\rho, \ell_i :x:=E_1 +_{\mathrm{Type}} E_2\ra L'}$ and 
$\type(\esem\grasse{E_1 + E_2}\rho) \neq  \mathrm{Type}$.
These conditions can never hold in an inductive call of the function $td$: in fact, when $td(s\sigma)$ is recursively
called by $td(s's\sigma)$, we necessarily have that 
$s'=\tuple{\rho,\bbl_i: \guard~\rho^t_i\ra \ell_i}$, so that 
$\rho \in \gstore (\rho^t_i)$, and, in turn, by Corollary~\ref{coro-correct}, 
$\esem\grasse{E_1 + E_2}\rho \in \esemt\grasse{E_1 + E_2}\rho^t$, which implies $\type(\esem\grasse{E_1 + E_2}\rho) =  \mathrm{Type}$, which
is a contradiction.
Thus, the first defining clause of $td(s\sigma)$ 
only applies to type specialized traces in $\Trace_{O^{\ts}(\stitch_{P}(hp))}$
whose first state 
is $s=\tuple{\rho, \ell_i :x:=E_1 +_{\mathrm{Type}} E_2\ra L'}$: in this case, we necessarily have that $\sigma = \epsilon$, because $\asem\grasse{E_1 +_{\mathrm{Type}} E_2}\rho = \undeff$ so that $\ssem s = \varnothing$.
This clarifies the definition of $td$ in this particular case. Also, observe that in this case,   $\sch( td(s)) = \sch(s)$
trivially holds. 
In all the remaining cases, it is clear that $td$ maps type specialized traces into legal unspecialized traces of $\stitch_{P}(hp)$
since labels are left unchanged. Moreover, $\sch \circ\, td = \sch$ holds, in particular because in the
second defining clause of $td(s\sigma)$, the condition $\type(\esem\grasse{E_1 + E_2}\rho) =  \mathrm{Type}$ guarantees
that $\esem\grasse{E_1 + E_2}\rho = \esem\grasse{E_1 +_{\mathrm{Type}} E_2}\rho$. 

\noindent
On the other hand, we define a trace specialization function $sp: \Trace_{\stitch_{P}(hp)} \ra \Trace_{O^{\ts}(\stitch_{P}(hp))}$ as follows:
\begin{equation*} 
  \begin{aligned}
   &sp(\epsilon) \ud \epsilon\\
     &sp(\tuple{\mu_0,H_0}\!\cdots\! \tuple{\mu_k,H_k}) \!\ud\!
    \begin{cases}
    \tuple{\mu_0,\ts_{hp}(H_0)} 
     &\hspace*{-10.5ex} \text{if } \ts_{hp}(H_0) \equiv \ell_i : x:=E_1 \!+_{\mathrm{Type}}\! E_2 \ra L',\\ 
     	&\hfill\hspace*{-8ex}\mu_0 \not\in \gstore (\rho^t_i)\\[5pt]
      \tuple{\mu_0,\ts_{hp}(H_0)} \!\cdots\!  \tuple{\mu_k,\ts_{hp}(H_k)}
     &\hfill\text{otherwise } 
         \end{cases}  
  \end{aligned}
\end{equation*}

\noindent
Let us comment on this definition. If $\sigma \in \Trace_{\stitch_{P}(hp)}$ and $\sigma\neq \epsilon$ then it may happen
that the first state $\tuple{\mu_0,H_0}$ of $\sigma$ is such that the command $H_0$ is $\ell_i : x:=E_1 \!+\! E_2 \ra L'$
and, since  $\esemt\grasse{E_1 + E_2}\rho_i^t =\mathrm{Type}$ ($\Int$ or $\String$), $H_0$ is type specialized to
$\ts_{hp}(H_0) \equiv \ell_i : x:=E_1 \!+_{\mathrm{Type}}\! E_2 \ra L'$, while the store
$\mu_0$ is not approximated by the abstract store $\rho^t_i$, i.e., $\mu_0 \not\in \gstore (\rho^t_i)$. 
Thus, in this case, the trace in  $O^{\ts}(\stitch_{P}(hp))$  beginning at $\tuple{\mu_0,\ts_{hp}(H_0)}$ is stuck,  
because the concrete semantics of addition is  
$\esem\grasse{E_{1} +_{{\Type}} E_{2}}\mu_0 = \undeff$, and in turn 
$\asem\grasse{x:=E_{1} +_{\Type} E_{2}}\mu_0 = \bot$,
so that we necessarily have to define
$sp(\sigma)=\tuple{\mu_0,\ts_{hp}(H_0)}$. 
 Otherwise, 
$sp(\sigma)$ simply type specializes through $\ts_{hp}$ all the commands (actually,
addition expressions) occurring in $\sigma$. Here,  
it turns out that $sp$ is well-defined, i.e.\ $sp(\sigma)$ is
a legal trace of $O^{\ts}(\stitch_{P}(hp))$, because 
any state 
$\tuple{\rho,\ell_i :x:=E_1 + E_2\ra L'}$ of $\sigma$ is always preceded by 
the state $\tuple{\rho,\bbl_i: \guard~\rho^t_i\ra \ell_i}$ and 
$\rho \in  \gstore (\rho^t_i)$ must hold. Thus, by Corollary~\ref{coro-correct}, 
$\esem\grasse{E_1 + E_2}\rho \in \esemt\grasse{E_1 + E_2}\rho^t = \mathrm{Type}$, so that 
$\asem\grasse{x:= E_1 +_{\mathrm{Type}} E_2}\rho = \asem\grasse{x:= E_1 + E_2}\rho$ holds. Consequently, 
the trace fragment 
\begin{multline*}
sp(\tuple{\rho,\bbl_i: \guard~\rho^t_i\ra \ell_i} \tuple{\rho,\ell_i :x:=E_1 + E_2\ra L'})=\\ 
\tuple{\rho,\bbl_i: \guard~\rho^t_i\ra \ell_i} \tuple{\rho,\ell_i :x:=E_1 +_{\mathrm{Type}} E_2\ra L'}
\end{multline*}
 is legal in $O^{\ts}(\stitch_{P}(hp))$. 
Furthermore, let us also observe that $\sch \circ\, ts = \sch$ trivially holds. 

\noindent
Thus, following the scheme in Section~\ref{chpo-sec}, 
these two functions $td$ and $ts$ allow us to define $td_{\full}: \Trace_{O^{\ts}_{\mathit{full}}(P,hp)} 
\ra \Trace_{\extr_{hp}(P)}$ and 
$ts_{\full}: \Trace_{\extr_{hp}(P)} \ra \Trace_{O^{\ts}_{\mathit{full}}(P,hp)}$ such that 
$\alpha_{\sch} \circ td_{\full} = \alpha_{\sch} = \alpha_{\sch} \circ ts_{\full}$, so that 
the thesis follows by Lemma~\ref{lemma-corr}. 
\end{proof}

\begin{example}\label{sieve-ex}\rm
Let us consider the following sieve of Eratosthenes in a Javascript-like language---this is taken 
from  the running example in \cite{gal2009}---where $\mathit{primes}$ is an array 
initialized with 100 $\textit{true}$ values:

\medskip
\noindent
{
\For{{\rm $(\KwSty{var}~ i = 2;\; i < 100;\; i = i+1)$}}{

\noindent
\lIf{{\rm (!$\mathit{primes}[i]$)}}{\KwSty{continue}}

\noindent
\lFor{{\rm $(\KwSty{var}~ k = i+i;\; k < 100;\; k = k+i)$}}{
  $\mathit{primes}[k] = \textit{false}$}
}
}

\medskip
\noindent
With a slight abuse, we assume that our language is extended with arrays and Boolean values
ranging in the type $\Bool$. 
The semantics of read and store for arrays is standard: 
first, the index expression is checked to be in bounds, then the value is read or stored into the array.
If the index is out of bounds then the corresponding action command gives
$\bot$, that is, we assume that the program generates an error (e.g., it is aborted).
The above program is encoded in our language as follows:
\[
\begin{aligned}
    P= \big\{
    & C_0 \equiv L_0: i:=2 \ra L_1,\,
      C_1 \equiv L_1: i<100 \ra L_2,\, C_1^c\equiv L_1: \neg(i<100) \ra L_8,\\ 
    & C_2 \equiv L_2: \mathit{primes}[i] = \cd{tt} \ra L_3,\, C_2^c\equiv L_2: \neg(\mathit{primes}[i] = \cd{ff} ) \ra L_7,\\ 
    & C_3 \equiv L_3: k:=i+i \ra L_4,\, 
      C_4 \equiv L_4: k<100 \ra L_5,\, C_4^c \equiv L_4: \neg(k<100) \ra L_7,\\
    & C_5 \equiv L_5: \mathit{primes}[k] := \cd{ff} \ra L_6,\, C_6 \equiv L_6: k:= k+i \ra L_4,\\
    & C_7 \equiv L_7: i:= i+1 \ra L_1,\,
    C_8 \equiv L_8: \cd{skip} \ra \L\big
    \}.
\end{aligned}
\]

\noindent
Let us consider the following type environment
$$\rho^t \ud \{ \mathit{primes}[n]/\Bool, i/\Int, k/\Int \}\in \Store^t$$
where $\mathit{primes}[n]/\Bool$ is a shorthand for $\mathit{primes}[0]/\Bool,\ldots,$ $\mathit{primes}[99]/\Bool$.
Then  the first traced 2-hot path on the type abstraction $\Storet$ is
$hp_1 \ud \tuple{\rho^t,C_4,\rho^t, C_5,\rho^t, C_6}$.
As a consequence, the typed trace extraction of $hp_1$ yields:
\begin{align*}
P_1 &\ud O^{\ts}_{\mathit{full}}(P,hp_1)\\
    &=
    P \smallsetminus \lbrace C_{4}, C_{4}^c \rbrace  \cup
    \{\overline{L_4} : k<100 \ra L_5,\, \ok{\overline{L_4}} : \neg(k<100) \ra L_7\}
    \cup O^{\ts}(\stitch_{P}(hp_1))
\end{align*}
where:
  \begin{align*}
    O^{\ts}(\stitch_{P}(hp_1))=\big\{& H_0 \equiv L_{4} : \cdo{guard}\: (\mathit{primes}[n]:\Bool, i:\Int, k:\Int) \rightarrow
    \ell_0,\\ 
    & H_0^c \equiv L_{4} : \neg \cdo{guard}\: (\mathit{primes}[n]:\Bool, i:\Int, k:\Int) \rightarrow
    \ok{\overline{L_4}},\\
    & H_1 \equiv \ell_0 : k<100 \ra \bbl_1,\,H_1^c \equiv \ell_0: \neg(k<100) \ra L_7,\\
    & H_2 \equiv \bbl_1 : \cdo{guard}\: (\mathit{primes}[n]:\Bool, i:\Int, k:\Int) \rightarrow
    \ell_1,\\ 
    & H_2^c \equiv \bbl_1 : \neg \cdo{guard}\: (\mathit{primes}[n]:\Bool, i:\Int, k:\Int) \rightarrow
    L_5,\\
    &H_3 \equiv \ell_1 : \mathit{primes}[k]:=\cd{ff} \ra \bbl_2,\\
    & H_4 \equiv \bbl_2 : \cdo{guard}\: (\mathit{primes}[n]:\Bool, i:\Int, k:\Int) \rightarrow
    \ell_2,\\ 
    & H_4^c \equiv \bbl_2 : \neg \cdo{guard}\: (\mathit{primes}[n]:\Bool, i:\Int, k:\Int) \rightarrow L_6,\\   
    &H_5 \equiv \ell_2: k:=k+_{\Int} i \ra L_4 \big\}.\qed
   \end{align*}
\end{example}

\section{Constant Variable Folding} \label{cf-sec}
Constant variable folding, a.k.a.\ constant propagation  \cite{WZ91}, 
is a standard and well-known program optimization, whose goal is to detect which program 
variables at some program point are constant on all possible executions and then to propagate these constant
values as far forward through the program as possible. \citeN{palsberg} show how to define this optimization along
hot paths and then prove its correctness. As a significant example, we show here how to specify and prove the correctness 
w.r.t.\ the store changes abstraction $\alpha_{sc}$ 
of this simple 
hot path optimization according to the approach defined in Section~\ref{chpo-sec}.

The constant propagation store abstraction $\CPst$ and its corresponding 
GI $(\alpha_{\CP},\wp(\Store),\CPst,\gamma_{\CP})$ have been defined 
in Example~\ref{cp-ex}. 
Following Section~\ref{trace-ext-sec}, 
any abstract store  $[x_i/ a_i~|~ x_i\in \Var]\in \CPst$, where, as usual, 
the bindings $x_i/\undeff$ are omitted, defines a
corresponding
$\text{guard}\:x_0:a_0,\ldots, x_n: a_n\in \BExp$ whose
semantics is 
induced by the GI $(\alpha_{\CP},\wp(\Store),\CPst,\gamma_{\CP})$,
as defined in Section~\ref{trace-ext-sec}:
for any $\rho\in \Store$, 
\begin{align*}
\asem\grasse{\text{guard}\:x_0:a_0,..., x_n: a_n}\rho &\ud
\begin{cases}
\rho & \!\!\text{if } 
\rho \in \gamma_{\CP}([x_i/ a_i~|~ x_i\in \Var])\\
\bot & \!\!\text{otherwise}
\end{cases}
\\
&=
\begin{cases}
\rho & \!\!\text{if } 
\forall i.\, \rho(x_i) \in \gamma_{cp}(a_i)\\
\bot & \!\!\exists i.\, \rho(x_i) \not\in \gamma_{cp}(a_i)
\end{cases}
\end{align*}

\noindent
Therefore, we have that:
\begin{align*}
&\asem\grasse{\text{guard}\:x:2, y:\mathtt{foo}} [x/2, y/3] = \bot,\\
&\asem\grasse{\text{guard}\:x:2, y:\mathtt{foo}} [x/2, y/\mathtt{foo}, z/4] = \bot,\\
&\asem\grasse{\text{guard}\:x:2, y:\mathtt{foo}} [x/2] = \bot,\\
&\asem\grasse{\text{guard}\:x:2, y:\mathtt{foo}} [x/2, y/\mathtt{foo}] = [x/2, y/\mathtt{foo}],\\
& \asem\grasse{\text{guard}\:x:2, y:\top} [x/2, y/\mathtt{foo}] = [x/2, y/\mathtt{foo}],\\
& \asem\grasse{\text{guard}\:x:2, y:\top} [x/2] = [x/2].
\end{align*}

Let us consider some hot path $hp=\tuple{\rho^c_0,C_0,\ldots,\rho^c_n,C_n}
\in$ $\ahotn(\Trace_P)$ on the constant propagation abstraction $\CPst$,
where each $\rho^c_i$ is therefore an abstract store in $\CPst$, whose corresponding guard in $\stitch_P(hp)$
will be denoted by $\guard~\rho^c_i$. The constant value information encoded in these guards 
is used to define the variable folding in the stitched hot path. 
Following \citeN[Section 2.4]{palsberg}, let $\FV: \wp(\mathbb{C}) \ra \wp(\Var)$ denote the function that returns
the ``free'' variables occurring in some set of commands (in particular, a well-defined program), i.e.,
$\FV(P)$ is the set of variables occurring in $P$ which are never-assigned-to in some command of $P$. 
As in \citeN{palsberg}, constant variable folding is restricted to expressions $E$ of some assignment $x:=E$ and is
defined as a program transform which exploits the constant information recorded 
by abstract guards in $\stitch_P(hp)$. The constant folding function 
${\cf_{hp}: \stitch_{P}(hp) \ra  \mathbb{C}}$ 
is defined  as follows:
\begin{multline*}
      \cf_{hp}(\ell_i :x:=E \!\ra\! L') \ud \\
      \begin{cases}
      \ell_i:x:= E[y_1/v_{y_1},...,y_k/v_{y_k}] \!\ra\! L' & \!\!\text{if } 
      \{y_1,...,y_k\}=\{y\in \vars(E)\cap \FV(\stitch_{P}(hp))~|\\
      &  \hfill \rho_i^c (y)=v_y\in \Value\}\neq\varnothing   \\
      \ell_i:x:= E \ra L' & \!\!\text{otherwise } 
      \end{cases}  
\end{multline*}

       $\cf_{hp}(C) \ud  C \hspace*{28.5ex}\text{if }C \not\equiv \ell_i :x:=E \!\ra\! L'$

\medskip
\noindent
where $E[y_1/v_{y_1},...,y_k/v_{y_k}]$ denotes the standard synctatic substitution of variables $y_j\in \vars(E)$ with constant
values $\rho_i^c (y_j)=v_{y_j}\in \Value$. 
Hence, when the abstract guard $\guard~\rho^c_i$ which precedes an assignment $\ell_i :x:=E \!\ra\! L'$
tells us that a free variable $y$ occuring in the expression $E$ is definitely a constant value $v_{y}\in \Value$ 
then $\cf_{hp}$ performs the corresponding variable folding in $E$. 
Thus, the hot path constant folding optimization is defined by
$$O^{\cf}(\stitch_{P}(hp)) \ud \{\cf_{hp}(C)~|~ C\in \stitch_{P}(hp)\}$$
and, in turn, this induces the full constant folding optimization 
$O^{\cf}_{\mathit{full}}(P,hp)$. The correctness of this constant folding 
optimization can be proved for the store changes observational abstraction 
\begin{theorem}[\textbf{Correctness of constant folding optimization}]\ 
For any hot path $hp \in \ahotn(\Trace_P)$ w.r.t.\ the constant propagation store abstraction $\CPst$,
$\alpha_{\sch} (\tsem \grasse{O^{\cf}_{\mathit{full}}(P,hp)})=\alpha_{\sch} (\tsem \grasse{P})$.
\end{theorem}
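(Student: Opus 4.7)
My strategy is to mirror the argument used in the proof of Theorem~\ref{corr-tte}, with constant folding playing the role of type specialization. By Lemma~\ref{lemma-corr} combined with Theorem~\ref{corr-th}, it is enough to exhibit two well-defined trace transformations
\[
\tod : \Trace_{\stitch_P(hp)} \ra \Trace_{O^{\cf}(\stitch_P(hp))}
\qquad
\tdo : \Trace_{O^{\cf}(\stitch_P(hp))} \ra \Trace_{\stitch_P(hp)}
\]
whose collecting lifts $\tod_{\full},\tdo_{\full}$ satisfy $\alpha_{\sch}\circ \tod_{\full} = \alpha_{\sch} = \alpha_{\sch}\circ \tdo_{\full}$.

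The key semantic fact, playing the role of Corollary~\ref{coro-correct} in the typed case, is the following: whenever $\rho \in \gamma_{\CP}(\rho_i^c)$ and $\cf_{hp}$ rewrites the assignment $\ell_i : x := E \ra L'$ into $\ell_i : x := E[y_1/v_{y_1},\ldots,y_k/v_{y_k}] \ra L'$, each folded variable $y_j$ must satisfy $\rho(y_j) \in \gamma_{cp}(\rho_i^c(y_j)) = \{v_{y_j}\}$, so syntactic substitution of $y_j$ by $v_{y_j}$ is semantically transparent, yielding $\esem\grasse{E}\rho = \esem\grasse{E[y_1/v_{y_1},\ldots,y_k/v_{y_k}]}\rho$ and hence $\asem\grasse{x := E}\rho = \asem\grasse{x := \cf_{hp}(E)}\rho$.

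On the strength of this equality I would define $\tod$ and $\tdo$ to leave every state unchanged except those whose command is an assignment at some $\ell_i$ that $\cf_{hp}$ actually folds; such a state has its command replaced by its $\cf_{hp}$-image (respectively $\cf_{hp}$-preimage). Within an inductive step the state at $\ell_i$ is always reached from the preceding guard state $\bbl_i : \guard\: E_{\rho_i^c} \ra \ell_i$, and the successful execution of that guard forces $\rho \in \gamma_{\CP}(\rho_i^c)$; the semantic fact then guarantees that the swapped command performs exactly the same store update, so $\tod$ and $\tdo$ send legal traces to legal traces. Since neither map ever touches a store, $\sch \circ \tod = \sch = \sch \circ \tdo$ is immediate and the same identity lifts through $\tod_{\full}$ and $\tdo_{\full}$, closing the argument via Lemma~\ref{lemma-corr}.

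The step I expect to be the main obstacle is the base case: a partial trace that begins at a state $\tuple{\rho, \ell_i : x := E \ra L'}$ with $\rho \not\in \gamma_{\CP}(\rho_i^c)$, so that the usual preceding guard is absent and the semantic fact does not apply. In the typed case this configuration was benign because a mismatched type guard forced the specialized addition to evaluate to $\undeff$ and thus truncated the trace to a single state whose image trivially preserves $\alpha_{\sch}$; for constant folding no such forced error arises and the folded and unfolded assignments can genuinely disagree on the resulting store, so $\tod$ and $\tdo$ must handle this configuration by an explicit truncation clause analogous to the first clause of the $td$ function defined in the proof of Theorem~\ref{corr-tte}. The bulk of the detail work will be verifying that this truncation, once lifted through $\tod_{\full}$ and $\tdo_{\full}$, still realizes the preservation identity required by Lemma~\ref{lemma-corr}.
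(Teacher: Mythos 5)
Your plan is the one the paper intends: the published proof is omitted with the remark that it "follows the same pattern" as Theorem~\ref{corr-tte} via Lemma~\ref{lemma-corr}, and your guard-transparency fact (if $\rho\in\gamma_{\CP}(\rho^c_i)$ then $\esem\grasse{E}\rho=\esem\grasse{E[y_1/v_{y_1},\ldots,y_k/v_{y_k}]}\rho$, hence the folded and unfolded assignments agree) is exactly the constant-propagation analogue of Corollary~\ref{coro-correct}; it disposes of every state that is reached through its preceding guard. But the configuration you flag in your last paragraph is not "detail work to be verified" --- it is a genuine gap that the proposed truncation clause cannot close. In the de-optimization direction, a trace of $O^{\cf}(\stitch_{P}(hp))$ may begin at $\tuple{\rho,\:\ell_i: x:=E[y_1/v_{y_1},\ldots,y_k/v_{y_k}]\ra L'}$ with $\rho\notin\gamma_{\CP}(\rho^c_i)$, and unlike $+_{\mathrm{Type}}$ the folded assignment is not stuck there: it performs a store update that in general differs from that of $x:=E$, and the trace then continues. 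Truncating its image to a single state makes $\sch(\tdo(\tau))$ a strict prefix of $\sch(\tau)$, so $\alpha_{\sch}\circ\tdo_{\full}=\alpha_{\sch}$ fails; and no alternative definition of $\tdo$ repairs this, because $\stitch_P(hp)$ --- indeed $P$ itself --- has no trace with that store-change sequence. (In the typed proof the corresponding clause of $td$ is sound only because the mistyped specialized addition evaluates to $\undeff$, forcing $\sigma=\epsilon$; constant folding has no such agree-or-stuck property, which is precisely your observation.)

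To see that the obstacle is real and not an artifact of one particular choice of $\tdo$, take the paper's own example: $\tuple{[x/0,a/7],\,\ell_2: x:=x+2\ra L_2}\tuple{[x/2,a/7],H_0}$ is a legal partial trace of $O^{\cf}_{\mathit{full}}(P,hp)$, yet no command of $P$ maps $[x/0,a/7]$ to $[x/2,a/7]$, so the sequence $[x/0,a/7]\,[x/2,a/7]$ belongs to $\alpha_{\sch}(\tsem\grasse{O^{\cf}_{\mathit{full}}(P,hp)})$ but not to $\alpha_{\sch}(\tsem\grasse{P})$. Hence the argument cannot be completed as proposed for the full partial-trace semantics: you must exclude these mid-path entries into the folded code, e.g.\ by proving the statement for traces that enter the stitched path through its guards (such as the initial-trace semantics $\tsem^\iota$), by weakening the observational abstraction so that it is insensitive to such entry states, or by making the guard--assignment pair atomic in the transform. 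The paper's omission of the proof glosses over exactly this point; your instinct that this is where the type-specialization pattern breaks is correct, but the fix has to change what is claimed (or what counts as a trace), not merely add a truncation clause to $\tod$ and $\tdo$.
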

This proof is 
omitted, since it follows the same pattern of Theorem~\ref{corr-tte} for 
the correctness of typed trace extraction, in particular it relies on Lemma~\ref{lemma-corr}. 

\begin{example}\rm
Let us consider the following program written in a while-language:

\medskip
{ 
$x:=0$; $a:=2$;

\While{~$(x\leq 15)$}{
   \Indp
   \noindent
   \lIf{$(x\leq 5)$}{$x:=x+a$}
   
   \noindent
   \textbf{else}~$\{ a:=a+1; ~x:=x+a;\}$
}
}

\medskip
\noindent
whose translation as $P\in \Program$ goes as follows: 
\begin{equation*}
\begin{aligned}
    P= \big\{ &C_0 \equiv L_0: x:=0 \ra L_1,\, C_1\equiv L_1: a:=2 \ra L_2\\
    & C_2 \equiv L_2: x\leq 15 \ra L_3,\, C_2^c \equiv L_2: \neg(x\leq 15) \ra L_7,\\
    & C_3 \equiv L_3: x\leq 5 \ra L_4,\, C_3^c \equiv L_3: \neg(x\leq 5) \ra L_5,\\
    & C_4 \equiv L_4: x:=x+a \ra L_2,\, C_5 \equiv L_5: a:=a+1 \ra L_6\\
    & C_6 \equiv L_6: x:=x+a \ra L_2,\, C_7 \equiv L_7: \cd{skip} \ra \L\big\}
\end{aligned}
\end{equation*}
The first traced 2-hot path for the abstraction $\CPst$ is:
$$hp=\tuple{[x/\top,a/2],C_2, [x/\top,a/2], C_3, [x/\top,a/2], C_4}.$$
In fact, the initial prefix of the complete trace of $P$ which corresponds to
the terminating run of $P$ is as follows:
\begin{multline*}
\tuple{[\,],C_0}\tuple{[x/0],C_1}\tuple{[x/0,a/2],C_2}\tuple{[x/0,a/2],C_3}\tuple{[x/0,a/2],C_4}
\tuple{[x/2,a/2],C_2}\\
\tuple{[x/2,a/2],C_3}\tuple{[x/2,a/2],C_4}\tuple{[x/4,a/2],C_2}\tuple{[x/4,a/2],C_3}\tuple{[x/4,a/2],C_4}
\end{multline*}
so that $hp \in \alpha_{\mathit{hot}}^2(\Trace_P)$. 
Hence, the constant folding optimization $O^{\cf}$ along $hp$ provides:
\[
O^{\cf}_{\mathit{full}}(P,hp) =
    P \!\smallsetminus\! \lbrace C_{2}, C_{2}^c \rbrace  \cup
    \{\overline{L_2} : x\leq 15 \ra L_3,\, \overline{L_2} : \neg(x\leq 15) \ra L_7\}
    \cup O^{\cf}(\stitch_{P}(hp))
\]
where:
  \begin{align*}
    O^{\cf}(\stitch_{P}(hp))=\big\{& H_0 \equiv L_{2} : \cdo{guard}\: [x\!:\!\top, a\!:\!2] \rightarrow
    \ell_0,\,
    H_0^c \equiv L_{2} : \neg \cdo{guard}\: [x\!:\!\top, a\!:\!2] \rightarrow
    \overline{L_2},\\
    & H_1 \equiv \ell_0 :x\leq 15 \ra \bbl_1,\,H_1^c \equiv \ell_0: \neg(x\leq 15) \ra L_7,\\
    & H_2 \equiv \bbl_1 : \cdo{guard}\:  [x\!:\!\top, a\!:\!2] \rightarrow
    \ell_1,\, H_2^c \equiv \bbl_1 : \neg \cdo{guard}\: [x\!:\!\top, a\!:\!2] \rightarrow
    L_3,\\
    &H_3 \equiv \ell_1 :  x\leq 5 \ra \bbl_2,\, H_3^c \equiv \ell_1 :  \neg(x\leq 5) \ra L_5,\\
    & H_4 \equiv \bbl_2 : \cdo{guard}\: [x\!:\!\top, a\!:\!2] \rightarrow
    \ell_2,\,
    H_4^c \equiv \bbl_2 : \neg \cdo{guard}\: [x\!:\!\top, a\!:\!2] \rightarrow L_4,\\   
    &H_5 \equiv \ell_2: x:=x+2 \ra L_2 \big\}.
   \end{align*}
Therefore,  this hot path optimization allows us to fold the constant value $2$ for the variable $a$, in the hot path
command $H_5\equiv \ell_2: x:=x+2 \ra L_2$.
\qed
\end{example}

\section{Nested Hot Paths}\label{mte-sec} 
Once a first hot path $hp_1$ has been extracted by transforming $P$ to $P_1\ud 
\extr_{hp_1}(P)$, it may well happen that
a new hot path $hp_2$ in $P_1$ contains $hp_1$ as a nested sub-path.  
Following TraceMonkey's trace recording strategy~\cite{gal2009}, we attempt to nest
an inner hot path inside the current trace: during trace recording, an inner hot path is
called as a kind of ``subroutine'', this executes a loop to a successful completion and then
returns to the trace recorder that may therefore register 
the inner hot path as part of a new hot path. 
    
In order to handle nested hot paths, we need a more general definition 
of hot path which takes into account previously extracted hot paths and a corresponding program transform
for extracting nested hot paths.    
Let $P$ be the original program and let $P'$ be a hot path transform of $P$ so that
$P'\smallsetminus P$ contains all the commands (guards included) in the hot path.  
We define a function $\hotcut:\Trace_{P'} \ra (\State_{P'})^*$ that cuts from 
an execution trace $\sigma$ of $P'$ all the states whose commands appear in some previous hot path $hp$ except 
for the entry and exit states of $hp$:
\begin{align*}
    &\hotcut(\sigma) \ud
    \begin{cases}
    \epsilon & \text{if } \sigma=\epsilon\\
     \hotcut(\tuple{\rho_1,C_1}\tuple{\rho_3,C_3}\sigma')        & \text{if }   \sigma=\tuple{\rho_1,C_1}\tuple{\rho_2,C_2}
     \tuple{\rho_3,C_3}\sigma'\,\&\, C_1,C_2,C_3\not\in P\\
     \sigma_0 \hotcut(\sigma_{1^{^{\!\shortrightarrow}}})        & \text{otherwise } \\            
    \end{cases}
\end{align*}

\noindent
In turn, we define $\outerhot^N:\Trace_{P'} \ra 
\wp((\State_{P'}^\sharp)^*)$ as follows:
\begin{multline*}
    \outerhot^N(\sigma) \ud \{\tuple{a_i,C_i} \cdots \tuple{a_j,C_j}\in (\State_{P'}^\sharp)^* 
    ~|~ 
    \exists \tuple{\rho_i,C_i}\cdots\tuple{\rho_j,C_j}\in \sloop(\hotcut(\sigma))\\
    \qquad\qquad\qquad \text{~such that~} i\leq j,\, 
    \astore(\tuple{\rho_i,C_i}\cdots\tuple{\rho_j,C_j}) = \tuple{a_i,C_i} \cdots \tuple{a_j,C_j},
    \\
    \scount(\astore(\hotcut(\sigma)),\tuple{a_i,C_i} \cdots \tuple{a_j,C_j} )\geq N\}.
\end{multline*}
Clearly, when $P'=P$ it turns out that $\hotcut = \lambda \sigma.\sigma$ so that $\outerhot^N=\hot^N$. 
We define the usual collecting version of $\outerhot^N$ on $\wp(\Trace_{P'})$ as the abstraction map
$\alpha_{\mathit{outerhot}}^N \ud \lambda T. \cup_{\sigma\in T} \outerhot^N(\sigma)$. Then, 
$\alpha_{\mathit{outerhot}}^N(\tsem\grasse{P'})$ provides the set of $N$-hot paths in $P'$. 

\begin{example}\label{ex-two}\rm
Let us consider again Example~\ref{ex-one}, where
$\Store^\sharp$ is the trivial one-point store abstraction $\{\top\}$. 
In Example~\ref{ex-one}, we first extracted $hp_1 = \tuple{\top,C_1, \top,C_2, \top, 
C_3^c}$ by transforming $P$ to
$P_1 \ud \extr_{hp}(P)$.
We then consider the following trace in $\tsem\grasse{P_1}$:
\begin{align*}
\sigma = &\tuple{[\, ],C_0}\tuple{[x/0],H_0}\tuple{[x/0],H_1}\tuple{[x/0],H_2}\tuple{[x/0],H_3}
\tuple{[x/1],H_4}\tuple{[x/1],H_5}\cdots \tuple{[x/2],H_3}\\ 
&\tuple{[x/3],H_4}\tuple{[x/3],H_5^c}
\tuple{[x/3],C_4}\tuple{[x/6],H_0}\cdots \tuple{[x/9],H_5^c}\tuple{[x/9],C_4}
\tuple{[x/12],H_0}\cdots
\end{align*}

\noindent
Thus, here we have that 
\[
\hotcut(\sigma) = \tuple{[\, ],C_0}\tuple{[x/0],H_0}\tuple{[x/3],H_5^c}
\tuple{[x/3],C_4}
\tuple{[x/6],H_0} 
\tuple{[x/9],H_5^c}\tuple{[x/9],C_4}\cdots
\]
so that $hp_2 = \tuple{\top, H_0,\top, H_5^c, \top, C_4}\in 
\alpha_{\mathit{outerhot}}^2 (\tsem\grasse{P_1})$.
Hence, $hp_2$ contains a nested hot path, which is called at the beginning
of $hp_2$ and whose entry and exit 
commands are, respectively, $H_0$ and $H_5^c$.  
\qed
\end{example}

Let $hp = \tuple{a_0,C_0, \ldots,a_n,C_n}\in \alpha_{\mathit{outerhot}}^N(\tsem\grasse{P'})$ be a
$N$-hot path in $P'$, where, for all $i\in [0,n]$, we assume that
$C_i\equiv L_i: A_i \ra L_{\textit{next}(i)}$. 
Let us note that:
\begin{itemize}
\item[--] If for all $i\in [0,n]$, $C_i\in P$ then $hp$ actually is a hot path in $P$, i.e.,
$hp\in \alpha_{\mathit{hot}}^N (\tsem\grasse{P})$. 
\item[--] Otherwise, there exists some $C_k\not\in P$. 
If $C_i\in P$ and $C_{i+1}\not\in P$ then $C_{i+1}$ is the entry command of some inner hot path; 
on the other hand, if $C_{i}\not\in P$ and $C_{i+1}\in P$ then $C_i$ is the exit command of some inner hot path. 
\end{itemize}

The transform of $P'$ for extracting  $hp$ is then given as the following generalization of 
Definition~\ref{tet-def}.  

\begin{definition}[\textbf{Nested trace extraction transform}]\label{tet-def2}
\rm
The \emph{nested trace extraction transform} of $P'$ for the hot path $hp=\tuple{a_0,C_0, \ldots,a_n,C_n}$ is:
\begin{equation*}
  \begin{aligned}
   \extr_{hp}(P') &\ud 
    P \\
   (1)\quad\qquad &\mkern-30mu \smallsetminus (\{ C_{0}~|~ C_0 \in P\} \cup\{cmpl(C_{0}) ~|~ cmpl(C_0) \in P\})\\
    (2)\quad\qquad &\mkern-30mu \cup \{\overline{H_0} : act(C_0) \ra L_1~|~C_0\in P\} \cup\{\overline{H_0} : \neg act(C_0) \ra L_1^c~|~cmpl(C_0)\in P\} \\
    (3)\quad\qquad &\mkern-30mu\cup \{ L_{0} : \cdo{guard}\: E_{a_0} \rightarrow \hbar_0,\, 
    L_{0} : \neg \cdo{guard}\:E_a \rightarrow \overline{H_0}~|~C_0\in P\}\\ 
   (4)\quad\qquad & \mkern-30mu\cup \{\hbar_i : act(C_{i}) \rightarrow \bbh_{i+1}~|~
    i\in [0,n-1], C_i, C_{i+1} \in P\} \cup \{\hbar_n: act(C_n) \ra L_0~|~C_n\in P\}\\
    (5)\quad\qquad &\mkern-30mu\cup \{\hbar_i : \neg act(C_{i}) \rightarrow L_{\textit{next}(i)}^c~|~
    i\in [0,n], C_i,cmpl(C_i) \in P\}\\
    (6)\quad\qquad &\mkern-30mu\cup \{\bbh_i: \guard~ E_{a_i} \ra \hbar_i, \bbh_i: \neg \guard~ E_{a_i} \ra L_i  ~|~ i\in [1,n], C_i \in P\}\\
    (7)\quad\qquad &\mkern-30mu\cup \{\hbar_i : act(C_{i}) \rightarrow L_{i+1}~|~
    i\in [0,n-1], C_i\in P, C_{i+1} \not\in P\} \\
    (8)\quad\qquad &\mkern-30mu\smallsetminus \{C_i ~|~ i\in [0,n-1], C_i \not\in P, C_{i+1}\in P\}\\
   (9)\quad\qquad &\mkern-30mu\cup \{L_i : act(C_{i}) \rightarrow \bbh_{i+1}~|~
    i\in [0,n-1], C_i\not\in P, C_{i+1} \in P\} 
       \end{aligned}
\end{equation*}
where we define $\stitch_{P'}(hp)\ud (3) \cup (4) \cup (5) \cup (6) \cup (7) \cup (9)$.
\qed
\end{definition}
\noindent
Let us observe that:
\begin{itemize}
\item[--] Clauses (1)--(6) are the same clauses of the trace extraction transform of 
Definition~\ref{tet-def}, with
the additional constraint that all the commands $C_i$ of $hp$ are required to belong to the original program $P$. This is
equivalent to ask that 
any $C_i$ is not the entry or exit command of a nested hot path inside $hp$, i.e.,  $C_i \not \in P'\smallsetminus P$.  
In  Definition~\ref{tet-def}, where no previous hot path extraction is assumed, 
any command $C_i$ of $hp$ belongs to $P$, so that this constraint is trivially satisfied.  
\item[--] Clause (7) where $C_i\in P$ and $C_{i+1}\not \in P$, namely 
$next(C_i)$ is the call program 
point of a nested hot path $nhp$ and $C_{i+1}$ is the entry command 
of $nhp$, performs a relabeling that allows to neatly nest $nhp$ in $hp$. 
 
\item[--] Clauses (8)--(9) where $C_i\not\in P$ and $C_{i+1}\in P$, i.e., 
$C_i$ is the exit command of a nested hot path $nhp$ that returns to the
program point $lbl(C_{i+1})$, performs the relabeling of $suc(C_i)$ in $C_i$ in order
to return from $nhp$ to $hp$;  
\item[--] $\overline{H_0}$, $\hbar_i$ and $\bbh_i$ are meant to be fresh labels, i.e., they
have not been already used in $P'$.
\end{itemize}

\begin{example}\label{ex-three}
\rm 
Let us go on with Example~\ref{ex-two}. 
The second traced hot path in $\alpha_{\mathit{outerhot}}^2 (\tsem\grasse{P_1})$ is:  
\begin{multline*}
hp_2 = 
\tuple{\top,  H_0\equiv L_1: \guard~E_\top \ra \ell_0,\\ 
\top, H_5^c\equiv \ell_2: (x\% 3 = 0) \ra L_4,
\top, C_4\equiv 
L_4: x:=x+3 \ra L_1}.
\end{multline*}
According to Definition~\ref{tet-def2}, trace extraction of $hp_2$ in $P_1$ yields 
the following transform: 
\begin{equation*}
  \begin{aligned}
    \extr_{hp_2}(P_1) \ud &\\
 \text{[by clause~(8)]}\qquad   & P_1 \smallsetminus \{H_5^c\}\\
   \text{[by clause~(9)]}\qquad &\cup \{ \ell_2:(x\%3 = 0) \ra \bbh_2\}\\
    \text{[by clause~(6)]}\qquad&\cup \{ \bbh_2: \guard~E_\top \ra \hbar_2, \bbh_2: \neg\guard~E_\top \ra L_4\}\\
    \text{[by clause~(4)]}\qquad&\cup \{ \hbar_2: x:=x+3 \ra L_1\} 
  \end{aligned} 
\end{equation*}
where we used the additional fresh labels $\bbh_2$ and $\hbar_2$.
\qed
\end{example}

\begin{example}\rm
Let us consider again Example~\ref{sieve-ex}. After the trace extraction of $hp_1$ that transforms $P$ to $P_1$, 
a
second traced 2-hot path is the following:
$$hp_2 \ud \tuple{\rho^t,
C_1,\rho^t, C_2,\rho^t, C_3 ,\rho^t, H_0,\rho^t, H_1^c,\rho^t, C_7 }$$
where $\rho^t = \{ \mathit{primes}[n]/\Bool, i/\Int, k/\Int \}\in \Store^t$. 
Thus, $hp_2$ contains a nested hot path which 
is called at $suc(C_3)=L_4$ and whose entry and exit commands are, respectively, 
$H_0$ and $H_1^c$.  
Here, typed  trace extraction according to Definition~\ref{tet-def2}
provides 
the following transform of $P_1$:
\begin{equation*}
  \begin{aligned}
    P_2 \ud O^{\ts}_{\mathit{full}}(P_1,hp_2) =
    P_1 &\smallsetminus  \{C_{1}, C_{1}^c\} \cup \big\{\\
    & \overline{H_0} : i<100 \ra L_2,\,  \overline{H_0} : \neg(i<100) \ra L_8,\\
    & H_6\equiv  L_{1} : \cdo{guard}\: (\mathit{primes}[n]:\Bool, i:\Int, k:\Int) \rightarrow
    \hbar_0,\\ 
    & H_6^c\equiv L_{1} : \neg \cdo{guard}\: (\mathit{primes}[n]:\Bool, i:\Int, k:\Int) \rightarrow
    \overline{H_0},\\
    & H_7\equiv \hbar_0 : i<100 \ra \bbh_1,\,H_7^c\equiv \hbar_0: \neg(i<100) \ra L_8,\\
    & H_{8}\equiv \bbh_1: \cdo{guard}\: (\mathit{primes}[n]:\Bool, i:\Int, k:\Int) 
    \ra \hbar_1,\\ 
    & H_{8}^c \equiv \bbh_1: \neg 
    \cdo{guard}\: (\mathit{primes}[n]:\Bool, i:\Int, k:\Int) 
    \ra L_2,\\
    &H_{9}\equiv \hbar_1 : \mathit{primes}[i]=\cdo{tt} \ra \bbh_1,\,
    H_{9}^c\equiv \hbar_1 : \neg(\mathit{primes}[i]=\cdo{tt}) \ra L_7,\\
    & H_{10}\equiv \bbh_2: \cdo{guard}\: (\mathit{primes}[n]:\Bool, i:\Int, k:\Int) 
    \ra \hbar_2,\\ 
    &H_{10}^c \equiv \bbh_2: \neg 
    \cdo{guard}\: (\mathit{primes}[n]:\Bool, i:\Int, k:\Int) 
    \ra L_3,\\
    & H_{11}\equiv \hbar_2: k:=i+_{\Int} i \ra L_4 \big \}\\
    & \smallsetminus \{H_1^c\} \cup \big\{(H_1^c)' \equiv \ell_0: \neg(k<100) \ra \bbh_3,\\
     & H_{12}\equiv \bbh_3: \cdo{guard}\: (\mathit{primes}[n]:\Bool, i:\Int, k:\Int) 
    \ra \hbar_3,\\ 
    &H_{12}^c \equiv \bbh_3: \neg 
    \cdo{guard}\: (\mathit{primes}[n]:\Bool, i:\Int, k:\Int) 
    \ra L_7,\\
    &H_{13}\equiv \hbar_3: i:=i+_{\Int} 1 \ra L_1\big\}.
   \end{aligned}
\end{equation*}
\noindent
Finally, a third traced 2-hot path in $P_2$ is 
$hp_3 \ud \tuple{\rho^t, H_6, \rho^t, H_{9}^c, \rho^t, C_7 }$ 
which contains a nested hot path which is called at the beginning
of $hp_3$ and whose entry and exit commands are,
respectively, $H_6$ and $H_{9}^c$. Here, 
typed trace extraction of $hp_3$ yields:
\begin{equation*}
  \begin{aligned}
    P_3 \ud O^{\ts}_{\mathit{full}}(P_2,hp_3) =
    P_2 \smallsetminus \{ H_{9}^c \} \cup \big\{
    &(H_{9}^c)'\equiv \hbar_1 : \neg (\mathit{primes}[i]= \cdo{tt}) \ra \bbj_2,\,\\
    & \bbj_2: \cdo{guard}\: (\mathit{primes}[n]:\Bool, i:\Int, k:\Int) 
    \ra \jmath_2,\\ 
    & \bbj_2: \neg 
    \cdo{guard}\: (\mathit{primes}[n]:\Bool, i:\Int, k:\Int) 
    \ra L_7,\\
    & \jmath_2: i:=i+_{\Int} 1 \ra L_1
  \big\}.
   \end{aligned}
\end{equation*}
We have thus obtained the same three trace extraction steps described by 
\citeN[Section~2]{gal2009}. In particular, in $P_1$ we specialized the typed
addition operation ${k+_{\Int} i}$, in $P_2$ we specialized $i+_{\Int} i$ and
$i+_{\Int} 1$, while in $P_3$ we specialized once again $i+_{\Int} 1$ in a different
hot path. Thus, in 
$P_3$ all the addition operations occurring in assignments have been type specialized. 
\qed
\end{example}

\section{Comparison with Guo and Palsberg's Framework}\label{GP-sec}
A formal model for tracing JIT compilation has been put forward at POPL~2011 symposium by 
\citeN{palsberg}. Its main distinctive feature is the use of a
bisimulation relation \cite{milner95} 
to model the
operational equivalence between source and optimized programs. 
In this section, we show
how this model can be
expressed within our framework.

\subsection{Language and Semantics}
\citeN{palsberg} rely on a simple imperative language (without jumps and) with while loops and 
a so-called bail construct. Its syntax is as follows:
\begin{align*}
&E::= v ~|~ x ~|~ E_1 + E_2\\ 
& B::= \cd{tt}~|~\cd{ff}~|~ E_1 \leq E_2 ~|~ \neg B ~|~ B_1 \wedge B_2 \\
 &\Cmd\ni c ::= \KwSty{skip}; ~|~ x:=E; ~|~ \KwSty{if}~B~\KwSty{then}~S ~|~
\KwSty{while}~B~\KwSty{do}~S~|~
\KwSty{bail}~B~\KwSty{to}~S \\
&\Stm\ni S ::= \epsilon ~|~ c S
\end{align*}
where $\epsilon$ stands for the empty string. Thus, any statement $S\in \Stm$ is a (possibly empty) sequence
of commands $c^n$, with $n\geq 0$.  
We follow \citeN{palsberg} in making an abuse in program syntax by assuming that 
if $S_1,S_2\in \Stm$ then $S_1 S_2\in \Stm$, where $S_1 S_2$ denotes a simple string concatenation of $S_1$ and $S_2$.  
We denote by $\State_{\GP} \ud \Store \times \Stm$ the set of states for this language.  
The baseline small-step operational semantics $\ra_B \: \subseteq \State_{\GP} \times \State_{\GP}$
is standard and is given in continuation-style (where $K\in \Stm$): 
\begin{equation*}
\begin{array}{ll}
\tuple{\rho, \epsilon } \not \ra_B  &  \\[5pt]
\tuple{\rho, \KwSty{skip}; K} \ra_B \tuple{\rho,K} &  \\[5pt]
\tuple{\rho, x:=E; K} \ra_B \tuple{\rho[x/\esem\grasse{E}\rho],K} &  \\[5pt]
\tuple{\rho, (\KwSty{if}~B~\KwSty{then}~S) K} \ra_B 
\begin{cases}
\tuple{\rho,K} & 
\text{if}~\bsem\grasse{B}\rho = \textit{false}\\
\tuple{\rho,S K} & 
\text{if}~\bsem\grasse{B}\rho = \textit{true}\\
\end{cases}\\[15pt]
\tuple{\rho, (\KwSty{while}~B~\KwSty{do}~S) K} \ra_B \tuple{\rho,
(\KwSty{if}~B~\KwSty{then}~(S\, \KwSty{while}~B~\KwSty{do}~S))\, K} &\\[5pt]
\tuple{\rho, (\KwSty{bail}~B~\KwSty{to}~S) K} \ra_B 
\begin{cases}
\tuple{\rho,K} & 
\text{if}~\bsem\grasse{B}\rho = \textit{false}\\
\tuple{\rho,S} & 
\text{if}~\bsem\grasse{B}\rho = \textit{true}
\end{cases}
\end{array}	 
\end{equation*}

\noindent
The relation $\ra_B$ is clearly deterministic and we denote by 
$$\Trace^{\GP}\ud \{\sigma\in \State_{\GP}^+~|~\forall i\in [0,|\sigma|-1). \,
\sigma_{i} \ra_B \sigma_{i+1}\}$$ the set of 
generic program traces for Guo and Palsberg's language. Then,
given a program $S\in \text{Stm}$, so that $\Store_S\ud \mathit{vars}(S) \ra \Valueu$ denotes
the set of stores for $S$, its partial trace semantics is 
$$
\tsem_{\GP}\grasse{S} = \Trace^{\GP}_S \ud \{ 
\sigma\in \Trace^{\GP} ~|~\sigma_0 = \tuple{\rho,S},\, \rho \in \Store_S\}.
$$
Notice that, differently from our trace semantics, a partial trace of the program $S$ 
always starts from an initial state, i.e., $\tuple{\rho,S}$.

\subsection{Language Compilation}

Programs in $\Stm$ can be compiled into $\Program$ by resorting 
to an \emph{injective} labeling function $\elb:\Stm \ra \mathbb{L}$ that assigns different
labels to different statements. 

\begin{definition}[\textbf{Language compilation}] \label{compilation}
\rm
The ``first command'' compilation function $\comC:\Stm \ra \wp(\mathbb{C})$ is defined as follows: 
\begin{equation*}
\begin{aligned}
\comC(\epsilon) \ud & \:\{\elb(\epsilon) : \cd{skip}\ra \L\}\\
\comC\big(S'\equiv (\KwSty{skip}; K)\big) \ud &\: \{\elb(S'): \cd{skip} \ra \elb(K) \}\\
\comC\big(S'\equiv (x:=E; K)\big) \ud & \:\{\elb(S'): x:=E \ra \elb(K) \}\\
\comC\big(S'\equiv ((\KwSty{if}~B~\KwSty{then}~S)K)\big) \ud &
\: \{\elb(S'): B \ra \elb(SK),\, \elb(S'): \neg B \ra \elb(K) \}\\
\comC\big(S'\equiv ((\KwSty{while}~B~\KwSty{do}~S) K)\big) \ud &
\: \{\elb(S') : \cd{skip} \ra \elb((\KwSty{if}~B~\KwSty{then}~(S\,\KwSty{while}~B~\KwSty{do}~S)) K)\}\\
\comC\big(S'\equiv((\KwSty{bail}~B~\KwSty{to}~S) K)\big) \ud &
\: \{\elb(S'): B \ra \elb(S),\, \elb(S'): \neg B \ra \elb(K) \} 
\end{aligned}
\end{equation*}

\noindent 
Then, the
full compilation function $\cC: \Stm \ra \wp(\mathbb{C})$ is recursively defined
by the following clauses:
\begin{equation*}
\begin{aligned}
\cC(\epsilon) \ud & \: \comC(\epsilon)\\
\cC(\KwSty{skip}; K) \ud &\: \comC(\KwSty{skip}; K) \cup \cC(K)\\
\cC(x:=E; K) \ud & \:\comC(x:=E; K) \cup \cC(K) \\
\cC((\KwSty{if}~B~\KwSty{then}~S)K) \ud &\: \comC((\KwSty{if}~B~\KwSty{then}~S)K)
  \cup \cC(SK) \cup \cC(K)\\
\cC((\KwSty{while}~B~\KwSty{do}~S) K) \ud &\: \comC((\KwSty{while}~B~\KwSty{do}~S) K) \cup \cC((\KwSty{if}~B~\KwSty{then}~(S\, \KwSty{while}~B~\KwSty{do}~S)) K) \\
\cC((\KwSty{bail}~B~\KwSty{to}~S) K) \ud &
\: \comC((\KwSty{bail}~B~\KwSty{to}~S) K) 
\cup \cC(S) \cup \cC(K) 
\end{aligned}
\end{equation*}  
Given $S\in \Stm$, $\elb(S)$ is the initial label of $\cC(S)$, while $\L$ is, as usual, the undefined label
where the execution becomes stuck. 
\hfill\ensuremath{\qed}
\end{definition}

It turns out that the recursive function $\cC$ is well-defined---the easy proof is standard and is omitted, let us just
observe that $\cC((\KwSty{while}~B~\KwSty{do}~S) K)$ is a base case---so that, 
for any $S\in \Stm$, $\cC(S)$ is a finite set
of commands. 
Let us observe that, by Definition~\ref{compilation}, if 
$\tuple{\rho,S}\ra_B \tuple{\rho',S'}$ then 
$\cC(S')\subseteq \cC(S)$ (this can be proved through an easy structural induction on $S$). Consequently, 
if  $\tuple{\rho,S}\ra_B^* \tuple{\rho',S'}$ then 
$\cC(S')\subseteq \cC(S)$. 

\begin{example}\label{ex-compilation}
\rm
Consider the following program $S\in\Stm$ in Guo and Palsberg's syntax:

\medskip
{%
$x:=0$;

\KwSty{while}~$B_1$~\KwSty{do}~$x:=1$;

$x:=2$;

\KwSty{bail}~$B_2$~\KwSty{to}~$x:=3$;

$x:=4$;

}

\medskip
\noindent
$S$ is then compiled in our language by $\cC$ in Definition~\ref{compilation}  as follows:
\begin{equation*}
  \begin{aligned}
    \cC(S) = \big\{ &\elb (S) : x:=0 \ra \elb_{\KwSty{while}},\,
    \elb_{\KwSty{while}}: \cd{skip} \ra \elb_{\KwSty{ifwhile}},\\
    &\elb_{\KwSty{ifwhile}}: B_1 \ra \elb_1,\, \elb_{\KwSty{ifwhile}}: \neg B_1 \ra \elb_2,\,
    \elb_1: x:=1 \ra \elb_{\KwSty{while}},\\ 
    & \elb_2: x:=2 \ra \elb_{\KwSty{bail}},\,
      \elb_{\KwSty{bail}}: B_2 \ra \elb_3,\; \elb_{\KwSty{bail}}: \neg B_2 \ra \elb_4, \\
     &\elb_3: x:=3 \ra \elb_\epsilon,\, \elb_4: x:=4\ra \elb_\epsilon,\, \elb_\epsilon : \cd{skip}\ra \L\big\}.
   \end{aligned}
\end{equation*}
Notice that in the command $\elb_{\KwSty{bail}}: B_2 \ra \elb_3$, the label $\elb_3$ stands for 
$\elb (x:=3;)$ so that $\cC(x:=3;) \equiv \elb_3: x:=3 \ra \elb_\epsilon$, i.e.,
after the execution of $x:=3$ the program terminates. 
\qed
\end{example}

Correctness for the above compilation function $\cC$ means that for
any $S\in \Stm$: (i)~$\cC(S)\in \Program$ and (ii)~program 
traces of $S$ and $\cC(S)$ have the same store
sequences. In the proof we will make use of a ``state compile'' function $\stateC:\State_{\GP} \ra \State$
as defined in Figure~\ref{scfun}. In turn, $\stateC$ allows us to define a ``trace compile'' 
function $\traceC: \tsem_{\GP}\grasse{S} \ra \tsem^\iota\grasse{\cC(S)}$ which applies
state-by-state the function $\stateC$ to traces as follows:
$$\traceC(\epsilon) \ud \epsilon;\quad \traceC(s\tau)\ud \stateC(s) \traceC(\tau).$$ 

\begin{figure}
\begin{align*}
\stateC(\tuple{\rho,\epsilon}) & \ud \tuple{\rho, \elb(\epsilon) : \cd{skip} \ra \L}\\
\stateC(\tuple{\rho, S\equiv (\KwSty{skip}; K)}) & \ud \tuple{\rho, \elb(S): \cd{skip} \ra \elb(K)}\\
\stateC(\tuple{\rho, S\equiv (x:=E; K)}) & \ud \tuple{\rho, \elb(S): x:=E \ra \elb(K)}\\
\stateC(\tuple{\rho, S\equiv ((\KwSty{if}~B~\KwSty{then}~S')K)}) & \ud 
\begin{cases}
\tuple{\rho, \elb(S): B \ra \elb(S' K)} & \text{if~} \bsem\grasse{B}\rho = \textit{true}\\
\tuple{\rho, \elb(S): \neg B \ra \elb(K)} & \text{if~} \bsem\grasse{B}\rho = \textit{false}
\end{cases}\\
\stateC(\tuple{\rho, S\equiv ((\KwSty{while}~B~\KwSty{do}~S')K)}) & \ud 
\tuple{\rho, \elb(S): \cd{skip} \ra \elb((\KwSty{if}~B~\KwSty{then}~(S'\, \KwSty{while}~B~\KwSty{do}~S'))K)}\\
\stateC(\tuple{\rho, S\equiv ((\KwSty{bail}~B~\KwSty{to}~S')K)}) & \ud 
\begin{cases}
\tuple{\rho, \elb(S): B \ra \elb(S')} & \text{if~} \bsem\grasse{B}\rho = \textit{true}\\
\tuple{\rho, \elb(S): \neg B \ra \elb(K)} & \text{if~} \bsem\grasse{B}\rho = \textit{false}
\end{cases}
\end{align*}
\caption{Definition of the state compile function $\stateC:\State_{\GP} \ra \State$.}
\label{scfun}
\end{figure}

\begin{lemma}\label{sc-tc-lemma}
\begin{enumerate}
\item[{\rm (1)}] $\tuple{\rho,S}  \ra_B \tuple{\rho',S'} \; \Lra\; \stateC(\tuple{\rho',S'}) \in \ssem (\stateC(\tuple{\rho,S}))$
\item[{\rm (2)}] $\traceC$ is well-defined.
\end{enumerate}
\end{lemma}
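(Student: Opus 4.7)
The plan is to prove part~(1) by a case analysis on the syntactic form of the statement $S$, and then obtain part~(2) as a straightforward induction on the length of the trace using~(1) together with the definition of $\traceC$.

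For part~(1), I note that in the baseline semantics $\ra_B$ every non-stuck configuration $\tuple{\rho,S}$ fires exactly one rule, dictated by the head command of~$S$ (with the sub-case for conditionals and bails further split on the truth value of~$B$ at~$\rho$). Correspondingly, the definition of $\stateC$ in Figure~\ref{scfun} selects, for the very same $S$, exactly the command produced by the ``first command'' compilation $\comC$ in Definition~\ref{compilation}. So for each of the seven cases~--~$\KwSty{skip}$, assignment, $\KwSty{if}$ (true/false), $\KwSty{while}$, $\KwSty{bail}$ (true/false)~--~I would simply compute both $\tuple{\rho',S'}$ and $\stateC(\tuple{\rho,S})$ and observe that the unique $\ssem$-successor of $\stateC(\tuple{\rho,S})$ has store $\rho'$ (using $\asem$ on the corresponding action) and label $\elb(S')$. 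For instance, for $S\equiv (x:=E;K)$ we have $S'\equiv K$, $\rho' = \rho[x/\esem\grasse{E}\rho]$, $\stateC(\tuple{\rho,S}) = \tuple{\rho,\elb(S):x:=E\ra \elb(K)}$, and the unique $\ssem$-successor is $\tuple{\rho',\elb(K):A\ra L'}$ for the unique command in $\State$ whose label is $\elb(K)$, which coincides with $\stateC(\tuple{\rho',K})$.

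The converse direction ($\La$) uses exactly the same case analysis, this time driven by the shape of $S$ (which again determines the shape of $\stateC(\tuple{\rho,S})$); determinism of the rules of $\ra_B$ and the injectivity of $\elb$ guarantee that the unique $\ssem$-successor identified on the left matches $\stateC(\tuple{\rho',S'})$ for the unique $S'$ and $\rho'$ produced by $\ra_B$. The main (minor) obstacle is a bookkeeping subtlety: the transition $\ssem$ on states depends only on $\State$ and not on a particular program, so the equivalence in~(1) holds generically; when I later need a transition in $\ssem\grasse{\cC(S)}$ for part~(2), I additionally have to verify that both endpoint commands belong to $\cC(S)$. This is an easy structural induction on $S$ using the closure observation already noted after Definition~\ref{compilation}, namely that $\tuple{\rho,S}\ra_B^*\tuple{\rho',S'}$ implies $\cC(S')\subseteq \cC(S)$, and that the command produced by $\stateC(\tuple{\rho,S'})$ is precisely an element of $\comC(S')\subseteq \cC(S')$.

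For part~(2), I must show that $\traceC(\sigma)\in \tsem^\iota\grasse{\cC(S)}$ whenever $\sigma\in \tsem_{\GP}\grasse{S}$. Since $\sigma_0 = \tuple{\rho,S}$, we have $\stateC(\sigma_0) = \tuple{\rho,\elb(S):A\ra L'}$ for some command in $\comC(S)\subseteq \cC(S)$ with initial label $\elb(S)=L_\iota$, so $\traceC(\sigma)_0$ is indeed an initial state of $\cC(S)$. For the transitions, I proceed by induction on $|\sigma|$: the base cases are trivial, and in the inductive step I apply part~(1) together with the containment $\cC(S_{i+1})\subseteq \cC(S_i)\subseteq\cdots\subseteq \cC(S)$ to promote the generic $\ssem$-transition given by~(1) to a transition in $\ssem\grasse{\cC(S)}$. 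Hence every consecutive pair $\traceC(\sigma)_i, \traceC(\sigma)_{i+1}$ is a legal $\cC(S)$-transition, concluding $\traceC(\sigma)\in \tsem^\iota\grasse{\cC(S)}$.
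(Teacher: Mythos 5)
Your proposal is correct and follows essentially the same route as the paper: part~(1) by a case analysis on the syntactic form of $S$ proving both implications (using the store update computed by $\asem$, the successor label $\elb(S')$, and injectivity of $\elb$), and part~(2) by induction on the trace length, promoting the generic $\ssem$-steps of~(1) to $\ssem\grasse{\cC(S)}$-steps via the closure property $\cC(S')\subseteq\cC(S)$ and the initial-label observation. The only cosmetic difference is that you make explicit the bookkeeping about command membership in $\cC(S)$, which the paper states more tersely.
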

\begin{proof}
We show the equivalence (1) by structural induction on $S\in \Stm$.

\medskip
$[S\equiv \epsilon]$: Trivially true, since $\tuple{\rho,S}  \not\ra_B$ and $\ssem \tuple{\rho,\elb(\epsilon) : \cd{skip} \ra \L} = \varnothing$.

\medskip
$[S\equiv x:=E; K]$ $(\Ra)$: If $\tuple{\rho,x:=E; K} \ra_B \tuple{\rho[x/\esem\grasse{E}\rho],K}$, 
$\stateC(\tuple{\rho, x:=E; K})=\tuple{\rho, \elb(S):x:=E \ra \elb(K)}$ 
and $\stateC(\tuple{\rho[x/\esem\grasse{E}\rho], K}) = \tuple{\rho[x/\esem\grasse{E}\rho], \elb(K): A \ra \elb(S')}$ for some action $A$ and statement $S'$, then, by definition of the transition semantics $\ssem$,   
$\tuple{\rho[x/\esem\grasse{E}\rho], \elb(K): A \ra \elb(S')} \in 
\ssem\tuple{\rho, \elb(S):x:=E \ra \elb(K)}$. 

\noindent
$(\La)$: If $\tuple{\rho'',C}=\stateC(\tuple{\rho', S'}) \in 
\ssem\tuple{\rho, \elb(S):x:=E \ra \elb(K)}$ then: (1)~$\esem\grasse{E}\rho \neq \mathit{undef}$, 
(2)~$\rho''=\rho[x/\esem\grasse{E}\rho]$, and therefore $\rho'=\rho[x/\esem\grasse{E}\rho]$; 
(3)~$lbl(C)=\elb(K)$, and therefore $S'=K$. Hence, $\tuple{\rho,x:=E; K}  \ra_B \tuple{\rho[x/\esem\grasse{E}\rho],K}=\tuple{\rho',S'}$.

\medskip
$[S\equiv \KwSty{skip}; K]$  Analogous to $S\equiv x:=E; K$.

\medskip
$[S\equiv (\KwSty{if}~B~\KwSty{then}~T)K]$ $(\Ra)$:
Assume that $\bsem\grasse{B}\rho =\textit{false}$, so that 
$\tuple{\rho,(\KwSty{if}~B~\KwSty{then}~T)K} \ra_B \tuple{\rho,K}$, 
$\stateC(\tuple{\rho, (\KwSty{if}~B~\KwSty{then}~T)K})=\tuple{\rho, \elb(S): \neg B \ra \elb(K)}$ and  
$\stateC(\tuple{\rho, K}) = \tuple{\rho, \elb(K): A \ra \elb(T')}$ for some $A$ and 
$T'\in\Stm$. Hence, by definition of $\ssem$,   
$\tuple{\rho, \elb(K): A \ra \elb(T')} \in 
\ssem\tuple{\rho, \elb(S): \neg B \ra \elb(K)}$. 
On the other hand, if $\bsem\grasse{B}\rho =\textit{true}$ then 
$\tuple{\rho,(\KwSty{if}~B~\KwSty{then}~T)K} \ra_B \tuple{\rho,T K}$, 
$\stateC(\tuple{\rho, (\KwSty{if}~B~\KwSty{then}~T)K})=\tuple{\rho, \elb(S): B \ra \elb(T K)}$ and  
$\stateC(\tuple{\rho, T K}) = \tuple{\rho, \elb(T K): A \ra \elb(T')}$ for some $A$ and 
$T'$. Hence,   
$\tuple{\rho, \elb(T K): A \ra \elb(T')} \in 
\ssem\tuple{\rho, \elb(S): B \ra \elb(T K)}$. 

\noindent
$(\La)$: Assume that $\bsem\grasse{B}\rho =\textit{false}$, so that $\stateC(\tuple{\rho, (\KwSty{if}~B~\KwSty{then}~T)K})=\tuple{\rho, \elb(S): \neg B \ra \elb(K)}$, and  
$\tuple{\rho'',C}=\stateC(\tuple{\rho', S'}) \in 
\ssem\tuple{\rho, \elb(S): \neg B \ra \elb(K)}$. 
Hence: 
(1)~$\rho''=\rho$ and therefore $\rho'=\rho$; 
(2)~$lbl(C)=\elb(K)$, and therefore $S'=K$. Hence, 
$\tuple{\rho, (\KwSty{if}~B~\KwSty{then}~T)K}  \ra_B \tuple{\rho,K}=\tuple{\rho',S'}$.
On the other hand, if 
$\bsem\grasse{B}\rho =\textit{true}$ then 
$\stateC(\tuple{\rho, (\KwSty{if}~B~\KwSty{then}~T)K})=\tuple{\rho, \elb(S): B \ra \elb(T K)}$ and  
$\tuple{\rho'',C}=\stateC(\tuple{\rho', S'}) \in 
\ssem\tuple{\rho, \elb(S): B \ra \elb(T K)}$. 
We thus have that:
(1)~$\rho''=\rho$ and therefore $\rho'=\rho$; 
(2)~$lbl(C)=\elb(TK)$, and therefore $S'=TK$. Hence, 
$\tuple{\rho, (\KwSty{if}~B~\KwSty{then}~T)K}  \ra_B \tuple{\rho,TK}=\tuple{\rho',S'}$.

\medskip
$[S\equiv (\KwSty{while}~B~\KwSty{do}~T)K ]$ 
$(\Ra)$: We have that $\tuple{\rho,(\KwSty{while}~B~\KwSty{do}~T)K} \ra_B 
\tuple{\rho,(\KwSty{if}~B~\KwSty{then}~(T\, \KwSty{while}~B~\KwSty{do}~T))\, K}$ and
$\stateC(\tuple{\rho, (\KwSty{while}~B~\KwSty{do}~T)K})=\tuple{\rho, \elb(S):\cd{skip} \ra 
\elb((\KwSty{if}~B~\KwSty{then}~(T\, \KwSty{while}~B~\KwSty{do}~T))\, K)}$. 
If  
$\bsem\grasse{B}\rho =\textit{true}$ then
$\stateC(\tuple{\rho, (\KwSty{if}~B~\KwSty{then}~(T\, \KwSty{while}~B~\KwSty{do}~T))\, K}) = 
\tuple{\rho, \elb((\KwSty{if}~B~\KwSty{then}~(T\, \KwSty{while}~B~\KwSty{do}~T))\, K): B \ra 
\elb(T\, (\KwSty{while}~B~\KwSty{do}~T) K)}$; on the other hand, 
if $\bsem\grasse{B}\rho =\textit{false}$ then
$\stateC(\tuple{\rho, (\KwSty{if}~B~\KwSty{then}~(T\, \KwSty{while}~B~\KwSty{do}~T))\, K}) = 
\tuple{\rho, \elb((\KwSty{if}~B~\KwSty{then}~(T\, \KwSty{while}~B~\KwSty{do}~T))\, K): \neg B \ra 
\elb(K)}$. In both cases, we have that:
\begin{align*}
&\tuple{\rho, \elb((\KwSty{if}~B~\KwSty{then}~(T\, \KwSty{while}~B~\KwSty{do}~T))\, K): B \ra 
\elb(T\, (\KwSty{while}~B~\KwSty{do}~T) K)},\\
&\tuple{\rho, \elb((\KwSty{if}~B~\KwSty{then}~(T\, \KwSty{while}~B~\KwSty{do}~T))\, K): B \ra 
\elb(K)}\\
 &\qquad\qquad\qquad\qquad \in 
\ssem\tuple{\rho, \elb(S):\cd{skip} \ra 
\elb((\KwSty{if}~B~\KwSty{then}~(T\, \KwSty{while}~B~\KwSty{do}~T))\, K)}.
\end{align*}

\noindent
$(\La)$: If $\tuple{\rho'',C}=\stateC(\tuple{\rho', S'}) \in 
\ssem\tuple{\rho, \elb(S):\cd{skip} \ra \elb((\KwSty{if}~B~\KwSty{then}~(T\, \KwSty{while}~B~\KwSty{do}~T))\, K)}$ 
then:
(1)~$\rho''=\rho$, and therefore $\rho'=\rho$; 
(2)~$lbl(C)= \elb((\KwSty{if}~B~\KwSty{then}~(T\, \KwSty{while}~B~\KwSty{do}~T))\, K)$, and therefore $S'=(\KwSty{if}~B~\KwSty{then}~(T\, \KwSty{while}~B~\KwSty{do}~T))\, K$. 
Hence, $\tuple{\rho, (\KwSty{while}~B~\KwSty{do}~T)K }  \ra_B 
\tuple{\rho, (\KwSty{if}~B~\KwSty{then}~(T\, \KwSty{while}~B~\KwSty{do}~T))\, K}=\tuple{\rho',S'}$.

\medskip
$[S\equiv (\KwSty{bail}~B~\KwSty{to}~T)K]$ Analogous to $S\equiv (\KwSty{if}~B~\KwSty{then}~T)K$.

\medskip
Let us now turn to point~(2). By the $\Ra$ implication of the 
equivalence $(1)$, we have that if $\tau \in \tsem_{\GP}\grasse{S}$ then $\traceC(\tau) \in \tsem\grasse{\cC(S)}$:
this can be shown by an easy induction on the length of $\tau$ and by using the fact that 
if $\traceC(\tau)= \tuple{\rho_0,C_0}\tuple{\rho_1,C_1}\cdots \tuple{\rho_n,C_n}$ then, for any $i$, $C_i\in \cC(S)$. 
Moreover, since $\elb(S)$ is the initial label of the compiled program $\cC(S)$ and $lbl(C_0)= \elb(S)$, 
we also notice that $\traceC(\tau) \in \tsem^\iota\grasse{\cC(S)}$. Therefore, $\traceC$ is a well-defined function. 
\end{proof}

Let $\st:\Trace^{\GP}\cup \Trace \ra \Store^*$ be the function that returns the store sequence
of any trace, that is: 
$$\st (\epsilon)\ud \epsilon \quad \text{and} \quad
\st (\tuple{\rho,S}\sigma)\ud \rho\cdot \st(\sigma).$$ 
Also, given a set $X$ of traces, let $\alpha_{\st}(X) \ud \{\st(\sigma)~|~\sigma \in X\}$. 
Then,  
correctness of the compilation function $\cC$ goes as follows:

\begin{theorem}[\textbf{Correctness of language compilation}]
If $S\in \Stm$ then $\cC(S)\in \Program$ and 
$\alpha_{\st}(\mathbf{T}_{\GP}\grasse{S}) = \alpha_{\st}(\mathbf{T}^\iota \grasse{\cC(S)})$. 
\end{theorem}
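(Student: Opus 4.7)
The plan splits naturally into two parts corresponding to the two conjuncts in the conclusion. For well-formedness, $\cC(S)\in\Program$, I would proceed by structural induction on $S$. The only way conditional commands enter $\cC(S)$ is through the $\comC$ clauses for $\KwSty{if}~B~\KwSty{then}~S'$ and $\KwSty{bail}~B~\KwSty{to}~S'$, and in each case $\comC$ puts the conditional $L:B\to\cdot$ in together with its complement $L:\neg B\to\cdot$. Because the labelling $\elb:\Stm\to\mathbb{L}$ is injective, two different statements never produce commands with the same label, so uniqueness of the complement reduces to inspecting the relevant clauses. The same observation also yields determinism in the sense given in Section~\ref{language}.

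For the trace equality, I would cast the statement as two inclusions about the image of the trace-compile function $\traceC$ under the store projection $\st$. The $\subseteq$ inclusion is essentially immediate from Lemma~\ref{sc-tc-lemma}: by part~(2), $\traceC$ sends $\tsem_{\GP}\grasse{S}$ into $\tsem^\iota\grasse{\cC(S)}$, and by inspection of the definition of $\stateC$ in Fig.~\ref{scfun}, the state-compile function preserves the store component, so $\st\circ\traceC=\st$. Therefore $\alpha_{\st}(\tsem_{\GP}\grasse{S}) = \alpha_{\st}(\traceC(\tsem_{\GP}\grasse{S})) \subseteq \alpha_{\st}(\tsem^\iota\grasse{\cC(S)})$. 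Note that the initial-label constraint is satisfied because $\stateC(\tuple{\rho,S})$ always has label $\elb(S)$, which will serve as the entry label $L_\iota$ of $\cC(S)$.

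For the $\supseteq$ inclusion I would construct a decompilation function $\traceD:\tsem^\iota\grasse{\cC(S)}\to\tsem_{\GP}\grasse{S}$ as an inverse of $\traceC$, relying on the $(\Leftarrow)$ direction of Lemma~\ref{sc-tc-lemma}(1). The key auxiliary invariant is: for every $\sigma\in\tsem^\iota\grasse{\cC(S)}$ and every $i\in[0,|\sigma|)$, there exists $S_i\in\Stm$ such that $\sigma_i=\stateC(\tuple{\mathit{store}(\sigma_i),S_i})$ and $\tuple{\rho_0,S}\ra_B^*\tuple{\mathit{store}(\sigma_i),S_i}$, where $\rho_0=\mathit{store}(\sigma_0)$. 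The base case $i=0$ uses that the unique command in $\cC(S)$ with label $\elb(S)$ forces $S_0=S$ (by injectivity of $\elb$, modulo the binary branching on if/bail which is resolved by the Boolean value at $\rho_0$, exactly as $\stateC$ resolves it). The inductive step uses the $(\Leftarrow)$ direction of Lemma~\ref{sc-tc-lemma}(1) to lift the transition $\sigma_i\to\sigma_{i+1}$ in $\ssem\grasse{\cC(S)}$ back to a $\ra_B$ step. Once the invariant is established, defining $\traceD(\sigma)$ state-wise as $\tuple{\mathit{store}(\sigma_i),S_i}$ gives a well-formed element of $\tsem_{\GP}\grasse{S}$ with $\st(\traceD(\sigma))=\st(\sigma)$, closing the inclusion.

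The main obstacle is the invariant in the last paragraph. The subtlety is that $\stateC$ is not injective on its first component alone: the if/bail clauses make $\stateC(\tuple{\rho,S})$ depend on $\bsem\grasse{B}\rho$, so one must argue that the ``branch taken'' in the compiled trace matches the one $\stateC$ would have picked. This is where one uses the fact that in $\ssem\grasse{\cC(S)}$, the command $L:B\to L'$ only permits a transition when $\bsem\grasse{B}\rho=\textit{true}$, which is precisely the side condition appearing in $\stateC$'s branching clause; combined with injectivity of $\elb$ to recover $S_i$ from $lbl(\mathit{cmd}(\sigma_i))$, this keeps the invariant consistent across steps. Everything else is routine bookkeeping on the compilation clauses of Definition~\ref{compilation}.
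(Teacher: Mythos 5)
Your proposal matches the paper's proof essentially step for step: the $\subseteq$ inclusion via the well-definedness of $\traceC$ (Lemma on $\stateC$/$\traceC$) together with $\st\circ\traceC=\st$, and the $\supseteq$ inclusion via a decompilation $\traceD$ built by recovering each $S_i$ from $lbl(\mathit{cmd}(\sigma_i))$ and lifting transitions back to $\ra_B$ using the $(\Leftarrow)$ direction of that lemma, with $\st\circ\traceD=\st$. Your explicit induction for the well-formedness claim $\cC(S)\in\Program$ is a small addition the paper leaves implicit, but it does not change the route.
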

\begin{proof}
We define a ``trace de-compile'' function $\traceD: \tsem^\iota\grasse{\cC(S)} \ra 
\tsem_{\GP}\grasse{S}$ as follows. Consider a trace $\sigma = \tuple{\rho_0,C_0}\cdots \tuple{\rho_n,C_n}\in 
\tsem^\iota\grasse{\cC(S)}$, so that $lbl(C_0)=\elb(S)$, for any $i\in [0,n]$, $C_i \in \cC(S)$ and for any $i\in [0,n)$, 
$\tuple{\rho_{i+1},C_{i+1}}\in \ssem\grasse{\cC(S)}\tuple{\rho_i,C_i}$. Since $lbl(C_0)=\elb(S)$, by definition
of $\stateC$, we have that 
$\tuple{\rho_0,C_0} = \stateC(\tuple{\rho_0,S})$. 
Then, since $\tuple{\rho_1,C_1}\in \ssem\grasse{\cC(S)}(\stateC(\tuple{\rho_0,S}))$, there exists $S_1\in \Stm$ such
that $lbl(C_1) =\elb(S_1)$, so that, $\tuple{\rho_1,C_1}=\stateC(\tuple{\rho_1,S_1})$. 
Hence, from $\stateC(\tuple{\rho_1,S_1})\in \ssem\grasse{\cC(S)}(\stateC(\tuple{\rho_0,S}))$, by the 
 implication $\La$ of Lemma~\ref{sc-tc-lemma}~(1), 
we obtain that $\tuple{\rho_0,S}\ra_B \tuple{\rho_1,S_1}$. 
Thus, an easy induction allows us to show that for any $i\in [1,n]$ there exists $S_i\in \Stm$
such that 
$$\tuple{\rho_0,S}\ra_B \tuple{\rho_1,S_1} \ra_B \cdots \ra_B \tuple{\rho_n,S_n}$$
and $\stateC(\tuple{\rho_i,S_i})=\tuple{\rho_i,C_i}$. 
We therefore define 
$\traceD(\sigma) \ud \tuple{\rho_0,S}\tuple{\rho_1,S_1}\cdots \tuple{\rho_n,S_n}\in  \tsem_{\GP}\grasse{S}$. 
Moreover, we notice that $st(\traceD(\sigma)) = st(\sigma)$. 
Let us also observe that $st\circ \traceC = st$, since $\traceC$ does 
not affect stores.  

\noindent
Summing up, we obtain:
\begin{align*}
\alpha_{\st}(\tsem_{\GP}\grasse{S})&=\text{\quad[since $st\circ \traceC = st$]}\\
\alpha_{\st}(\traceC(\tsem_{\GP}\grasse{S}))  &\subseteq\text{\quad[by Lemma~\ref{sc-tc-lemma}~(2), $\traceC$ is well-defined]}\\
\alpha_{\st}(\tsem^\iota\grasse{\cC(S)}) &= \text{\quad[since $st\circ \traceD = st$]}\\
\alpha_{\st}(\traceD(\tsem^\iota\grasse{\cC(S)}))  &\subseteq\text{\quad[since $\traceD$ is well-defined]}\\
\alpha_{\st}(\tsem_{\GP}\grasse{S})  &
\end{align*}
and this closes the proof.
\end{proof}

\subsection{Bisimulation} 

Correctness of trace extraction in \cite{palsberg} relies on a notion
of bisimulation relation, parameterized by program stores. 
Let us recall this definition. 
If $\tuple{\rho,S} \ra_B \tuple{\rho,S'}$ then this ``silent'' transition that does not change
the store is also denoted 
by $\tuple{\rho,S} \rat_B \tuple{\rho,S'}$. Moreover, for the assignment transition 
$\tuple{\rho,x:=E;K} \ra_B \tuple{\rho[x/\esem\grasse{E}\rho],K}$, if 
$\delta = [x/\esem\grasse{E}\rho]$ denotes the corresponding store update of $\rho$ then this transition
is also
denoted by $\tuple{\rho,x:=E;K} \rad_B \tuple{\rho[x/\esem\grasse{E}\rho],K}$.
Let $\Act\ud \{ \delta~|~\delta$ is a store update$\}\cup \{\tau\}$. Then, for a nonempty sequence of actions 
$s=a_1\cdots a_n \in \Act^+$, we define:  
$$\tuple{\rho,S} \stackrel{s}{\Rightarrow}_B \tuple{\rho',S'}\quad\text{iff}\quad 
\tuple{\rho,S} \stackrel{\tau}{\ra}_{B}^{_{{\scriptstyle *}}} \circ \stackrel{a_1}{\ra}_{B} \circ 
\stackrel{\tau}{\ra}_{B}^{_{{\scriptstyle *}}} \cdots  \stackrel{\tau}{\ra}_{B}^{_{{\scriptstyle *}}} \circ \stackrel{a_n}{\ra}_{B} \circ 
\stackrel{\tau}{\ra}_{B}^{_{{\scriptstyle *}}}  \tuple{\rho',S'},$$
namely, there may be any number of silent transitions either in front of or following 
any $a_i$-transition $\stackrel{a_i}{\ra}_{B}$. Moreover, if $s\in \Act^+$ is a nonempty sequence 
of actions then $\hat{s}\in \Act^*$ denotes the possibly empty sequence of actions where all the occurrences
of $\tau$ are removed.

\begin{definition}[{\rm \textbf{\cite{palsberg}}}]
\rm
A relation $R\subseteq \Store\times \Stm \times \Stm$ is a \emph{bisimulation} when $R(\rho,S_1,S_2)$ implies:
\begin{enumerate}
\item[{\rm (1)}] if $\tuple{\rho,S_1} \raa_B \tuple{\rho',S_1'}$ then $\tuple{\rho,S_2} \Raac_B\tuple{\rho',S_2'}$,
for some $\tuple{\rho',S_2'}$ such that $R(\rho',S_1',S_2')$; 
\item[{\rm (2)}] if $\tuple{\rho,S_2} \raa_B \tuple{\rho',S_2'}$ then $\tuple{\rho,S_1}\Raac_B \tuple{\rho',S_1'}$,
for some $\tuple{\rho',S_1'}$ such that $R(\rho',S_1',S_2')$.
\end{enumerate}
$S_1$ is bisimilar to $S_2$ for a given $\rho\in \Store$, denoted by 
$S_1 \approx_\rho S_2$, if $R(\rho,S_1,S_2)$ for some bisimulation $R$. 
\qed 
\end{definition}

\noindent
Let us remark that if  $\tuple{\rho,S_1} \stackrel{\tau}{\rightarrow}\tuple{\rho',S_1'}$ then $\hat{\tau}=\epsilon$, 
so that  $\big(\tuple{\rho,S_2} \stackrel{\hat{\tau}}{\Rightarrow} \tuple{\rho,S_2}\big)\equiv \tuple{\rho,S_2}$ 
is allowed to be the matching (empty) transition sequence.

It turns out that bisimilarity can be characterized through an
abstraction of traces that observes store changes. By a negligible abuse of notation, 
the store changes function $\sch:\Trace \ra \Store^*$ defined in Section~\ref{obs} is applied
to GP traces, so that  $\sch:\Trace\cup \Trace^{\GP}  \ra \Store^*$.  
In turn, given $\rho\in \Store$, the function
$\alpha^{\rho}_{\sch}: \wp(\Trace^{\GP}) \ra \wp(\Store^*)$ is then defined as follows: 
$$\alpha^{\rho}_{\sch} (X)\ud \{\sch(\tau)\in \Store^*~|~\tau\in X,\, \exists S,\tau'\!.\: \tau = \tuple{\rho,S}\tau' \}.$$
It is worth remarking that $\alpha^{\rho}_{\sch}$ is a weaker abstraction than $\alpha_{\sch}$ defined in Section~\ref{obs}, 
that is, 
for any $X,Y\in \wp(\Trace^{\GP})$, $\alpha_{\sch}(X) = \alpha_{\sch}(Y) \Ra \alpha^{\rho}_{\sch}(X) = \alpha^{\rho}_{\sch}(Y)$
(while the converse does not hold in general). 

\begin{theorem}\label{bis-abs-th}
For any $S_1,S_2\in \Stm$, $\rho\in \Store$, we have that $S_1 \approx_\rho S_2$ iff
$\alpha^{\rho}_{\sch}(\mathbf{T}_{\GP}\grasse{S_1}) = \alpha^{\rho}_{\sch}(\mathbf{T}_{\GP}\grasse{S_2})$.
\end{theorem}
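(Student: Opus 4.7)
The plan is to prove the two directions separately, with $(\Leftarrow)$ being the more delicate one. For $(\Rightarrow)$, I assume $S_1 \approx_\rho S_2$, witnessed by a bisimulation $R$, and by symmetry only argue the inclusion $\alpha^{\rho}_{\sch}(\tsem_{\GP}\grasse{S_1}) \subseteq \alpha^{\rho}_{\sch}(\tsem_{\GP}\grasse{S_2})$. For any $\tau_1 \in \tsem_{\GP}\grasse{S_1}$, I would construct a companion $\tau_2 \in \tsem_{\GP}\grasse{S_2}$ with $\sch(\tau_1) = \sch(\tau_2)$ by induction on $|\tau_1|$, maintaining as invariant that the endpoints of $\tau_1$ and $\tau_2$ share the same store and are still related by $R$. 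The base case $\tau_1 = \tuple{\rho,S_1}$ is immediate. In the inductive step, an extra transition $\tuple{\rho',T_1} \raa_B \tuple{\rho'',T_1''}$ on the $S_1$ side is matched, via $R$, by a sequence $\tuple{\rho',T_2} \Raac_B \tuple{\rho'',T_2''}$ on the $S_2$ side; this preserves $\sch$-equality because all intermediate $\tau$-steps leave the store unchanged and thus do not contribute to $\sch$, while the $\delta$-step (if present) lands on the common store $\rho''$ on both sides.

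For $(\Leftarrow)$, I would take as candidate bisimulation
\[
R \ud \{(\rho', S_1', S_2') \in \Store\times \Stm\times\Stm \mid \alpha^{\rho'}_{\sch}(\tsem_{\GP}\grasse{S_1'}) = \alpha^{\rho'}_{\sch}(\tsem_{\GP}\grasse{S_2'})\},
\]
which contains $(\rho,S_1,S_2)$ by hypothesis, and verify the two bisimulation clauses. Assume $R(\rho',S_1',S_2')$ and $\tuple{\rho',S_1'} \raa_B \tuple{\rho'',S_1''}$. When $a=\tau$, so $\rho''=\rho'$, I match with the empty sequence, setting $S_2'' = S_2'$: determinism of $\ra_B$ forces every length-$\geq 2$ trace of $S_1'$ to pass through $\tuple{\rho',S_1''}$, and since $\sch$ is insensitive to prepending a same-store state one obtains $\alpha^{\rho'}_{\sch}(\tsem_{\GP}\grasse{S_1'}) = \alpha^{\rho'}_{\sch}(\tsem_{\GP}\grasse{S_1''})$, whence $R(\rho',S_1'',S_2')$ follows directly from the hypothesis. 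For $a=\delta=[x/v]$ with $\rho''=\rho'[x/v]\neq\rho'$, I would apply $R$ to the length-two extension $\tuple{\rho',S_1'}\tuple{\rho'',S_1''}$ to obtain some $\tau_2 \in \tsem_{\GP}\grasse{S_2'}$ whose $\sch$ starts with $\rho'\rho''$; by determinism of $\ra_B$ this forces $\tau_2$ (and every other such trace) to consist of a silent prefix keeping store $\rho'$ followed by a single $\delta$-step into a canonical target state $\tuple{\rho'',S_2''}$, where the $\delta$-label must coincide with $[x/v]$ since $\rho'$ and $\rho''$ differ at a unique variable. Then $R(\rho'',S_1'',S_2'')$ follows by translating each $S_1''$-trace $\tau_1'$ via prepending $\tuple{\rho',S_1'}$, invoking $R(\rho',S_1',S_2')$ and stripping the forced prefix ending at $\tuple{\rho'',S_2''}$, and vice versa.

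The main obstacle is precisely the $\delta$ sub-case of $(\Leftarrow)$: one must secure a uniform, trace-independent choice of $S_2''$ and then argue that suffixes of $S_2'$-traces past $\tuple{\rho'',S_2''}$ realise exactly $\alpha^{\rho''}_{\sch}(\tsem_{\GP}\grasse{S_2''})$. Both points rest crucially on determinism of $\ra_B$, together with the observation that a store change $\rho' \to \rho'[x/v]$ uniquely identifies its action label. A subtle boundary case is a no-op assignment ($\esem\grasse{E}\rho' = \rho'(x)$, so $\rho'' = \rho'$): consistently with the wording introducing $\rat_B$ at the start of Section~\ref{GP-sec} as the silent transition that does not change the store, such a step should be treated as $\tau$-labelled, otherwise $\sch$-equivalence would fail to supply enough information to match a spurious $\delta$-action on the other side.
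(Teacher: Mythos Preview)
Your proposal is correct; the $(\Rightarrow)$ direction matches the paper's argument essentially verbatim, while for $(\Leftarrow)$ you take a different but equally valid route. The paper does not use your ``global'' candidate
\[
R = \{(\rho',S_1',S_2') \mid \alpha^{\rho'}_{\sch}(\tsem_{\GP}\grasse{S_1'}) = \alpha^{\rho'}_{\sch}(\tsem_{\GP}\grasse{S_2'})\};
\]
instead it fixes the given triple $(\rho,S_1,S_2)$ and defines $R$ as the set of triples $(\mu,T_1,T_2)$ reachable from it via a pair of traces with equal $\sch$. Its bisimulation check then hinges on a prefix property of deterministic traces (property~$(*)$ in the paper): any two $\rho$-initial traces of the same program are comparable by prefix, which lets one directly align a longer $S_2$-trace with the extended $S_1$-trace. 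Your approach buys symmetry for free (so only one clause needs checking) and gives a cleaner statement---$\alpha^{\rho}_{\sch}$-equality \emph{itself} is a bisimulation---at the cost of having to re-establish the equality for the successor triple, which is where your suffix-translation argument comes in. The paper's reachability-based $R$ avoids that re-establishment but loses the pleasant ``largest relation'' flavour. Your observation about no-op assignments being treated as $\tau$ is also correct and consistent with the paper's convention that $\rat_B$ covers any transition leaving the store unchanged.
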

\begin{proof}
$(\Ra)$: We prove that if $R(\rho,S_1,S_2)$ holds for some bisimulation $R$ then 
$\alpha^{\rho}_{\sch}(\mathbf{T}_{\GP}\grasse{S_1}) \subseteq \alpha^{\rho}_{\sch}(\mathbf{T}_{\GP}\grasse{S_2})$ (the reverse containment is symmetric), that is, if $\sch(\tau)\in \Store^*$ for some $\tau\in \mathbf{T}_{\GP}\grasse{S_1}$ such that 
$\tau = \tuple{\rho,S_1}\tau'$ then
there exists some $\psi \in \mathbf{T}_{\GP}\grasse{S_2}$ such that $\psi = \tuple{\rho,S_2}\psi'$ and $\sch(\tau)=\sch(\psi)$. 
Let us then consider $\tau\in \mathbf{T}_{\GP}\grasse{S_1}$ such that 
$\tau = \tuple{\rho,S_1}\tau'$. If $\tau'=\epsilon$ then we pick 
$\tuple{\rho,S_2}\in \mathbf{T}_{\GP}\grasse{S_2}$ so
that $\sch(\tuple{\rho,S_1}) = \rho = \sch(\tuple{\rho,S_2})$. 
Otherwise, $\tau = \tuple{\rho,S_1}\tau'\in \mathbf{T}_{\GP}\grasse{S_1}$, with $\epsilon \neq \tau' = \tau'' \tuple{\mu,S}$.   
We prove by induction on $|\tau'|\geq 1$  
that there exists 
$\psi =\tuple{\rho,S_2}\psi'' \tuple{\mu,T} \in \mathbf{T}_{\GP}\grasse{S_2}$ such that 
$\sch(\tau) = \sch(\psi)$ and $R(\mu,S,T)$. 

\noindent
$(|\tau'|=1)$: In this case, $\tau = \tuple{\rho,S_1}\tuple{\mu,S}\in \mathbf{T}_{\GP}\grasse{S_1}$, so that 
$\tuple{\rho,S_1} \raa_B \tuple{\mu,S}$. Since, by hypothesis, $R(\rho,S_1,S_2)$ holds, we have that 
$\tuple{\rho,S_2} \Raac_B \tuple{\mu,T}$, for some $T$, and $R(\mu,S,T)$. Let $\psi\in  \mathbf{T}_{\GP}\grasse{S_2}$ be the
trace corresponding to the sequence of transitions $\tuple{\rho,S_2} \Raac_B \tuple{\mu,T}$. Then, by
definition of $\Raac_B$, we have that $\sch(\tau)= \sch(\psi)$, and, by definition of bisimulation, $R(\mu, S,T)$ holds. 

\noindent
$(|\tau'|>1)$: Here, $\tau'=\tau'' \tuple{\mu,S}$ and 
$\tau = \tuple{\rho,S_1}\tau'\in \mathbf{T}_{\GP}\grasse{S_1}$, with $|\tau''|=|\tau'|-1\geq 1$. 
Hence, $\tau'' = \tau''' \tuple{\eta,U}$. By inductive hypothesis, there exists $\psi = 
\tuple{\rho,S_2} \psi'' \tuple{\eta,V} \in \mathbf{T}_{\GP}\grasse{S_2}$ such that 
$\sch(\tuple{\rho,S_1} \tau''' \tuple{\eta,U}) = \sch(\tuple{\rho,S_2} \psi'' \tuple{\eta,V})$ and $R(\eta,U,V)$. 
Since $\tuple{\eta,U} \raa_B \tuple{\mu,S}$ and $R(\eta,U,V)$ holds, we obtain that 
$\tuple{\eta,V} \Raac_B \tuple{\mu,T}$, for some $T$, and $R(\mu,S,T)$ holds. 
Let $\tuple{\eta,V}\cdots \tuple{\mu,T}$ 
be the sequence of states corresponding to the sequence of transitions $\tuple{\eta,V} \Raac_B \tuple{\mu,T}$
so that  we pick 
$\tuple{\rho,S_2} \psi'' \tuple{\eta,V}\cdots \tuple{\mu,T}\in \mathbf{T}_{\GP}\grasse{S_2}$. The condition 
$R(\mu,S,T)$ already holds. 
Moreover, 
by definition of $\Raac_B$, we have that $\sch(\tuple{\eta,U} \tuple{\mu,S}) = \sch(\tuple{\eta,V}\cdots \tuple{\mu,T})$, 
and therefore we obtain
$\sch(\tau) = \sch(\tuple{\rho,S_1} \tau''' \tuple{\eta,U}\tuple{\mu,S})=
\sch(\tuple{\mu,S_2} \psi''\tuple{\eta,V}\cdots \tuple{\mu,T})$. 

\medskip
\noindent
$(\La)$: We first observe the following property $(*)$, which is a straight consequence of the fact that $\ra_B$ is a deterministic relation: 
If $S\in \Stm$ and $\sigma,\tau\in \tsem\grasse{S}$ are such that
$\sigma_0 = \tuple{\mu,S}= \tau_0$ and $|\tau|\leq |\sigma|$ then  
there exists some $\psi$ such that
$\sigma = \tau \psi$.

\noindent
Given $\rho \in \Store$, we 
assume that $\alpha^{\rho}_{\sch}(\mathbf{T}_{\GP}\grasse{S_1}) = \alpha^{\rho}_{\sch}(\mathbf{T}_{\GP}\grasse{S_2})$ and
we then define the following relation $R$:
\begin{align*}
R \ud \{(\rho,S_1,S_2)\} \cup \{ (\mu, T_1,T_2)~|~ & \tuple{\rho,S_1}\cdots \tuple{\mu,T_1} \in \tsem\grasse{S_1},\, 
\tuple{\rho,S_2}\cdots \tuple{\mu,T_1} \in \tsem\grasse{S_2},\\ 
& \sch(\tuple{\rho,S_1}\cdots \tuple{\mu,T_1}) = 
\sch(\tuple{\rho,S_2}\cdots \tuple{\mu,T_1})\}.
\end{align*}
We show that $R$ is a bisimulation, so that $R(\rho, S_1,S_2)$ follows. 

\noindent
(case A) Assume that $\tuple{\rho,S_1} \raa_B \tuple{\rho',S_1'}$. Then, since $\tuple{\rho,S_1}\tuple{\rho',S_1'} 
\in \tsem\grasse{S_1}$ and $\alpha^{\rho}_{\sch}(\mathbf{T}_{\GP}\grasse{S_1}) = \alpha^{\rho}_{\sch}(\mathbf{T}_{\GP}\grasse{S_2})$, we have that  there exists $\tau = \tuple{\rho,S_2}\cdots \in \tsem\grasse{S_2}$ such that 
$\sch(\tuple{\rho,S_1}\tuple{\rho',S_1'}) = \sch(\tau)$. Hence, $\tau$ necessarily has the following shape:
$$\tau = \tuple{\rho,S_2}\tuple{\rho,U_1}\cdots \tuple{\rho,U_n}\tuple{\rho',V_1}\cdots\tuple{\rho',V_m}$$
where $n\geq 0$ ($n=0$ means that $\tuple{\rho,U_1}\cdots \tuple{\rho,U_n}$ is indeed the empty sequence) and
$m\geq 1$. This therefore means that $\tuple{\rho,S_2} \Raac_B \tuple{\rho',V_m}$, so that, by
definition of $R$, $R(\rho',S_1',V_m)$ holds. 

\noindent
(case B) Assume now that $R(\mu,T_1,T_2)$ holds because $\delta=\tuple{\rho,S_1}\cdots \tuple{\mu,T_1} \in \tsem\grasse{S_1}$, 
$\sigma=\tuple{\rho,S_2}\cdots \tuple{\mu,T_2} \in \tsem\grasse{S_2}$ and $\sch(\delta) = 
\sch(\sigma)$. Hence, let us suppose that $\tuple{\mu,T_1} \raa_B \tuple{\mu',T_1'}$.
Then, since $\delta\tuple{\mu',T_1'}
\in \tsem\grasse{S_1}$ and $\alpha^{\rho}_{\sch}(\mathbf{T}_{\GP}\grasse{S_1}) = \alpha^{\rho}_{\sch}(\mathbf{T}_{\GP}\grasse{S_2})$, we have that  there exists $\tau = \tuple{\rho,S_2}\cdots \in \tsem\grasse{S_2}$ such that 
$\sch(\delta\tuple{\mu',T_1'}) = \sch(\tau)$. 
 
 \begin{description}
\item[{\rm (case B1)}] If $|\tau| \leq |\sigma|$ then, by the property $(*)$ above, $\sigma = \tau \psi$, for some $\psi$. 
Hence, $\sch(\tau)=\sch(\delta \tuple{\mu',T_1'})$ is a prefix of $\sch(\sigma)=\sch(\delta)$. Consequently, 
$\sch(\delta\tuple{\mu',T_1'})$ can be a prefix of $\sch(\delta)$ only if $\sch(\delta\tuple{\mu',T_1'}) = \sch(\delta)$, 
so that the action $a$ is $\tau$ and $\mu'=\mu$, that is, 
$\tuple{\mu,T_1}\stackrel{\tau}{\ra}_{B}\tuple{\mu,T_1'}$. We thus consider the empty transition sequence 
$\tuple{\mu,T_2} \stackrel{\hat{\tau}}{\Rightarrow} \tuple{\mu,T_2}$, so that from
$\sch(\delta\tuple{\mu,T_1'})=\sch(\sigma)$, by definition of $R$ we obtain that $R(\mu,T_1',T_2)$ holds.  

\item[{\rm (case B2)}] If $|\tau|> |\sigma|$ then, by $(*)$ above, $\tau = \sigma \psi$, for some $\psi$, i.e.,
$\tau =  \sigma \cdots \tuple{\mu'',T_2'}$, for some 
$\mu''$ and $T_2'$. Since $\sch(\tuple{\rho,S_2}\cdots \tuple{\mu,T_2}) = \sch(\tuple{\rho,S_1}\cdots \tuple{\mu,T_1})$ and $\sch(\tuple{\rho,S_2}\cdots \tuple{\mu,T_2}\cdots \tuple{\mu'',T_2'})=\sch(\tuple{\rho,S_1}\cdots \tuple{\mu,T_1}\tuple{\mu',T_1'})$, we derive
that $\mu''=\mu'$ and $\tuple{\mu,T_2} \stackrel{a}{\Rightarrow} \tuple{\mu''=\mu',T_2'}$. By definition of $R$, 
$R(\mu', T_1', T_2')$ holds.
\end{description}
This closes the proof.
\end{proof}

\subsection{Hot Paths}\label{hotpaths-sec}
Let us recall the set of rules that define the tracing transitions in Guo and Palsberg~\citeyear{palsberg} model. Let 
$\TState_{\GP} \ud \Store\times \Stm \times \Stm \times \Stm$ denote the set of states in trace recording mode, whose
components are, respectively, the current store, the entry point of the recorded trace (this is always a while statement), 
the current trace (i.e., a sequence of commands) 
and the current program to be evaluated. In turn, $\State_{\GP}^e \ud \State_{\GP} \cup \TState_{\GP}$ denotes 
the corresponding extended notion of state, which encompasses the trace recording mode. Then, the relation
$\ra_T\: \subseteq \State^e_{\GP} \times \State^e_{\GP}$ is defined 
by the clauses in Figure~\ref{trace-rel}, where $O:\Stm\times \Store \ra \Stm$ is a ``sound'' optimization function that depends 
on a given store. Correspondingly, the trace semantics $\tsem_{\GP}\grasse{S}\subseteq (\State_{\GP}^e)^+$ 
of a program $S\in \Stm$ is naturally extended to the relation $\ra_{B,T}\; \ud\; \ra_B \cup
\ra_T\; \subseteq \State_{\GP}^e\times \State_{\GP}^e$.

\begin{figure}
{\small
\begin{equation*}
\begin{array}{l}
(T_1)\quad \tuple{\rho, (\KwSty{if}~B~\KwSty{then}~(S\, \KwSty{while}~B~\KwSty{do}~S))\, K } \ra_T 
\tuple{\rho, (\KwSty{while}~B~\KwSty{do}~S) K, \epsilon, S (\KwSty{while}~B~\KwSty{do}~S) K}\\
\hspace*{80ex} \text{if~}\bsem\grasse{B}\rho=\textit{true}
\\
(T_2)\quad \tuple{\rho, K_w, t, \KwSty{skip}; K} \ra_T \tuple{\rho,K_w, t (\KwSty{skip};\!), K}   \\[5pt]
(T_3)\quad \tuple{\rho, K_w, t, x:=E; K} \ra_T \tuple{\rho[x/\esem\grasse{E}\rho],K_w, t (x:=E;\!), K}   \\[5pt]
(T_4)\quad \tuple{\rho, K_w, t, (\KwSty{if}~B~\KwSty{then}~S) K} \ra_T 
\begin{cases}
\tuple{\rho,K_w, t (\KwSty{bail}~B~\KwSty{to}~(SK)),K} 
& \text{if}~\bsem\grasse{B}\rho = \textit{false}\\
\tuple{\rho,K_w, t (\KwSty{bail}~\neg B~\KwSty{to}~K),SK} & 
\text{if}~\bsem\grasse{B}\rho = \textit{true}\\[5pt]
\end{cases} \\
(T_5)\quad \tuple{\rho, K_w, t, (\KwSty{while}~B~\KwSty{do}~S) K} \ra_T 
\begin{cases}
\tuple{\rho,K_w, t (\KwSty{skip};), (\KwSty{if}~B~\KwSty{then}~(S\, \KwSty{while}~B~\KwSty{do}~S))\, K} &\\
\qquad\qquad\qquad\qquad\qquad\qquad\quad \text{if}~K_w \not\equiv (\KwSty{while}~B~\KwSty{do}~S) K\\
\tuple{\rho,O(\KwSty{while}~B~\KwSty{do}~t,\rho) K} 
\hfill  \text{if}~K_w \equiv (\KwSty{while}~B~\KwSty{do}~S) K\\[5pt]
\end{cases} \\
(T_6)\quad \tuple{\rho, K_w, t, S} \ra_T \tuple{\rho',S'}\quad \text{if~} K_w \not\equiv S \text{~and~} \tuple{\rho,S}\ra_B \tuple{\rho',S'}\\[5pt]
\end{array}	 
\end{equation*}
}
\caption{Definition of the tracing relation $\ra_T$.}\label{trace-rel}
\end{figure}

Let us notice that in Guo and Palsberg's model of hot paths:

\begin{enumerate}
\item[{\rm (i)}] By clause $(T_1)$, trace recording is always triggered by 
an unfolded while loop, and the loop itself is 
not included in the hot path.
\item[{\rm (ii)}] By clause $(T_4)$, when we bail out of a hot path $t$ 
through a $\KwSty{bail}$ 
command, we cannot anymore re-enter into $t$. 
\item[{\rm (iii)}] By clause $(T_5)$---the second condition of this clause is called stitch rule 
in \cite{palsberg}---the store used to optimize a hot path $t$ is recorded at the end of the 
first loop iteration. This is a concrete store which is used by $O$ to optimize the stitched 
hot path $\KwSty{while}~B~\KwSty{do}~t$.  
\item[{\rm (iv)}] Hot paths actually are $1$-hot paths according
to our definition, since, by clause~$(T_1)$, once the first iteration of the traced while loop 
is terminated, trace recording necessarily discontinues.
\item[{\rm (v)}] There are no clauses for trace recording \KwSty{bail} commands. Hence, when 
trying to trace a loop that already contains a nested hot path, by clause $(T_6)$, trace recording is aborted
when a \KwSty{bail} command is encountered. In other terms, in contrast to our approach described in
Section~\ref{mte-sec}, nested hot paths are not allowed. 
\item[{\rm (vi)}] Observe that when tracing a loop $\KwSty{while}~B~\KwSty{do}~S$ whose body $S$ does not contain
branching commands, i.e.\ $\KwSty{if}$ or $\KwSty{while}$ statements, it turns out 
that the hot path $t$ coincides with the body $S$, so 
that  $\KwSty{while}~B~\KwSty{do}~t \equiv \KwSty{while}~B~\KwSty{do}~S$,  namely, in this case 
the hot path transform does not change the subject while loop. 
\end{enumerate}

In the following, we show how this hot path extraction model
can be formalized within our trace-based approach. To this aim,  
we do not consider optimizations of hot paths, which is an orthogonal issue here, so that
we assume that $O$ performs no optimization, that is, 
$O(\KwSty{while}~B~\KwSty{do}~t,\rho)= \KwSty{while}~B~\KwSty{do}~t$. 

A sequence of commands $t\in \Stm$ is defined to be a GP  hot path for a program $Q\in \Stm$ when 
we have the following transition sequence:  
$$
\tuple{\rho,Q} \ra_{B,T}^* 
\tuple{\rho', (\KwSty{while}~B~\KwSty{do}~S)K} 
\ra_{B,T}^*
\tuple{\rho'', (\KwSty{while}~B~\KwSty{do}~S)K, t, (\KwSty{while}~B~\KwSty{do}~S)K}.
$$
Since the operational semantics $\ra_{B,T}$ is given in continuation-style, without loss of generality, 
we assume that the program $Q$ begins with a while statement, that is $Q\equiv (\KwSty{while}~B~\KwSty{do}~S)K$. 
Guo and Palsberg's hot loops can be modeled in our framework 
by exploiting a revised loop selection map 
$\sloop_{\GP}:\Trace \ra \wp(\mathbb{C}^+)$ defined as follows:
\begin{multline*}
    \sloop_{\GP}(\ab{\rho_{0}}{C_{0}} \cdots
    \ab{\rho_{n}}{C_{n}}) \ud
    \big\{C_i C_{i+1}\cdots C_j~|~
    0\leq i \leq j < n,\,C_{i} \lessdot C_{j},\\
     suc(C_{j})=lbl(C_i),\,
       \forall k\in (i,j].\,
       C_{k} \not\in\{ C_{i},cmpl(C_{i})\}\big\}.
 \end{multline*}
       
\noindent
Thus, $\sloop_{\GP}(\tau)$ contains sequences of commands without store.     
The map $\alpha_{\mathit{hot}}^{\GP}: \wp(\Trace) \ra \wp(\mathbb{C}^+)$ 
then lifts $\sloop_{\GP}$ to sets of traces as usual: $\alpha_{\mathit{hot}}^{\GP}(T) \ud \cup_{\tau \in T} 
\sloop_{\GP}(\tau)$. 
Then, let us consider a GP hot path $t$ as recorded by 
a transition sequence $\tau$: 
\[
\begin{array}{ll}
\tau \ud & \tuple{\rho,S_0\equiv (\KwSty{while}~B~\KwSty{do}~S)K} \ra_B\\
&\tuple{\rho,S_{1}\equiv (\KwSty{if}~B~\KwSty{then}~(S\, \KwSty{while}~B~\KwSty{do}~S))\, K}\ra_T\\ 
&\tuple{\rho, (\KwSty{while}~B~\KwSty{do}~S) K, \epsilon, S_{2}\equiv S (\KwSty{while}~B~\KwSty{do}~S) K}\ra_T  \\
&\: \cdots \ra_T\\
&\tuple{\rho', (\KwSty{while}~B~\KwSty{do}~S)K, t', S_n}\ra_T \\
&\tuple{\rho'', (\KwSty{while}~B~\KwSty{do}~S)K, t, S_{n+1}\equiv (\KwSty{while}~B~\KwSty{do}~S)K}
\end{array}\eqno(\ddagger)
\]
where $\bsem\grasse{B}\rho = \textit{true}$. 
Hence, the $S_i$'s occurring in $\tau$ are the current statements to be evaluated. 
With a negligible abuse of notation, we assume that $\tau \in \tsem_{\GP}\grasse{(\KwSty{while}~B~\KwSty{do}~S)K}$, that is, 
the arrow symbols $\ra_B$ and $\ra_T$ are taken out of the sequence $\tau$. 
By Lemma~\ref{sc-tc-lemma}~(2), 
we therefore consider the corresponding execution trace $\traceC(\tau)$ of the compiled program $\cC((\KwSty{while}~B~\KwSty{do}~S)K)$, where 
the state compile function $\stateC$ in Figure~\ref{scfun}, 
when applied
to states in trace recording mode, is assumed to act on the current store and the 
program to be evaluated, that is, $\stateC(\tuple{\rho,K_w,t,S}) = \stateC(\tuple{\rho,S})$. We thus obtain:
\[
\begin{array}{ll}
\traceC(\tau) \ud &\tuple{\rho,C_0\equiv \elb((\KwSty{while}~B~\KwSty{do}~S)K): \cd{skip} 
\ra \elb((\KwSty{if}~B~\KwSty{then}~(S\, \KwSty{while}~B~\KwSty{do}~S))\, K)}\\ 
&\tuple{\rho,C_{1}\equiv \elb((\KwSty{if}~B~\KwSty{then}~(S\, \KwSty{while}~B~\KwSty{do}~S))\, K):B\ra \elb(S\, (\KwSty{while}~B~\KwSty{do}~S) K)}\\ 
&\tuple{\rho, C_{2}\equiv \elb(S (\KwSty{while}~B~\KwSty{do}~S) K):A_{2} \ra \elb(T)}\\
&\cdots\\    
&\tuple{\rho', C_{n}\equiv \elb(S_n): A_n
\ra \elb((\KwSty{while}~B~\KwSty{do}~S)\, K)}\\
&\tuple{\rho'', C_{n+1}\equiv \elb((\KwSty{while}~B~\KwSty{do}~S)K): \cd{skip} 
\ra \elb((\KwSty{if}~B~\KwSty{then}~(S\, \KwSty{while}~B~\KwSty{do}~S))\, K)}.
\end{array}
\]
We therefore obtain a hot path $hp_t=C_0C_1 \cdots C_n\in \sloop_{\GP}(\traceC(\tau))$, i.e.\ 
$hp_t\in \alpha_{\mathit{hot}}^{\GP}( \mathbf{T}^\iota \grasse{\cC((\KwSty{while}~B~\KwSty{do}~S)K)})$,
 where 
$lbl(C_0) = \elb ((\KwSty{while}~B~\KwSty{do}~S)K) = suc(C_n)$. This is a consequence of the fact that 
for all $k\in (0,n]$, $C_k$ cannot be the entry command 
$C_0$ or its complement command, because, by the stitch rule of clause $(T_5)$, 
$S_{n+1}$ is necessarily the first occurrence 
of $(\KwSty{while}~B~\KwSty{do}~S))\, K$ as current program to be evaluated in the trace $\tau$, so that, 
for any $k\in (0,n]$,  
$lbl(C_k) \neq \elb((\KwSty{while}~B~\KwSty{do}~S)K)$. 
We have thus shown that any GP hot path arising from a trace $\tau$ generates a corresponding
hot path extracted by our
selection map $\sloop_{\GP}$ on the compiled trace $\traceC(\tau)$:  

\begin{lemma}\label{lemma-GPhp}
Let $Q_w \equiv (\KwSty{while}~B~\KwSty{do}~S)K$. 
If $t$ is a GP hot path for $Q_w$ where $\tau \equiv \tuple{\rho,Q_w}
\ra_{B,T}^* \tuple{\rho', Q_w, t, Q_w}$ is the transition sequence $(\ddagger)$ that records $t$, 
then there exists a hot path 
$hp_t = C_0C_1\cdots C_n \in \alpha_{\mathit{hot}}^{\GP}( \mathbf{T}^\iota \grasse{\cC(Q_w)})$ such that, 
for any $i\in [0,n]$,  
{\rm $lbl(C_i) = \elb(S_i)$}, and, in particular,
{\rm $lbl(C_0) = \elb(Q_w) = suc(C_n)$}. 
\end{lemma}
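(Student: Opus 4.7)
The plan is essentially to formalize the construction already sketched in the paragraphs preceding the statement. First I would extend the state-compile function $\stateC$ from Fig.~\ref{scfun} to trace-recording states by setting $\stateC(\tuple{\rho,K_w,t,S}) \ud \stateC(\tuple{\rho,S})$, so that $\stateC$ acts only on the current store and on the current statement to be evaluated, ignoring the trace-recording components. This extension is compatible with $\traceC$ and preserves the correspondence between $\ra_{B,T}$-transitions and $\ssem\grasse{\cC(Q_w)}$-transitions: I would prove the analogue of Lemma~\ref{sc-tc-lemma}~(1) for $\ra_T$ by a case analysis on the tracing rules $(T_1)$--$(T_6)$, noting that each tracing rule either replicates the corresponding baseline transition (e.g.\ $(T_2)$--$(T_5)$ mirror $\KwSty{skip}$, assignment, $\KwSty{if}$, and $\KwSty{while}$ unfolding), or, in the stitch case of $(T_5)$, produces a $\cd{skip}$-labelled transition that matches exactly the compiled command $\elb(Q_w):\cd{skip}\ra \elb((\KwSty{if}~B~\KwSty{then}~(S\,\KwSty{while}~B~\KwSty{do}~S))\,K)$ produced by $\cC$.

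Once this extended compatibility is established, I would apply $\traceC$ to the transition sequence $(\ddagger)$ to obtain a trace $\traceC(\tau) \in \tsem^\iota\grasse{\cC(Q_w)}$ whose states have commands $C_0, C_1, \ldots, C_n, C_{n+1}$ satisfying $lbl(C_i) = \elb(S_i)$ by the very definition of $\stateC$. The initial state yields $lbl(C_0) = \elb(Q_w)$ (since $S_0 \equiv Q_w$), and the stitch condition $S_{n+1} \equiv Q_w$ in the final step of $(\ddagger)$ gives $suc(C_n) = \elb(Q_w) = lbl(C_0)$. The backward-jump condition $C_0 \lessdot C_n$ follows from the fact that $C_n$ occurs strictly inside the loop body in the control flow graph of $\cC(Q_w)$.

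The delicate step, and the only real obstacle, is proving the no-repetition condition: for every $k\in (0,n]$, $C_k \not\in \{C_0, cmpl(C_0)\}$. This is where the stitch rule of $(T_5)$ does the essential work. I would argue by contradiction: if some intermediate $S_k$ had $\elb(S_k) = \elb(Q_w)$, then by injectivity of $\elb$ we would have $S_k \equiv Q_w$; but then the second clause of $(T_5)$ would have fired at step $k$, terminating trace recording before reaching $S_{n+1}$, contradicting the assumption that $\tau$ records $t$ up to length $n+1$. The $cmpl(C_0)$ case is handled analogously, observing that $C_0$ is the $\cd{skip}$ command compiled from the unfolded while, which has no genuine Boolean complement in $\cC(Q_w)$ unless a second occurrence of the same syntactic $Q_w$ appears, again ruled out by the same argument. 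Combining these verifications with the definition of $\sloop_{\GP}$ yields $hp_t = C_0 C_1 \cdots C_n \in \sloop_{\GP}(\traceC(\tau))$, and therefore $hp_t \in \alpha_{\mathit{hot}}^{\GP}(\tsem^\iota\grasse{\cC(Q_w)})$, as required.
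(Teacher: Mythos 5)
Your proposal is correct and takes essentially the same route as the paper: extend $\stateC$ to trace-recording states via $\stateC(\tuple{\rho,K_w,t,S}) = \stateC(\tuple{\rho,S})$, compile the recording sequence $(\ddagger)$ with $\traceC$ to get $lbl(C_i)=\elb(S_i)$, and rule out any intermediate re-occurrence of $C_0$ or $cmpl(C_0)$ by injectivity of $\elb$ together with the stitch rule of $(T_5)$, which forces $S_{n+1}$ to be the first re-occurrence of $Q_w$ as current program. (One side remark is slightly off --- the stitch case of $(T_5)$ produces the optimized loop rather than a $\cd{skip}$ transition --- but that case never occurs inside $(\ddagger)$, so your argument is unaffected.)
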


\begin{example}\label{trace-ex1}
Let us consider the while statement $Q_w$ of the program in Example~\ref{example-init}: 
\[
Q_w\equiv \KwSty{while}~(x\leq 20)~\KwSty{do}~(x:=x+1;~ 
(\KwSty{if}~(x\%3=0)~\KwSty{then}~x:=x+3;))
\]

\noindent
This program is already written in 
Guo and Palsberg language, so that $Q_w$ is a well formed statement in $\Stm$. 
The tracing rules in Figure~\ref{trace-rel} yield the following trace $t$ for $Q_w$:
\[
t\equiv x:=x+1;~
\KwSty{bail}~(x\%3 =0)~\KwSty{to}~(x:=x+3;~Q_w).
\]
On the other hand, the compiled program $\cC(Q_w)\in \wp(\mathbb{C})$ is as follows: 
\begin{equation*}
  \begin{aligned}
    \cC(Q_w) = \big\{ &D_0\equiv \elb_{\KwSty{while}}: \cd{skip} \ra \elb_{\KwSty{ifwhile}},\, \\
    &D_1\equiv \elb_{\KwSty{ifwhile}}: (x\leq 20) \ra \elb_{1},\, D_1^c \equiv \elb_{\KwSty{ifwhile}}: \neg (x\leq 20) \ra \elb_\epsilon,\,\\
    & D_2 \equiv \elb_1: x:=x+1 \ra \elb_{\KwSty{if}},\\
    &D_3\equiv \elb_{\KwSty{if}}: (x\% 3 =0) \ra \elb_2,\, D_3^c\equiv \elb_{\KwSty{if}}: \neg (x\% 3 =0) \ra\elb_{\KwSty{while}},\\
    & D_4\equiv \elb_2: x:= x+3 \ra \elb_{\KwSty{while}},\,D_5\equiv \elb_\epsilon : \cd{skip}\ra \L\big\},
   \end{aligned}
\end{equation*}
where labels have 
the following meaning:
\begin{align*}
\elb_{\KwSty{while}} &\ud \elb(Q_w)\\
\elb_{\KwSty{ifwhile}} &\ud \elb(\KwSty{if}~(x\leq 20)~\KwSty{then}~(x:=x+1; (\KwSty{if}~(x\%3=0)~\KwSty{then}~x:=x+3;)\: Q_w))\\
\elb_{1} &\ud \elb(x:=x+1; (\KwSty{if}~(x\%3=0)~\KwSty{then}~x:=x+3;)\: Q_w))\\
\elb_{\KwSty{if}} &\ud \elb((\KwSty{if}~(x\%3=0)~\KwSty{then}~x:=x+3;)\: Q_w))\\
\elb_2 &\ud \elb(x:=x+3; Q_w).
\end{align*}

\noindent
Hence, in correspondence with the trace $t$,  we obtain the hot path 
$hp_t = D_0D_1 D_2 D_3^c \in \alpha_{\mathit{hot}}^{\GP}( \mathbf{T}^\iota \grasse{\cC(Q_w)})$. 
In turn, this hot path $hp_t$ 
corresponds to the 2-hot path $hp_1$ consisting of the analogous sequence of commands, which has been 
selected in Example~\ref{ex-zero}. \qed
\end{example}

\subsection{GP Trace Extraction}\label{GPte-sec}
In the following, we conform to the notation used in 
Section~\ref{trace-ext-sec} for our trace extraction transform. 
Let us consider a while program $Q_w\equiv (\KwSty{while}~B~\KwSty{do}~S)K \in \Stm$ and its
compilation $P_w\ud \cC(Q_w)\in \wp(\mathbb{C})$. Observe that, by Definition~\ref{compilation} of
compilation $\cC$, a hot path  $C_0\cdots C_n \in 
\alpha_{\mathit{hot}}^{\GP}( \mathbf{T}\grasse{P_w})$ for the compiled program $P_w$ always arises in correspondence 
with some while loop $\KwSty{while}~B'~\KwSty{do}~S'$ occurring in $Q_w$ and therefore has
necessarily the following shape:  
\begin{align*}
C_{0} &\equiv \elb((\KwSty{while}~B'~\KwSty{do}~S')J): \cd{skip} 
\ra \elb((\KwSty{if}~B'~\KwSty{then}~(S'\, \KwSty{while}~B'~\KwSty{do}~S'))\, J)\\ 
C_{1} & \equiv \elb((\KwSty{if}~B'~\KwSty{then}~(S'\, (\KwSty{while}~B'~\KwSty{do}~S')))\, J):B'\ra \elb(S'\, (\KwSty{while}~B'~\KwSty{do}~S')\, J)\\ 
C_{2} & \equiv \elb(S'\, (\KwSty{while}~B'~\KwSty{do}~S')\, J):A_{2} \ra \elb(T_3)\\
&\cdots\\    
C_{n} & \equiv \elb(T_n): A_n \ra \elb((\KwSty{while}~B'~\KwSty{do}~S')J)
\end{align*}
The GP hot path extraction scheme for $Q_w$ described by the rules in Figure~\ref{trace-rel}  
can be defined in our language by the following simple transform of $P_w$. 

\begin{definition}[\textbf{GP trace extraction transform}]\label{gp-tt-def}
\rm
The \emph{GP trace extraction transform} $\extr_{hp}^{\GP}(P_w)$ of $P_w$ for the hot path $hp= C_0 C_1\cdots C_n\in \alpha_{\mathit{hot}}^{\GP}( \mathbf{T}\grasse{P_w})$ is defined as follows:

\begin{enumerate}
\item[{\rm (1)}] If for any $i\in [2,n]$, $cmpl(C_i)\not\in P_w$ then  $\extr_{hp}^{\GP}(P_w) \ud P_w$;
\item[{\rm (2)}] Otherwise:
\belowdisplayskip=-10pt
\begin{equation*}
  \begin{aligned}
    \extr_{hp}^{\GP}(P_w) \ud 
    P_w  & \cup \{\ell_i : act(C_{i}) \rightarrow \ell_{\textit{next}(i)}~|~
    i\in [0,n]\}\\
    &\cup \{\ell_i : \neg act(C_i) \rightarrow L_{\textit{next}(i)}^c ~|~ i\in [0,n],\,  cmpl(C_{i})\in P_w\}. 
   \end{aligned}
\end{equation*}
\hfill\ensuremath{\qed}
\end{enumerate}
\end{definition}
    
Clearly, $\extr_{hp}^{\GP}(P)$ remains a well-formed program. Also observe
that the case~(1) of Definition~\ref{gp-tt-def} means that the traced hot path $hp$
does not contain conditional commands (except from the entry conditional $C_1$) 
and therefore corresponds to point~(vi) 
in Section~\ref{hotpaths-sec}.

\begin{example}
\rm
Let us consider the programs $Q_w$ and $\cC(Q_w)$ of Example~\ref{trace-ex1} 
and the hot path 
$hp_t = D_0 D_1 D_2 D_3^c \in 
\alpha_{\mathit{hot}}^{\GP}( \mathbf{T}\grasse{\cC(Q_w)})$ which corresponds to  
the trace $t\equiv x:=x+1;~
\KwSty{bail}~(x\%3 =0)~\KwSty{to}~(x:=x+3;~Q_w)$ of $Q_w$. 
Here, the GP trace extraction of $hp_t$, according to Definition~\ref{gp-tt-def},
provides the following
program transform: 
\begin{equation*}
  \begin{aligned}
    \extr_{hp_t}^{\GP}(\cC(Q_w)) 
    \ud \big\{ &
    D_0\equiv \elb_{\KwSty{while}}: \cd{skip} \ra \elb_{\KwSty{ifwhile}},\,
    D_1\equiv \elb_{\KwSty{ifwhile}}: (x\leq 20) \ra \elb_{1},\\ 
    & D_1^c \equiv \elb_{\KwSty{ifwhile}}: \neg (x\leq 20) \ra \elb_\epsilon,\,
    D_2 \equiv \elb_1: x:=x+1 \ra \elb_{\KwSty{if}},\, \\ 
    & D_3 \equiv \elb_{\KwSty{if}}: (x\% 3 =0) \ra \elb_2,\, 
    D_3^c \equiv \elb_{\KwSty{if}}: \neg (x\% 3 =0) \ra\elb_{\KwSty{while}},\\
    & D_4 \equiv \elb_2: x:= x+3 \ra \elb_{\KwSty{while}},\, D_5 \equiv \elb_\epsilon : \cd{skip}\ra \L\big\} \:\cup \\
    & \!\!\!\big\{ \ell_0: \cd{skip} \ra \ell_1,\, \ell_1 : x\leq 20 \ra \ell_2,\, \ell_1: \neg(x\leq 20) \ra \elb_\epsilon,\\ 
    &\ell_2 : x:=x+1 \ra \ell_3,\, \ell_3: \neg (x\%3 = 0) \ra \ell_0,\, \ell_3 : (x\%3 =0) \ra \elb_{2} \big\}.         
    \end{aligned} 
\end{equation*}
On the other hand, the stitch rule $(T_5)$ transforms $Q_w$ into the following program $Q_{t}$:   
\[
\begin{array}{ll}
\KwSty{while}&\!\!(x\leq 20)~\KwSty{do}\\
&x:=x+1;\\ 
&\KwSty{bail}~(x\%3 =0)~\KwSty{to}~ (x:=x+3;\: Q_w)
\end{array}
\]
whose compilation yields the following program: 
\begin{equation*}
  \begin{aligned}
    \cC(Q_t) = \big\{ &
     \elb_{\KwSty{while}_t}: \cd{skip} \ra \elb_{\KwSty{ifwhile}_t},\, \elb_{\KwSty{ifwhile}_t}: (x\leq 20) \ra \elb_{1t},\, \elb_{\KwSty{ifwhile}_t}: \neg (x\leq 20) \ra \elb_\epsilon,\, \\
    &\elb_{1t}: x:=x+1 \ra \elb_{\KwSty{bail}},\, \elb_{\KwSty{bail}}: (x\% 3 =0) \ra \elb_{\KwSty{bail}_{\textit{true}}},\, \elb_{\KwSty{bail}}: \neg (x\% 3 =0) \ra\elb_{\KwSty{while}_t},\\
    &      \elb_{\KwSty{while}}: \cd{skip} \ra \elb_{\KwSty{ifwhile}},\, 
    \elb_{\KwSty{ifwhile}}: (x\leq 20) \ra \elb_{1},\,  \elb_{\KwSty{ifwhile}}: \neg (x\leq 20) \ra \elb_\epsilon,\, \\
    & \elb_{1}: x:=x+1 \ra \elb_{\KwSty{if}},\, \elb_{\KwSty{if}}: (x\% 3 =0) \ra \elb_{2},\,  \elb_{\KwSty{if}}: \neg (x\% 3 =0) \ra\elb_{\KwSty{while}},\, \\
     & \elb_{\KwSty{bail}_{\textit{true}}}: x:= x+3 \ra \elb_{\KwSty{while}},\,  \elb_\epsilon : \cd{skip}\ra \L\big\}
   \end{aligned}
\end{equation*}
with the following new labels:
\begin{align*}
\elb_{\KwSty{while}_t} &\ud \elb(\KwSty{while}~(x\leq 20)~t)\\
\elb_{\KwSty{ifwhile}_t} &\ud \elb(\KwSty{if}~(x\leq 20)~\KwSty{then}~(t \;(\KwSty{while}~(x\leq 20) ~t)))\\
\elb_{1t} &\ud \elb(t \;(\KwSty{while}~(x\leq 20) ~t))\\
\elb_{\KwSty{bail}} &\ud \elb((\KwSty{bail}~(x\%3=0)~\KwSty{to}~(x:=x+3; Q_w))(\KwSty{while}~(x\leq 20) ~t) )
\end{align*}
while observe that $\elb_{\KwSty{bail}_{\textit{true}}} \ud \elb(x:=x+3; Q_w) = \elb_2$. 
It is then immediate to check that the programs $\cC(Q_t)$ and $\extr_{hp_t}^{\GP} (\cC(Q_w))$ are equal up to the 
following label renaming of $\extr_{hp_t}^{\GP}(\cC(Q_w))$: 
\belowdisplayskip=-10pt
\begin{align*}
\{\ell_0 \mapsto  \elb_{\KwSty{while}_t}, \ell_1 \mapsto \elb_{\KwSty{ifwhile}_t}, \ell_2 \mapsto \elb_{1t}, 
\ell_3\mapsto \elb_{\KwSty{bail}} \}. \qed
\end{align*}
\end{example}

The equivalence 
of this GP trace extraction with the stitch of hot paths by \citeN{palsberg} 
goes as follows.

\begin{theorem}[\textbf{Equivalence with GP trace extraction}]\label{GP-equiv-th}
Let $t$ be a GP trace 
such that $\tuple{\rho,(\KwSty{while}~B~\KwSty{do}~S)K}
\ra_{B,T}^* \tuple{\rho', (\KwSty{while}~B~\KwSty{do}~S)K, t, (\KwSty{while}~B~\KwSty{do}~S)K}$ and 
let $hp_t \in \alpha_{\mathit{hot}}^{\GP}( \mathbf{T}^\iota \grasse{\cC((\KwSty{while}~B~\KwSty{do}~S)K)})$ be the corresponding 
GP hot path as determined by Lemma~\ref{lemma-GPhp}.
Then, $\cC((\KwSty{while}~B~\KwSty{do}~t)K) \cong \extr_{hp_t}^{\GP}(\cC((\KwSty{while}~B~\KwSty{do}~S)K))$.
\end{theorem}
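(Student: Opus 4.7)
The plan is to establish the label-renaming isomorphism between $\cC((\KwSty{while}~B~\KwSty{do}~t)K)$ and $\extr_{hp_t}^{\GP}(\cC((\KwSty{while}~B~\KwSty{do}~S)K))$ by analyzing the trace $t$ command-by-command and matching each one to the corresponding fresh $\ell_i$-labeled command produced by Definition~\ref{gp-tt-def}. First, I would unfold the trace-building sequence: by inspection of the tracing rules $(T_2)$--$(T_5)$ of Figure~\ref{trace-rel}, each step $\tuple{\rho_i, Q_w, t_i, S_i} \ra_T \tuple{\rho_{i+1}, Q_w, t_{i+1}, S_{i+1}}$ appends exactly one primitive command to $t_i$ (a $\KwSty{skip}$, an assignment, or a $\KwSty{bail}~B'~\KwSty{to}~S'$), determined syntactically by the head of $S_i$. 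Thus $t$ is a linear sequence $c_1 c_2 \cdots c_m$, one command per trace-recording step following the initial $(T_1)$ unfolding.

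Second, I would read off $\cC((\KwSty{while}~B~\KwSty{do}~t)K)$ according to Definition~\ref{compilation}. This set consists of (i) the initial $\cd{skip}$ command at the fresh label $\elb((\KwSty{while}~B~\KwSty{do}~t)K)$ going to $\elb((\KwSty{if}~B~\KwSty{then}~(t\,(\KwSty{while}~B~\KwSty{do}~t)))K)$; (ii) the two guard commands for $B$ and $\neg B$; (iii) the compilations of each prefix-suffix boundary between $c_i$ and $c_{i+1}$, which by definition of $\cC$ introduces fresh labels $\elb(c_i\cdots c_m (\KwSty{while}~B~\KwSty{do}~t) K)$; (iv) recursively, the compilations $\cC(S'K')$ for every bail-to target $S'$ arising from rule $(T_4)$ and for the final stitch. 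A key observation is that, by construction of $t$ through $(T_4)$ and the stitch rule of $(T_5)$, every such bail-to target is a tail of the original computation, so $\cC((\KwSty{while}~B~\KwSty{do}~S)K) \subseteq \cC((\KwSty{while}~B~\KwSty{do}~t)K)$; this accounts for the untouched commands that $\extr_{hp_t}^{\GP}$ simply keeps from $\cC((\KwSty{while}~B~\KwSty{do}~S)K)$ on the right-hand side.

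Third, I would compare with the commands added by $\extr_{hp_t}^{\GP}$. Using Lemma~\ref{lemma-GPhp} we have $lbl(C_i)=\elb(S_i)$ for every hot-path command $C_i$ and the successor labels $suc(C_i)$ (and, if applicable, $lbl(cmpl(C_i))$) are likewise $\elb$-images of the continuations $S_{i+1}$ and the non-taken branches. I would then define the bijective label renaming $\ell_i \;\mapsto\; \elb(c_{i}\cdots c_m\, (\KwSty{while}~B~\KwSty{do}~t)K)$, together with a matching map on the new while/if-labels introduced in (i)--(ii) above. An immediate inductive verification (on $i$) then shows that under this renaming: the two entries of clause (2) of Definition~\ref{gp-tt-def} for each $i$ match, respectively, the $\cC$-compilation of the $i$-th primitive command $c_i$ of $t$ and the complement clause arising from the $\KwSty{bail}~B'~\KwSty{to}~S'$ whose true-branch label $\elb(S')$ coincides, by $(T_4)$, with $suc(cmpl(C_i)) = L_{\textit{next}(i)}^c$ inside $\cC((\KwSty{while}~B~\KwSty{do}~S)K)$. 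The degenerate case~(1) of Definition~\ref{gp-tt-def}, where the hot path contains no conditional beyond the entry one, corresponds exactly to $t \equiv S$ and handles the base case trivially.

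The main obstacle will be the bookkeeping of labels: one has to keep track of the precise statement-suffix that $\elb$ is applied to at each junction, in particular to verify that bail-to targets introduced inside $t$ produce labels that already occur in $\cC((\KwSty{while}~B~\KwSty{do}~S)K)$ so that no spurious new commands appear on the left-hand side. Injectivity of $\elb$ is what makes the renaming well-defined; soundness of the match then follows by a straightforward structural induction on the length $m$ of $t$, with the inductive step performing exactly one of the four syntactic cases $(T_2)$--$(T_5)$.
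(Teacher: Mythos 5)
Your proposal follows essentially the same route as the paper's proof: compile both sides, use Lemma~\ref{lemma-GPhp} and the case analysis on the five possible shapes of recorded commands ($\KwSty{skip}$, assignment, the two $\KwSty{bail}$ forms, nested while) to match each fresh $\ell_i$-command of Definition~\ref{gp-tt-def} with the compilation of the corresponding suffix $c_i\cdots c_m(\KwSty{while}~B~\KwSty{do}~t)K$ under an injective relabeling, handle the bail-free situation via case~(1) with $t\equiv S$, and use the fact that bail targets contain the original loop so that $\cC((\KwSty{while}~B~\KwSty{do}~S)K)$ is preserved on both sides. The only difference is presentational: you fix the whole renaming upfront and induct on $i$, whereas the paper peels off one command at a time in two separate containments, so the argument is sound and matches the paper's.
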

\begin{proof}
Let the GP hot path $t$ be recorded by 
the following transition sequence for $\tuple{\rho,(\KwSty{while}~B~\KwSty{do}~S)K}$: 
\[
\begin{array}{ll}
& \tuple{\rho,S_{-2}\equiv (\KwSty{while}~B~\KwSty{do}~S)K} \ra_B\\
&\tuple{\rho,S_{-1}\equiv (\KwSty{if}~B~\KwSty{then}~(S\, \KwSty{while}~B~\KwSty{do}~S))\, K}\ra_T \qquad\qquad [\text{with}~\bsem\grasse{B}\rho = \textit{true}] \\  
&\tuple{\rho_0 \ud \rho, (\KwSty{while}~B~\KwSty{do}~S) K, t_0\equiv \epsilon, S_{0}\equiv S (\KwSty{while}~B~\KwSty{do}~S) K}\ra_T  \\
&\tuple{\rho_1, (\KwSty{while}~B~\KwSty{do}~S) K, t_1\equiv c_1, S_{1}}\ra_T  \\
&\: \cdots \ra_T\\
&\tuple{\rho_n, (\KwSty{while}~B~\KwSty{do}~S)K, t_n \equiv t_{n-1} c_n, S_n}\ra_T \\
&\tuple{\rho_{n+1}\ud\rho', (\KwSty{while}~B~\KwSty{do}~S)K, t \equiv t_n c_{n+1}, S_{n+1}\equiv (\KwSty{while}~B~\KwSty{do}~S)K}
\end{array}
\]
where $n\geq 0$, so that 
the body $S$ is assumed to be nonempty, i.e., $S\neq \epsilon$ (there is no loss of generality since 
for $S=\epsilon$  the result trivially holds). 
Hence, $t=c_1...c_n c_{n+1}$, for some commands $c_i\in\Cmd$, and the corresponding hot path $hp_t \equiv H_{-2} H_{-1} H_0 ... H_n$ 
as determined by Lemma~\ref{lemma-GPhp} is as follows:
\begin{align*}
H_{-2} &\ud \elb(S_{-2}): \cd{skip} 
\ra \elb(S_{-1})\\ 
H_{-1} & \ud \elb(S_{-1}):B\ra \elb(S_0)\qquad\qquad [\text{because}~\bsem\grasse{B}\rho = \textit{true}]\\ 
H_{0} & \ud \elb(S_0):A_{0} \ra \elb(S_1)\\
H_{1} & \ud \elb(S_1):A_{1} \ra \elb(S_2)\\
&\cdots\\    
H_{n} & \ud \elb(S_n): A_n
\ra \elb(S_{-2})
\end{align*}
where the action $A_i$, with $i\in [0,n]$, and the command $c_{i+1}$ depend 
on the first command of the statement $S_i$ as follows (this range of cases will be later referred to as $(*)$):
\begin{itemize}
\item[{\rm (1)}] $S_i \equiv \KwSty{skip}; J \quad \Ra \quad A_i \equiv \cd{skip}\;\:\&\;\: c_{i+1}\equiv \KwSty{skip}; \;\:\&\;\:
S_{i+1}\equiv J$
\item[{\rm (2)}]
$S_i \equiv x:=E; J \quad \Ra \quad A_i \equiv x:=E\;\:\&\;\: c_{i+1}\equiv x:=E; \;\:\&\;\:
S_{i+1}\equiv J$ 

\item[{\rm (3)}]
$S_i \equiv (\KwSty{if}~B'~\KwSty{then}~S')J \;\:\&\;\: \bsem\grasse{B'}\rho_i = \textit{true} 
\quad \Ra$\\ 
\hspace*{38ex}
$A_i \equiv B'\;\:\&\;\: 
c_{i+1}\equiv \KwSty{bail}~\neg B'~\KwSty{to}~J  \;\:\&\;\: S_{i+1}\equiv S'J$ 

\item[{\rm (4)}]
$S_i \equiv (\KwSty{if}~B'~\KwSty{then}~S')J \;\:\&\;\: \bsem\grasse{B'}\rho_i = \textit{false} 
\quad \Ra$\\
\hspace*{37ex}
$A_i \equiv \neg B' \;\:\&\;\: 
c_{i+1}\equiv \KwSty{bail}~ B'~\KwSty{to}~(S'J) \;\:\&\;\:
S_{i+1}\equiv J$ 

\item[{\rm (5)}]
$S_i \equiv (\KwSty{while}~B'~\KwSty{do}~S')J \;\:\&\;\: (\KwSty{while}~B'~\KwSty{do}~S')J\neq (\KwSty{while}~B~\KwSty{do}~S)K
\quad \Ra$\\  
\hspace*{12ex}
$A_i \equiv \cd{skip}\;\:\&\;\: 
c_{i+1}\equiv \KwSty{skip}; \;\:\&\;\:
S_{i+1}\equiv (\KwSty{if}~B'~\KwSty{then}~(S' (\KwSty{while}~B'~\KwSty{do}~S')))J$ 
\end{itemize}
\noindent 
If, for any $i\in [0,n]$, $H_i$ is not a conditional command then, by case~(1) of Definition~\ref{gp-tt-def}, 
we have that $\extr_{hp_t}^{\GP}(\cC((\KwSty{while}~B~\KwSty{do}~S)K)) = \cC((\KwSty{while}~B~\KwSty{do}~S)K)$. 
Also, for any $i\in [0,n]$, $A_i$ is either a skip or an assignment, so that $c_{i+1} = A_i$, and, in turn, 
$t=S$. Hence, $(\KwSty{while}~B~\KwSty{do}~t)K \equiv (\KwSty{while}~B~\KwSty{do}~S)K$, so that the thesis  
follows trivially. 

\noindent
Thus, we assume that $H_k$, with $k\in [0,n]$, is the first conditional command occuring in the sequence $H_0...H_n$. 
Case~(2) of Definition~\ref{gp-tt-def} applies, so that:
\begin{align*}
\extr_{hp_t}^{\GP}(\cC((\KwSty{while}~B~\KwSty{do}~S)K)) = &\; \cC((\KwSty{while}~B~\KwSty{do}~S)K)\, \cup \\
&\{\ell_{-2}: \cd{skip} \ra \ell_{-1},\, \ell_{-1}: B \ra \ell_0,\, \ell_{-1}: \neg B \ra \elb(K),\,\\
&\; \ell_{0} : A_0 \ra \ell_{1}, ..., \ell_{n} : A_n \ra \ell_{-2}\} \, \cup \\
& \{ \ell_i : \neg A_i \ra \elb(S_{\textit{next}(i)})^c~|~ i\in [0,n],\, A_i \in \BExp\}.
\end{align*} 
Moreover, we have that:
\begin{align*}
    \cC((\KwSty{while}~B~\KwSty{do}~S)K) &=\\ 
    &\mkern-40mu \big\{
     \elb ((\KwSty{while}~B~\KwSty{do}~S)K) : \cd{skip} \ra \elb ((\KwSty{if}~B~\KwSty{then}~(S\, (\KwSty{while}~B~\KwSty{do}~S)))\, K),\\
     &\mkern-30mu \elb ((\KwSty{if}~B~\KwSty{then}~(S\, (\KwSty{while}~B~\KwSty{do}~S)))\, K): B \ra \elb(S(\KwSty{while}~B~\KwSty{do}~S)K),\\
      &\mkern-30mu \elb ((\KwSty{if}~B~\KwSty{then}~(S\, (\KwSty{while}~B~\KwSty{do}~S)))\, K): \neg B \ra \elb(K)\big\}\\ 
      &\mkern-55mu \cup \cC(S (\KwSty{while}~B~\KwSty{do}~S) K) 
      \cup \cC(K)
\end{align*}
\begin{align*}
    \cC((\KwSty{while}~B~\KwSty{do}~t)K) & =\\ 
    & \mkern-40mu \big\{
     \elb ((\KwSty{while}~B~\KwSty{do}~t)K) : \cd{skip} \ra \elb ((\KwSty{if}~B~\KwSty{then}~(t\, (\KwSty{while}~B~\KwSty{do}~t)))\, K),\\
     & \mkern-30mu\elb ((\KwSty{if}~B~\KwSty{then}~(t\, (\KwSty{while}~B~\KwSty{do}~t)))\, K): B \ra \elb(t(\KwSty{while}~B~\KwSty{do}~t)K),\\
      &\mkern-30mu\elb ((\KwSty{if}~B~\KwSty{then}~(t\, (\KwSty{while}~B~\KwSty{do}~t)))\, K): \neg B \ra \elb(K)\big\}\\ 
      &\mkern-55mu \cup \cC(t (\KwSty{while}~B~\KwSty{do}~t) K) 
      \cup \cC(K)
\end{align*}
We first show that
$\cC((\KwSty{while}~B~\KwSty{do}~t)K) \subseteq_{_{\!/\cong}} \extr_{hp_t}^{\GP}(\cC((\KwSty{while}~B~\KwSty{do}~S)K))$.
We consider the following label renaming: 
\begin{align*}
 \elb ((\KwSty{while}~B~\KwSty{do}~t)K) &\mapsto \ell_{-2}\\
  \elb ((\KwSty{if}~B~\KwSty{then}~(t\, (\KwSty{while}~B~\KwSty{do}~t)))\, K) &\mapsto \ell_{-1} \\
   \elb(t(\KwSty{while}~B~\KwSty{do}~t)K)  &\mapsto \ell_{0} 
\end{align*} 
so that it remains to 
show that $\cC(t (\KwSty{while}~B~\KwSty{do}~t) K) \subseteq_{_{\!/\cong}} \extr_{hp_t}^{\GP}(\cC((\KwSty{while}~B~\KwSty{do}~S)K))$. 
Since $t=c_1 t'$, with $t' = c_2 ...c_{n+1}$, let us analyze the five different cases for the first command $c_1$ of $t$. 
\begin{itemize}
\item[{\rm (i)}] $c_{1}\equiv x:=E;$. Thus, $S_0 \equiv x:=E; T (\KwSty{while}~B~\KwSty{do}~S)K$, $S_1 \equiv T (\KwSty{while}~B~\KwSty{do}~S)K$, 
$A_0 \equiv x:=E$. In this case, 
\begin{align*}
\cC(t (\KwSty{while}~B~\KwSty{do}~t) K) = \:
&\{\elb(t (\KwSty{while}~B~\KwSty{do}~t) K): x:=E \ra \elb(t' (\KwSty{while}~B~\KwSty{do}~t) K)\}\\ 
& \cup  \cC(t' (\KwSty{while}~B~\KwSty{do}~t) K).
\end{align*}
Hence, it is enough to consider the relabeling 
$\elb(t' (\KwSty{while}~B~\KwSty{do}~t) K) \mapsto \ell_{1}$ and to show that 
$\cC(t' (\KwSty{while}~B~\KwSty{do}~t) K) \subseteq_{_{\!/\cong}} \extr_{hp_t}^{\GP}(\cC((\KwSty{while}~B~\KwSty{do}~S)K))$. 

\item[{\rm (ii)}]  $c_{1}\equiv \KwSty{skip};$ and $S_0 \equiv \KwSty{skip}; T (\KwSty{while}~B~\KwSty{do}~S)K$. Thus,  
$S_1 \equiv T (\KwSty{while}~B~\KwSty{do}~S)K$,
so that
$A_0 \equiv \cd{skip}$. This case is analogous to the previous case~(i). 

\item[{\rm (iii)}]  $c_{1}\equiv \KwSty{skip};$ and $S_0 \equiv (\KwSty{while}~B'~\KwSty{do}~S') T (\KwSty{while}~B~\KwSty{do}~S)K$.
Thus, $S_1 \equiv (\KwSty{if}~B'~\KwSty{then}~(S' (\KwSty{while}~B'~\KwSty{do}~S'))) T (\KwSty{while}~B~\KwSty{do}~S)K$ and  
$A_0 \equiv \cd{skip}$. Here, we have that
\begin{align*}
\cC(t (\KwSty{while}~B~\KwSty{do}~t) K) = \:
&\{\elb(t (\KwSty{while}~B~\KwSty{do}~t) K):\cd{skip} \ra \elb(t' (\KwSty{while}~B~\KwSty{do}~t) K)\}\\ 
& \cup  \cC(t' (\KwSty{while}~B~\KwSty{do}~t) K).
\end{align*}
 Again, it is enough to consider the relabeling 
$\elb(t' (\KwSty{while}~B~\KwSty{do}~t) K) \mapsto \ell_{1}$ and to show that 
$\cC(t' (\KwSty{while}~B~\KwSty{do}~t) K) \subseteq_{_{\!/\cong}} \extr_{hp_t}^{\GP}(\cC((\KwSty{while}~B~\KwSty{do}~S)K))$. 

\item[{\rm (iv)}]  $c_{1}\equiv \KwSty{bail}~\neg B'~\KwSty{to}~(T (\KwSty{while}~B~\KwSty{do}~S)K)$, with 
$S_0 \equiv (\KwSty{if}~B'~\KwSty{then}~S') T (\KwSty{while}~B~\KwSty{do}~S)K$ and 
$\bsem\grasse{B'}\rho_0 = \textit{true}$, so that $S_1 \equiv S' T (\KwSty{while}~B~\KwSty{do}~S)K$ and $A_0 \equiv B'$.  
In this case:
\begin{align*}
\cC(t (\KwSty{while}~B~\KwSty{do}~t) K) = & 
\; \{\elb(t (\KwSty{while}~B~\KwSty{do}~t) K):\neg B'  \ra \elb(T (\KwSty{while}~B~\KwSty{do}~S)K), \\ 
&\;\;\; \elb(t (\KwSty{while}~B~\KwSty{do}~t) K): B'  \ra \elb(t' (\KwSty{while}~B~\KwSty{do}~t) K)
\}\\ 
&\mkern-10mu \cup \cC(T (\KwSty{while}~B~\KwSty{do}~S)K) \cup  \cC(t' (\KwSty{while}~B~\KwSty{do}~t) K),\\[5pt]  
\cC(S (\KwSty{while}~B~\KwSty{do}~S) K) = & 
\; \{\elb(S (\KwSty{while}~B~\KwSty{do}~S) K): B'  \ra \elb(S' T (\KwSty{while}~B~\KwSty{do}~S)K),\, \\ 
&\;\;\; \elb(S (\KwSty{while}~B~\KwSty{do}~S) K): \neg B'  \ra \elb(T (\KwSty{while}~B~\KwSty{do}~S) K)
\}\\ 
& \mkern-10mu \cup \cC(S' T (\KwSty{while}~B~\KwSty{do}~S)K) \cup 
\cC(T (\KwSty{while}~B~\KwSty{do}~S) K).
\end{align*} 
Hence, since $\elb(t(\KwSty{while}~B~\KwSty{do}~t)K)  \mapsto \ell_{0}$ and $A_0\equiv B'$, it is enough to consider 
the relabeling  $\elb(t'(\KwSty{while}~B~\KwSty{do}~t)K)  \mapsto \ell_{1}$ and to show that 
$\cC(t' (\KwSty{while}~B~\KwSty{do}~t) K) \subseteq_{_{\!/\cong}} \extr_{hp_t}^{\GP}(\cC((\KwSty{while}~B~\KwSty{do}~S)K))$. 

\item[{\rm (v)}] 
$c_{1}\equiv \KwSty{bail}~ B'~\KwSty{to}~(S' T (\KwSty{while}~B~\KwSty{do}~S)K)$, with 
$S_0 \equiv (\KwSty{if}~B'~\KwSty{then}~S') T (\KwSty{while}~B~\KwSty{do}~S)K$ and 
$\bsem\grasse{B'}\rho_0 = \textit{false}$, so that $S_1 \equiv T (\KwSty{while}~B~\KwSty{do}~S)K$ and 
$A_0 \equiv \neg B'$.  
In this case:
\begin{align*}
\cC(t (\KwSty{while}~B~\KwSty{do}~t) K) = & 
\; \{\elb(t (\KwSty{while}~B~\KwSty{do}~t) K): B'  \ra \elb(S' T (\KwSty{while}~B~\KwSty{do}~S)K), \\ 
&\;\;\; \elb(t (\KwSty{while}~B~\KwSty{do}~t) K): \neg B'  \ra \elb(t' (\KwSty{while}~B~\KwSty{do}~t) K)
\}\\ 
&\mkern-10mu \cup \cC(S' T (\KwSty{while}~B~\KwSty{do}~S)K) \cup  \cC(t' (\KwSty{while}~B~\KwSty{do}~t) K),
\end{align*}  
while $\cC(S (\KwSty{while}~B~\KwSty{do}~S) K)$ is the same as in the previous point~(iv). 
Hence, since $\elb(t(\KwSty{while}~B~\KwSty{do}~t)K)  \mapsto \ell_{0}$ and $A_0\equiv \neg B'$, it is enough to consider 
the relabeling  $\elb(t'(\KwSty{while}~B~\KwSty{do}~t)K)  \mapsto \ell_{1}$ and to show that 
$\cC(t' (\KwSty{while}~B~\KwSty{do}~t) K) \subseteq_{_{\!/\cong}} \extr_{hp_t}^{\GP}(\cC((\KwSty{while}~B~\KwSty{do}~S)K))$. 
\end{itemize}

\noindent
Thus, in order to prove this containment, it remains to show that 
$\cC(t' (\KwSty{while}~B~\KwSty{do}~t) K) \subseteq_{_{\!/\cong}} \extr_{hp_t}^{\GP}(\cC((\KwSty{while}~B~\KwSty{do}~S)K))$. 
If $t'=\epsilon$ then the containment boils down to 
$\cC((\KwSty{while}~B~\KwSty{do}~t) K) \subseteq_{_{\!/\cong}} \extr_{hp_t}^{\GP}(\cC((\KwSty{while}~B~\KwSty{do}~S)K))$ which is
therefore proved. Otherwise, $t' = c_2 t''$, so that the containment can be inductively proved by using the same five cases (i)-(v) above. 

\medskip
Let us now show the reverse containment, that is, $\extr_{hp_t}^{\GP}(\cC((\KwSty{while}~B~\KwSty{do}~S)K)) \subseteq_{_{\!/\cong}}
\cC((\KwSty{while}~B~\KwSty{do}~t)K)$.
For the trace $t=c_1 c_2 ...c_{n+1}$, we know by $(*)$ that each command $c_i$ either is in $\{\KwSty{skip};,\: x:=E;\}$ 
or is one of the two following bail commands (cf.~cases~(3) and (4) in $(*)$):  
$$\KwSty{bail}~\neg B'~\KwSty{to}~(T (\KwSty{while}~B~\KwSty{do}~S)K),\quad\qquad
\KwSty{bail}~ B'~\KwSty{to}~(S' T (\KwSty{while}~B~\KwSty{do}~S)K).$$
Furthermore, at least a bail command occurs in $t$ because  there exists at least a conditional command $H_k$ in $hp_t$. Let $c_k$, with 
$k\in [1,n+1]$, be the first bail command occurring in $t$. Thus, since the sequence $c_1...c_{k-1}$ consists of skip and assignment commands only, 
we have that $\cC(t (\KwSty{while}~B~\KwSty{do}~t) K) \supseteq \cC(c_k...c_{n+1} (\KwSty{while}~B~\KwSty{do}~t) K)$. 
Hence, either $\cC(c_k...c_{n+1} (\KwSty{while}~B~\KwSty{do}~t) K) \supseteq \cC(T (\KwSty{while}~B~\KwSty{do}~S)K)$ or 
$\cC(c_k...c_{n+1} (\KwSty{while}~B~\KwSty{do}~t) K) \supseteq \cC(T (\KwSty{while}~B~\KwSty{do}~S)K)$. In both cases, 
we obtain that $\cC(c_k...c_{n+1} (\KwSty{while}~B~\KwSty{do}~t) K) \supseteq \cC((\KwSty{while}~B~\KwSty{do}~S)K)$, so 
that $\cC((\KwSty{while}~B~\KwSty{do}~S)K)\subseteq \cC(t (\KwSty{while}~B~\KwSty{do}~t) K) \subseteq \cC((\KwSty{while}~B~\KwSty{do}~t) K)$. 
Thus, it remains to show that 
\begin{align*}
&\{\ell_{-2}: \cd{skip} \ra \ell_{-1},\, \ell_{-1}: B \ra \ell_0,\, \ell_{-1}: \neg B \ra \elb(K)\}\:\cup \\
&\{\ell_{i} : A_i \ra \ell_{\textit{next}(i)}~|~ i\in [0,n]\} \cup 
\{ \ell_i : \neg A_i \ra \elb(S_{\textit{next}(i)})^c~|~ i\in [0,n],\, A_i \in \BExp\}
\end{align*}
 is contained in 
$\cC(t (\KwSty{while}~B~\KwSty{do}~t) K)$. 
We consider the following label renaming: 
\begin{align*}
 \ell_{-2} &\mapsto  \elb ((\KwSty{while}~B~\KwSty{do}~t)K) \\
  \ell_{-1} &\mapsto  \elb ((\KwSty{if}~B~\KwSty{then}~(t\, (\KwSty{while}~B~\KwSty{do}~t)))\, K) \\
  \ell_{0}  &\mapsto  \elb(t(\KwSty{while}~B~\KwSty{do}~t)K)
\end{align*} 
so that it remains to check that for any $i\in [0,n]$, the commands $\ell_{i} : A_i \ra \ell_{\textit{next}(i)}$ and 
$\ell_i : \neg A_i \ra \elb(S_{\textit{next}(i)})^c$, when $A_i \in \BExp$, are in $\cC(t (\KwSty{while}~B~\KwSty{do}~t) K)$. 
We analyze the possible five cases listed in $(*)$ for the action $A_0$: 
\begin{itemize}
\item[{\rm (i)}] $A_0\equiv \cd{skip}$ because
$S_0\equiv \KwSty{skip}; T (\KwSty{while}~B~\KwSty{do}~S)K$. Here, $t= \KwSty{skip};t'$. 
Hence, $\elb(t (\KwSty{while}~B~\KwSty{do}~t) K): \cd{skip} \ra \elb(t' (\KwSty{while}~B~\KwSty{do}~t) K) \in 
\cC(t (\KwSty{while}~B~\KwSty{do}~t) K)$ and 
it is enough to use the relabeling $\ell_1 \mapsto \elb(t' (\KwSty{while}~B~\KwSty{do}~t)K)$. 

\item[{\rm (ii)}] $A_0 \equiv x:=e$ because
$S_0\equiv x:=E;\, T (\KwSty{while}~B~\KwSty{do}~S)K$, so that  $t\equiv x:=E;\,t'$. Analogous to case~(i).  

\item[{\rm (iii)}] $A_0 \equiv \cd{skip}$ because
$S_0 \equiv (\KwSty{while}~B'~\KwSty{do}~S') T (\KwSty{while}~B~\KwSty{do}~S)K$. Here, $t=\KwSty{skip};t'$. Here, again, 
$\elb(t (\KwSty{while}~B~\KwSty{do}~t) K):\cd{skip} \ra \elb(t' (\KwSty{while}~B~\KwSty{do}~t) K) \in 
\cC(t (\KwSty{while}~B~\KwSty{do}~t) K)$, so that 
it is enough to use the relabeling $\ell_1 \mapsto \elb(t' (\KwSty{while}~B~\KwSty{do}~t)K)$. 

\item[{\rm (iv)}] $A_0 \equiv B'$ because 
$S_0 \equiv (\KwSty{if}~B'~\KwSty{then}~S') T (\KwSty{while}~B~\KwSty{do}~S)K$ and $\bsem\grasse{B'}\rho_0 = \textit{true}$. 
Thus, $t= (\KwSty{bail}~\neg B'~\KwSty{to}~(T (\KwSty{while}~B~\KwSty{do}~S)K))t'$ and 
$S_1 \equiv   T (\KwSty{while}~B~\KwSty{do}~S)K$.  Note that $\elb(S_1)^c = \elb(T (\KwSty{while}~B~\KwSty{do}~S)K)$. 
Hence, 
\begin{align*}
&\elb(t (\KwSty{while}~B~\KwSty{do}~t) K):\neg B'  \ra \elb(T (\KwSty{while}~B~\KwSty{do}~S)K)\in 
\cC(t (\KwSty{while}~B~\KwSty{do}~t) K), \\ 
& \elb(t (\KwSty{while}~B~\KwSty{do}~t) K): B'  \ra \elb(t' (\KwSty{while}~B~\KwSty{do}~t) K) \in 
\cC(t (\KwSty{while}~B~\KwSty{do}~t) K).
\end{align*}
Once again, the relabeling $\ell_1 \mapsto \elb(t' (\KwSty{while}~B~\KwSty{do}~t)K)$ 
allows us to obtain that $\ell_{0} : B' \ra \ell_{1}$ and 
$\ell_0 : \neg B' \ra \elb(T (\KwSty{while}~B~\KwSty{do}~S)K)$ are in $\cC(t (\KwSty{while}~B~\KwSty{do}~t) K)$.

\item[{\rm (v)}] $A_0 \equiv \neg B'$ because
$S_0 \equiv (\KwSty{if}~B'~\KwSty{then}~S') T (\KwSty{while}~B~\KwSty{do}~S)K$ and 
$\bsem\grasse{B'}\rho_0 = \textit{false}$. Here, 
$t= (\KwSty{bail}~ B'~\KwSty{to}~(S' T (\KwSty{while}~B~\KwSty{do}~S)K))t'$ and 
$S_1 \equiv  T (\KwSty{while}~B~\KwSty{do}~S)K$. Note that $\elb(S_1)^c = \elb(S' T (\KwSty{while}~B~\KwSty{do}~S)K)$.
Hence, 
\begin{align*}
&\elb(t (\KwSty{while}~B~\KwSty{do}~t) K): B'  \ra \elb(S' T (\KwSty{while}~B~\KwSty{do}~S)K) \in 
\cC(t (\KwSty{while}~B~\KwSty{do}~t) K),\\
&\elb(t (\KwSty{while}~B~\KwSty{do}~t) K): \neg B'  \ra \elb(t' (\KwSty{while}~B~\KwSty{do}~t) K) \in 
\cC(t (\KwSty{while}~B~\KwSty{do}~t) K).
\end{align*}
Thus, through
the relabeling $\ell_1 \mapsto \elb(t' (\KwSty{while}~B~\KwSty{do}~t)K)$ we obtain that 
$\ell_{0} : \neg B' \ra \ell_{1}$ and 
$\ell_0 : B' \ra \elb(S' T (\KwSty{while}~B~\KwSty{do}~S)K)$ are in $\cC(t (\KwSty{while}~B~\KwSty{do}~t) K)$.

\end{itemize}

\noindent 
This case analysis (i)-(v) 
for the action $A_0$ can be iterated for all the other actions $A_i$, with $i\in [1,n]$, 
and this allows us to close the proof.
\end{proof}
 
Finally, we can also state the correctness of the GP trace extraction transform for the
store changes abstraction as follows. 

\begin{theorem}[\textbf{Correctness of GP trace extraction}]\label{GP-corr-th}
For any $P\in \Program$, $hp= C_0\cdots C_n
\in \alpha_{\mathit{hot}}^{\GP}( \mathbf{T}\grasse{P})$, we have that
$\alpha_{sc}(\mathbf{T}\grasse{\extr_{hp}^{\GP}(P)})=
\alpha_{sc}(\mathbf{T}\grasse{P})$.
\end{theorem}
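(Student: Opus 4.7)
The plan is to mirror the argument used for Theorem~\ref{corr-th}, with two substantial simplifications that arise from the fact that GP trace extraction adds no guarded commands and does not delete $C_0$ or $cmpl(C_0)$. In particular, case~(1) of Definition~\ref{gp-tt-def} yields $\extr_{hp}^{\GP}(P) = P$, so the theorem holds trivially; I would therefore focus on case~(2) where conditionals occur inside the hot path.

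First I would establish the easy inclusion $\alpha_{sc}(\tsem\grasse{P}) \subseteq \alpha_{sc}(\tsem\grasse{\extr_{hp}^{\GP}(P)})$ by inspection of Definition~\ref{gp-tt-def}: $P \subseteq \extr_{hp}^{\GP}(P)$, and the fresh labels $\ell_0,\ldots,\ell_n$ never appear as successors in any command of $P$, so $\ssem\grasse{\extr_{hp}^{\GP}(P)}$ and $\ssem\grasse{P}$ coincide on states in $\State_P$. Hence every trace of $P$ is already a trace of $\extr_{hp}^{\GP}(P)$ with identical store-change sequence, which takes the place of the $\trf$/Lemma~\ref{lem-tr} step in the original proof.

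For the reverse inclusion I would introduce a ``decompile'' function $\rtr^{\GP}:\Trace_{\extr_{hp}^{\GP}(P)} \ra \Trace_P$ that acts pointwise on states,
\begin{equation*}
\rtr^{\GP}(s) \ud
\begin{cases}
\tuple{\rho, C_i} & \text{if } s = \tuple{\rho, \ell_i : act(C_i) \ra \ell_{\textit{next}(i)}}, \\
\tuple{\rho, cmpl(C_i)} & \text{if } s = \tuple{\rho, \ell_i : \neg act(C_i) \ra L^c_{\textit{next}(i)}}, \\
s & \text{otherwise, i.e.\ } s\in\State_P,
\end{cases}
\end{equation*}
and then extend $\rtr^{\GP}$ state-by-state to traces. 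This function is the direct analogue of $\rtr$ from Fig.~\ref{rtr-def-figure}, but considerably simpler: there are no guard states to elide. The key verification is that $\rtr^{\GP}$ is well-defined, for which it suffices to check a single transition $s_1 \ra s_2$ in $\extr_{hp}^{\GP}(P)$ is mapped to a legal transition $\rtr^{\GP}(s_1) \ra \rtr^{\GP}(s_2)$ in $P$. This reduces to the observation that $act(\ell_i : act(C_i) \ra \ell_{\textit{next}(i)}) = act(C_i)$ and that the label $\ell_{\textit{next}(i)}$ of the next state decompiles to either $C_{\textit{next}(i)}$ or $cmpl(C_{\textit{next}(i)})$, both of which have label $L_{\textit{next}(i)} = suc(C_i)$ in $P$; the complement case is symmetric and exits directly to $L^c_{\textit{next}(i)} \in \mathit{labels}(P)$. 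Since $\rtr^{\GP}$ is the identity on stores, $\alpha_{sc}\circ \rtr^{\GP} = \alpha_{sc}$, yielding $\alpha_{sc}(\tsem\grasse{\extr_{hp}^{\GP}(P)}) \subseteq \alpha_{sc}(\tsem\grasse{P})$.

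The main obstacle is nothing deeper than a careful case analysis of the transitions through the stitched cycle; in particular, one must remember that every command of $P$ reachable as a successor of a stitched complement command is precisely the destination label $L^c_{\textit{next}(i)}$ of $cmpl(C_i)$ in $P$, so the decompiled step is always a genuine transition of $P$. Combining the two inclusions gives the required equality $\alpha_{sc}(\tsem\grasse{\extr_{hp}^{\GP}(P)}) = \alpha_{sc}(\tsem\grasse{P})$.
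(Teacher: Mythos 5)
Your proposal is correct and follows essentially the route the paper intends: it omits the proof precisely because it is the adaptation of the $\trf$/$\rtr$ technique of Theorem~\ref{corr-th}, which is what you carry out, rightly observing that the forward direction degenerates to the inclusion $\tsem\grasse{P}\subseteq\tsem\grasse{\extr_{hp}^{\GP}(P)}$ (since $P\subseteq\extr_{hp}^{\GP}(P)$ and the stitched labels are fresh) and that the backward direction needs only a pointwise, guard-free decompile map. Your case analysis for the decompile map (using $suc(C_i)=lbl(C_{\textit{next}(i)})$ from $\sloop_{\GP}$ and the convention $suc(cmpl(C_i))=L^c_{\textit{next}(i)}$) is exactly the verification required, so nothing essential is missing.
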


The proof of Theorem~\ref{GP-corr-th} is omitted, since it is a conceptually straightforward 
adaptation of the proof technique for the analogous Theorem~\ref{corr-th} on the correctness of trace extraction. 
Let us observe that since $\alpha_{sc}$ is a stronger abstraction than $\alpha^{\rho}_{sc}$ and, by 
Theorem~\ref{bis-abs-th}, we know that $\alpha^{\rho}_{sc}$ characterizes bisimilarity, we obtain
the so-called Stitch lemma in \cite[Lemma~3.6]{palsberg}
as a straight consequence of Theorem~\ref{GP-corr-th}: $\alpha^{\rho}_{sc}(\mathbf{T}\grasse{\extr_{hp}^{\GP}(P)})=
\alpha^{\rho}_{sc}(\mathbf{T}\grasse{P})$.

\section{Conclusion and Further Work}\label{concl-sec}
This article put forward a formal model of tracing JIT compilation which allows: 
(1)~an easy definition of program hot paths---that is, most frequently executed
program traces; (2)~to prove the 
correctness of a hot path extraction transform of programs; (3)~to prove 
the correctness of dynamic optimizations confined to  hot paths, such as dynamic type specialization along a hot path.  
Our approach is based on two main ideas: the use of a standard trace semantics for 
modeling the behavior of programs and 
the use of abstract interpretation for defining the notion of hot path as an abstraction 
of the trace semantics and for proving the correctness of hot path extraction and optimization. 
We have shown that this framework is more flexible than \citeN{palsberg} model of tracing JIT compilation, 
which relies on a notion of correctness based on operational program bisimulations, and 
allows to overcome some limitations of \cite{palsberg} on selection and annotation of hot paths and 
on the correctness of optimizations such as dead store elimination. 
We expect that
most optimizations employed by
tracing JIT compilers  can be 
formalized
and proved correct using the proof methodology of our framework. 

We see a number of interesting avenues for
further work on this topic.  As a significant example of  optimization implemented by 
a practical tracing compiler, it would be worth to cast in our model the allocation
removal optimization for Python described by~\citeN{bolz2011} in order
to formally prove its correctness. Then, we think that our framework could be adapted
in order to provide a model of whole-method just-in-time compilation, as used, e.g.,
by IonMonkey \cite{ionmonkey}, the current JIT compilation scheme 
in the Firefox JavaScript engine. Finally, the main ideas of our model could be useful to
study and relate the foundational differences between traditional
static vs dynamic tracing compilation. 

\begin{acks}
We are grateful to the anonymous referees for their helpful comments.
\end{acks}

\bibliographystyle{ACM-Reference-Format-Journals}

\end{document}